\documentclass[a4paper]{scrartcl}

\usepackage[utf8]{inputenc}
\usepackage{amsthm}%
\usepackage{amssymb}%
\usepackage{enumerate}%
\usepackage{mathtools}%

\newcommand{\cupdot}{\mathbin{\mathaccent\cdot\cup}}

\usepackage{hyperref}
\usepackage{cleveref}

\usepackage{tikz}
\usetikzlibrary{hobby}%
\usetikzlibrary{calc}%
\usepackage{tikz-cd}%
\usepackage{bm}%
\usetikzlibrary{shapes}

\newcommand{\FPC}{\textsf{\upshape FPC}}
\newcommand{\CPT}{\textsf{\upshape CPT}}
\newcommand{\BGS}{\textsf{\upshape BGS}}
\newcommand{\WL}{\textsf{\upshape WL}}
\newcommand{\DWL}{\textsf{\upshape DeepWL}}

\newcommand{\FOH}{\textsf{\upshape FO+H}}

\newcommand{\IL}{\textsf{\upshape IL}}

\newcommand{\dom}{\operatorname{dom}}
\newcommand{\codom}{\operatorname{codom}}
\newcommand{\supp}{\operatorname{supp}}

\newcommand{\lef}{{\operatorname{left}}}
\newcommand{\rig}{{\operatorname{right}}}
\newcommand{\lt}{{\operatorname{le}}}
\newcommand{\rt}{{\operatorname{ri}}}
\newcommand{\pa}{{\operatorname{pa}}}
\newcommand{\norm}{{\operatorname{norm}}}

\newcommand{\init}{{\operatorname{init}}}
\newcommand{\step}{{\operatorname{step}}}
\newcommand{\halt}{{\operatorname{halt}}}
\newcommand{\out}{{\operatorname{out}}}

\newcommand{\scc}{{\operatorname{scc}}}
\newcommand{\sccs}{{\operatorname{SCC}}}
\newcommand{\pair}{{\operatorname{pair}}}
\newcommand{\pai}{{\texttt{pair}}}
\newcommand{\diag}{{\operatorname{diag}}}

\newcommand{\work}{{\operatorname{wk}}}
\newcommand{\query}{{\operatorname{ia}}}
\newcommand{\conf}{{\operatorname{cc}}}

\newcommand{\reg}{\textsf{\upshape C}}
\newcommand{\tap}{\textsf{\upshape T}}
\newcommand{\addcom}{\texttt{contract}}
\newcommand{\addunion}{\texttt{create}}
\newcommand{\addpair}{\texttt{addPair}}
\newcommand{\forget}{\texttt{forget}}
\newcommand{\enc}{\texttt{enc}}

\newcommand{\across}{{\operatorname{cross}}}
\newcommand{\plain}{{\operatorname{plain}}}

\newcommand{\image}{{\operatorname{image}}}

\let\downarrowOld\downarrow
\renewcommand{\downarrow}{\mathord{\downarrowOld}}
\let\uparrowOld\uparrow
\renewcommand{\uparrow}{\mathord{\uparrowOld}}

\renewcommand{\hat}{\widehat}
\renewcommand{\tilde}{\widetilde}
\renewcommand{\hat}{\widehat}

\theoremstyle{plain}
\newtheorem{theo}{Theorem}
\newtheorem*{theo*}{Theorem}
\crefname{theo}{Theorem}{Theorems}
\newtheorem{lem}[theo]{Lemma}
\crefname{lem}{Lemma}{Lemmata}
\newtheorem{cor}[theo]{Corollary}
\newtheorem*{cor*}{Corollary}
\crefname{cor}{Corollary}{Corollarys}

\theoremstyle{definition}

\newtheorem{exa}[theo]{Example}
\crefname{exa}{Example}{Examples}

\crefname{prob}{Problem}{Problems}

\crefname{open}{Open Problem}{Open Problems}

\theoremstyle{remark}
\newtheorem{rem}[theo]{Remark}

\crefname{section}{Section}{Sections}
\crefname{appendix}{Appendix}{Appendices}
\crefname{figure}{Figure}{Figures}

\definecolor{myBlue}{rgb}{0.5, 0.5, 1.0}
\definecolor{myRed}{rgb}{1.0, 0.5, 0.5}
\definecolor{myGreen}{rgb}{0.5, 1.0, 0.5}
\definecolor{myYellow}{rgb}{1.0, 1.0, 0.0}
\definecolor{myBlack}{rgb}{0.0, 0.0, 0.0}
\definecolor{myOrange}{rgb}{1.0, 0, 0}
\definecolor{myPurple}{rgb}{1.0, 0, 1.0}
\definecolor{myBaby}{rgb}{0, 1.0, 1.0}

\renewcommand{\phi}{\varphi}
\renewcommand{\epsilon}{\varepsilon}
\renewcommand{\theta}{\vartheta}

\newcommand{\NN}{{\mathbb N}}

\newcommand{\LC}{\textsf{\upshape C}}

\newcommand{\CA}{{\mathcal A}}

\newcommand{\CC}{{\mathcal C}}

\newcommand{\CE}{{\mathcal E}}

\newcommand{\CG}{{\mathcal G}}

\newcommand{\CI}{{\mathcal I}}

\newcommand{\CO}{{\mathcal O}}
\newcommand{\CP}{{\mathcal P}}

\newcommand{\CS}{{\mathcal S}}

\newcommand{\CV}{{\mathcal V}}

\newcommand{\CZ}{{\mathcal Z}}

\DeclareMathOperator{\Can}{Can}

\newcommand{\Pair}{\textsf{\upshape Pair}}
\newcommand{\Union}{\textsf{\upshape Union}}
\newcommand{\Card}{\textsf{\upshape Card}}
\newcommand{\Atoms}{\textsf{\upshape Atoms}}
\newcommand{\TheUnique}{\textsf{\upshape TheUnique}}
\newcommand{\Empty}{\textsf{\upshape Empty}}
\newcommand{\HF}{\operatorname{HF}}

\begin{document}

\title{Deep Weisfeiler Leman}        
\author{Martin Grohe\\
\normalsize RWTH Aachen University
\and
Pascal Schweitzer\\
\normalsize TU Kaiserslautern
\and
Daniel Wiebking\\
\normalsize RWTH Aachen University
}

\date{}
\maketitle
\begin{abstract}\sloppy
We introduce the framework of Deep Weisfeiler Leman algorithms (\DWL), which allows the design of purely combinatorial graph isomorphism tests that are more powerful than the well-known Weisfeiler-Leman algorithm.

We prove that, as an abstract computational model, polynomial time
\DWL-algorithms have exactly the same expressiveness as the logic Choiceless Polynomial Time (with counting) introduced by Blass, Gurevich, and Shelah (Ann. Pure Appl.  Logic., 1999)

It is a well-known open question whether the existence of a polynomial
time graph isomorphism test implies the existence of a polynomial time
canonisation algorithm. Our main technical result states that for each
class of graphs (satisfying some mild closure condition), if there is
a polynomial time \DWL\ isomorphism test then there is a polynomial
canonisation algorithm for this class. This implies that there is also
a logic capturing polynomial time on this class.
\end{abstract}

\theoremstyle{plain}
\newcommand{\thistheoremname}{}
\newtheorem*{namedthm}{\thistheoremname}
\newenvironment{thmblock}[1]
  {\renewcommand{\thistheoremname}{#1}%
   \begin{namedthm}}
   {\end{namedthm}}

\section{Introduction}

The research that lead to this paper grew out of the following
seemingly unrelated questions in the context of the graph isomorphism
problem.

\begin{thmblock}{Question~A}
Are there efficient combinatorial graph
  isomorphism algorithms more powerful than the standard Weisfeiler
  Leman algorithm?
\end{thmblock}

Here we are interested in general purpose isomorphism algorithms and
not specialised algorithms for specific graph classes.

\begin{thmblock}{Question~B}
Are there generic methods to construct graph canonisation algorithms from
isomorphism algorithms?
\end{thmblock}

This question is also related to an important open problem in
descriptive complexity theory, the question of whether there is a
logic \emph{capturing} polynomial time. Such a logic would express
exactly the properties of graphs that are polynomial-time
decidable. It is known that if there is a polynomial time canonisation
algorithm for a class of graphs then there is a logic that captures
polynomial time on that class. (The converse is
unknown.)

Initially, we studied Questions~A and B separately, but at some point,
we noted an interesting connection, which is based on the empirical
observation that typically combinatorial isomorphism algorithms can
easily be lifted to canonisation algorithms, whereas for group
theoretic algorithms this is not so easy. Before giving any details,
let us discuss the two questions individually.

\subsection*{From Weisfeiler Leman to DeepWL}
One of the oldest (and most often re-invented) graph isomorphism
algorithm is the \emph{colour refinement algorithm}, which is also
known as \emph{naive vertex classification} or \emph{1-dimensional
  Weisfeiler-Leman algorithm} (1-{\WL}). It iteratively colours the
vertices of a graph. Initially, all vertices get the same
colour. The initial colouring is repeatedly \emph{refined}, in the
sense that colour classes are split into several classes. In each
refinement round, two vertices that still have the same colour get different
colours in the refined colouring if they have a different number of
neighbours in some colour class of the current colouring. The refinement
process stops if no further refinement can be achieved; we call the
resulting colouring \emph{stable}. As such, 1-{\WL} just
computes a colouring of the vertices of a graph, but it can be used as
an isomorphism test by running it simultaneously on two graphs and
comparing the colour histograms. If there is some colour such that the
two graphs have a different number of vertices of this colour, we know
the graphs are non-isomorphic, and we say that 1-{\WL} 
\emph{distinguishes} the two graphs. 1-{\WL} is an
incomplete isomorphism test, that is, there are non-isomorphic
graphs not distinguished by the algorithm. The simplest example is a
cycle of length 6 versus two triangles. 

In order to design a more powerful isomorphism test, Weisfeiler and
Leman \cite{weilem68} proposed a similar iterative colouring procedure for pairs of
vertices; this led to what is now known as the \emph{classical} or
\emph{2-dimensional Weisfeiler-Leman algorithm} (2-{\WL}). In the initial
colouring, the colour of a pair $(u,v)$ indicates whether $u$ and $v$
are equal, adjacent, or distinct and non-adjacent. Then in each
refinement round, two pairs $(u,v)$ and $(u',v')$ that still have the
same colour $a$ get different colours if for some colours $b,c$ the
numbers of vertices $w$ and $w'$ in the configuration shown in
Figure~\ref{fig:2wl} are distinct.
Again, the refinement
process stops if no further refinement can be achieved. The algorithm
can easily be adapted to directed graphs,
possibly with loops and labelled edges. All we need to do is
modify the initial colouring. For example, if we have two edge labels
$R,S$, the initial colouring of $2$-{\WL} has twenty different colours
encoding the isomorphism types of pairs $(u,v)$, for example, ``$u=v$ and there is
an $R$-loop, but no $S$-loop on $u$'' or ``$u\neq v$, there is no edge
from $u$ to $v$, and there is both an $R$-edge and an $S$-edge from
$v$ to $u$''. Throughout this paper, it will be convenient for us to
work with edge-labelled directed graphs, that is, \emph{binary relational
  structures}. 

\definecolor{darkgreen}{rgb}{0.0, 0.42, 0.24}
\begin{figure}
  \centering

  \begin{tikzpicture}[
        vertex/.style = {draw,fill,circle,inner sep=0pt, minimum height
          = 2mm},
        ]
        \begin{scope}
          \node[vertex] (u) at (0,0) {}; 
          \node[vertex]  (v) at (1.5,0) {}; 
          \node[vertex]  (w1) at (0,1.5) {}; 
          \node[vertex]  (w2) at (1.5,1.5) {}; 
          \path (u) node[below] {$u$} (v) node[below] {$v$} (w1)
          node[above] {$w_1$} (w2) node[above] {$w_2$};

          \draw[very thick,blue,->] (u) edge node[below] {\color{blue}$a$} (v);
          \draw[very thick,red,->] (u) edge node[left,pos=0.3] {\color{red}$b$} (w1);
          \draw[very thick,darkgreen,->] (w1) edge node[right,pos=0.3]
          {\color{darkgreen}$c$} (v);
          \draw[very thick,red,->] (u) edge node[left,pos=0.3] {\color{red}$b$} (w2);
          \draw[very thick,darkgreen,->] (w2) edge node[right,pos=0.3] {\color{darkgreen}$c$} (v);
        \end{scope}
        \begin{scope}[xshift=4cm]
          \node[vertex] (u) at (0,0) {}; 
          \node[vertex]  (v) at (1.5,0) {}; 
          \node[vertex]  (w1) at (-0.25,1.5) {}; 
          \node[vertex]  (w2) at (0.75,1.5) {}; 
          \node[vertex]  (w3) at (1.75,1.5) {}; 
          \path (u) node[below] {$u'$} (v) node[below] {$v'$} (w1)
          node[above] {$w_1'$} (w2) node[above] {$w_2'$} (w3) node[above] {$w_3'$};

          \draw[very thick,blue,->] (u) edge node[below] {\color{blue}$a$} (v);
          \draw[very thick,red,->] (u) edge node[left,pos=0.3] {\color{red}$b$} (w1);
          \draw[very thick,darkgreen,->] (w1) edge node[right,pos=0.15]
          {\color{darkgreen}$c$} (v);
          \draw[very thick,red,->] (u) edge node[left,pos=0.4] {\color{red}$b$} (w2);
          \draw[very thick,darkgreen,->] (w2) edge node[right,pos=0.15] {\color{darkgreen}$c$} (v);
         \draw[very thick,red,->] (u) edge node[left,pos=0.35] {\color{red}$b$} (w3);
          \draw[very thick,darkgreen,->] (w3) edge node[right,pos=0.3] {\color{darkgreen}$c$} (v);
        \end{scope}
  \end{tikzpicture}

  \caption{2-{\WL} differentiates between $(u,v)$ and $(u',v')$ if for some colours~$b$ and~$c$ there
    are different numbers of vertices $w_i$ and~$w'_i$, respectively, such that $(u,w_i)$, $(u',w'_i)$ have
    colour $b$ and $(w_i,v)$, $(w'_i,v')$ have colour $c$.}
\label{fig:2wl}
\end{figure}
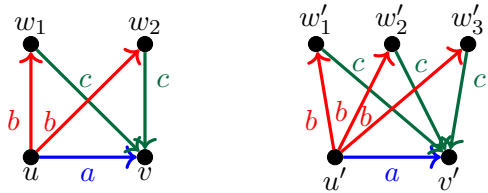

When it comes to distinguishing graphs, 2-{\WL} is significantly more
powerful than 1-{\WL}, but it is still fairly easy to find
non-isomorphic graphs not distinguished by the algorithm. In fact, any
two strongly regular graphs with the same parameters are
indistinguishable by $2$-{\WL}. To further strengthen the algorithm,
Babai proposed to colour $k$-tuples (for an arbitrary $k$)
instead of just pairs of vertices, introducing the
\emph{$k$-dimensional Weisfeiler-Leman algorithm} ($k$-{\WL}) (see \cite{caifurimm92}). 
For constant $k$, the algorithm runs in polynomial time, to be
precise the result is computable in time $O(n^{k+1}\log n)$.
This arguably still simple combinatorial algorithm
is quite powerful. It subsumes all natural combinatorial approaches to
graph isomorphism testing and, remarkably, also many algebraic and
mathematical optimisation approaches (e.g.~\cite{atsman13,atsoch18,bergro15,odowriwu+14}), with the important
exception of the group theoretic approaches introduced by Babai and
Luks \cite{bab79,babluk83,luk82} in the early 1980s.

It is quite difficult to find non-isomorphic graphs not
distinguishable by $k$-{\WL}, even for constant $k\ge 3$. In a seminal
paper, Cai, F\"urer and Immerman~\cite{caifurimm92} constructed, for every $k$, a pair
$G_k,H_k$ of non-isomorphic graphs of size $O(k)$ that are not
distinguished by $k$-{\WL}. These so-called \emph{CFI-graphs} encode the
solvability of a system of linear equations over a finite field, and
all known examples of non-isomorphic graph pairs not distinguished by
the Weisfeiler-Leman algorithms are based on variations of this
construction. Incidentally, the hardest known instances for practical
graph isomorphism tools are based on the same construction \cite{neuschwe17}. Let us
remark that the CFI graphs can easily be distinguished in polynomial
time by group theoretic techniques. Indeed, the graphs are 3-regular
and thus can be distinguished by Luks's \cite{luk82} polynomial time
isomorphism algorithm for graph classes of bounded degree. But the
group theoretic techniques are far more
complicated than the simple ``local constraint propagation''
underlying the Weisfeiler-Leman algorithm.

This brings us to Question~A. We start from a different perspective on
$k$-{\WL}: instead of colouring $k$-tuples, we can think of $k$-{\WL} as
adding all $k$-tuples of vertices as new elements to our input graph,
together with new binary relations encoding the relationship between
the tuples and vertices of the original graphs. Then on this extended
graph we run 1-{\WL} (or, depending on the details of the construction,
$2$-{\WL}), and the resulting colours of the $k$-tuples should
correspond to (or subsume) the colours $k$-{\WL} would assign to these
tuples. This correspondence between $k$-{\WL} on a graph and 1-{\WL} on an
extended structure consisting of $k$-tuples of vertices of the
original graph has been known for a while, it may go back to the work
of Otto~\cite{ott97}. \emph{Here is our new idea: }perhaps we do not need all
$k$-tuples of vertices to distinguish two graphs, but just a few of
them.  This could arise in a situation where we have two graphs $G,H$
and within them small subsets $S\subseteq V(G), T\subseteq V(H)$ such that the difference between
the graphs is confined to the induced subgraphs of these
subsets. Then to distinguish the graphs, it suffices to create tuples
of elements of these subsets.

\begin{exa}
  Let $G$ and $H$ be the graphs obtained
by padding the CFI-graphs $G_k,H_k$ with $2^{k\log k}$ isolated vertices,
and let
$S\subseteq V(G)$ and $T\subseteq V(H)$ be the vertex sets of
$G_k,H_k$ within $G,H$, respectively. 

Then $n=|G|=|H|=O(k)+2^{k\log k}$, and we need (at least) the $(k+1)$-{\WL} to
distinguish $G,H$, running in time $n^{\Omega(k)}=n^{\Omega(\frac{\log
  n}{\log\log n})}$. However, to distinguish the graphs, we only need to see all
$(k+1)$-tuples of vertices from the sets $S,T$, and the number of such
tuples is $k^{(k+1)}=O(n)$. Thus if we create only these $(k+1)$-tuples
and then use 1-{\WL} to distinguish the graphs extended by these
tuples, we have a polynomial time algorithm.
\end{exa}

The example nicely illustrates that it can be beneficial to confine
the use of a high-dimensional {\WL} algorithm to a small part of a
structure. It allows us to investigate this part to greater depth,
using $k$-{\WL} even for $k$ linear in the size of the relevant part
while maintaining an overall polynomial running time. The question is
how we find suitable sets $S$ and $T$ on which we focus. We can start
from the colour classes of 1-{\WL} on the current structure. Then we
can iterate the whole process, that is: we start by running 1-{\WL} on
the input graph(s), then choose one or several colours with few
elements, add $k$-tuples of elements of these colours, extend the
graph by these tuples and the associated relations, then run 1-{\WL}
again, choose new colour classes, add tuples, et cetera. We repeat
this procedure as long as our running time permits it.  This is the
idea of \emph{Deep Weisfeiler Leman ({\DWL})}, a class of
combinatorial algorithms that are based on the same simple
combinatorial ideas as Weisfeiler Leman, but turn out to be
significantly more powerful.

The formal realisation of this idea is subtle and requires some
care. Without going into too many details here (see
Section~\ref{sec:dwl}), let us highlight some of the main
points. First of all, since we can iterate the process of tuple
creation, it suffices to create pairs; $k$-tuples can be encoded as
nested pairs. Second, it turns out that working with
$2$-{\WL} instead of $1$-{\WL} leads to a much more robust class of
algorithms. One intuitive reason for this is that $2$-{\WL} (as opposed to $1$-{\WL})
allows us to trace connectivity and paths in a graph and thereby
allows us to detect if two deeply nested pairs share elements of the
input graph. On a technical level, 2-{\WL} allows us to use
the language and algebraic theory of coherent configurations \cite{DBLP:books/daglib/0037866},
which
are tightly linked to colourings computed by 2-{\WL}.
Moreover, the creation of pairs is particularly natural
in combination with $2$-{\WL}: we simply pick a colour class of the
current colouring (of pairs of elements of the current structure) and
then create a new element for each pair of that colour. 

A third aspect of the formalisation of {\DWL} is less intuitive, but
leads to an even more powerful class  of algorithms that is also more
robust (as \cref{theo:dwlcptpil} shows). Besides creating pairs of
elements, we introduce a second operation for \emph{contracting}
connected components of a colour class (or factoring a
structure). This allows us to discard irrelevant information and better control the size of the structure
we build. 

So what is a {\DWL} algorithm? Basically, it is a strategy for
adaptively choosing a sequence of operations (create elements
representing pairs, contract connected components) and the colour
classes to which these operations are applied. A run of such an
algorithm maintains a growing structure (the original structure
plus the newly created elements and relations), but the algorithm has
no direct access to the structure. It only gets the information of
which colours the 2-{\WL} colouring of the structure computes and how the
colour classes relate. This guarantees that a {\DWL} algorithm always
operates in an isomorphism invariant way: isomorphic input structures
lead to exactly the same runs.  We introduce {\DWL} as a general
framework for algorithms operating on graphs (and relational
structures), but we are mainly interested in graph isomorphism
algorithms that can be implemented in {\DWL}. It follows from our
results that {\DWL} can distinguish all CFI graphs in polynomial time
and thus is strictly more powerful than $k$-{\WL} for any $k$. (Note that $k$-{\WL} can
be seen as a specific {\DWL} algorithm where the strategy is to create
all $k$-tuples.)

\subsection*{Isomorphism Testing, Canonisation, and Descriptive
  Complexity}
The graph isomorphism problem can be seen as the algorithmic problem
of deciding whether two different representations of a graph, for
example, two different adjacency matrices, actually represent the same
graph. One way of solving this problem is to transform arbitrary
representations of a graph into a \emph{canonical
  representation}. A \emph{canonisation
  algorithm} does precisely this. Canonisation is an interesting
problem beyond isomorphism testing. For example, if we want to store
molecular graphs in a chemical information system, then it is
best to store a canonical representation of the molecules.

Formally, a \emph{canonical form} for a class $\CC$ of graphs (which
we assume to be closed under isomorphism) is a
mapping $\Can\colon\CC\to\CC$ such that for all $G\in\CC$, the graph $\Can(G)$ is
isomorphic to $G$, and for isomorphic $G,H\in\CC$ it holds that
$\Can(G)=\Can(H)$. A \emph{canonisation algorithm} for $\CC$ is an algorithm
computing a canonical form for $\CC$.\footnote{We view this as a
  promise problem, that is, it is irrelevant what the algorithm does
  on inputs $G\not\in\CC$.}

To the best of our knowledge, for all natural classes $\CC$ for which
a polynomial time isomorphism algorithm is known, a polynomial time
canonisation algorithm is also known. For some classes, for example
the class of planar graphs or classes of bounded tree width, it was
easy to generalise isomorphism testing to canonisation. For other
classes, for example all classes of bounded degree, this required
considerable additional effort \cite{babluk83}. Question~B simply asks
if there is a polynomial time reduction from canonisation to
isomorphism testing.  This is an old question (see, for example,
\cite{gur97}) that may eventually be resolved by a proof that there
exists a polynomial time canonisation algorithm for the class of all
graphs. But it is conceivable that this question can be resolved
without clarifying the complexity status of either isomorphism or
canonisation. In any case, it is consistent with current knowledge
that there is a polynomial time isomorphism algorithm, but no
polynomial time canonisation algorithm for the class of all graphs.

A pattern that emerged over the years is that it is usually easy to
obtain canonisation algorithms from combinatorial isomorphism
algorithms and much harder to obtain them from group theoretic
isomorphism algorithms. This intuition is supported by the following
(folklore) theorem: \textit{Let $\CC$ be a graph class such that
  $k$-{\WL} is a complete isomorphism test for $\CC$, that is, it distinguishes all non-isomorphic (vertex coloured) graphs in $\CC$. Then there is a
  polynomial time canonisation algorithm for $\CC$.} (For a proof, see
\cite{groneu19}.) Interestingly, for some graph classes for which
(group theoretic) polynomial time isomorphism tests were known, the
first polynomial time canonisation algorithms were obtained by proving
that $k$-{\WL} is a complete isomorphism test for these classes. Examples
are classes of bounded rank width \cite{groschwe15b,groneu19} and
graph classes with excluded minors \cite{pon88,gro17}.

As mentioned earlier, a polynomial time canonisation algorithm for a
class of graphs yields a logic that captures polynomial time on
that class. 
Arguably the most prominent
logic in this context is \emph{fixed-point logic with counting}
(\FPC)~\cite{imm87a,graott93}. \FPC\ captures polynomial
time on many natural graph classes, among them all classes with
excluded minors \cite{gro17}. There are deep connections between
$\FPC$ and the Weisfeiler-Leman algorithm. In particular, \emph{for
  every class $\CC$ of graphs, isomorphism for graphs from $\CC$ is
  expressible in \FPC\ if and only if there is a $k\ge 1$ such that
  $k$-{\WL} is a complete isomorphism test for $\CC$} \cite{ott97}.  A
consequence of this is that \FPC\ cannot express isomorphism of the
CFI graphs, which implies that the logic does not capture polynomial
time on the class of all graphs.

\emph{Choiceless polynomial time with counting} (\CPT) is a richer
logic that is strictly more expressive than \FPC, but still contained
in polynomial time (in the sense that all properties of graphs
expressible in \CPT\ are polynomial-time decidable). It was introduced
by Blass, Gurevich, and Shelah \cite{blagurshe99} as a formalisation of ``choiceless'',
that is, isomorphism invariant, polynomial time computations. Dawar,
Rossman, and Richerby \cite{dawricros08} proved that isomorphism of the
CFI graphs is expressible in \CPT. It is still an open question
if \CPT\ captures polynomial time.

\subsection*{Main Results}

Our first main result shows that polynomial time \DWL\ algorithms can
decide precisely the properties expressible in the logic \CPT.
Thus \DWL\ corresponds to \CPT\ in a similar way as the standard
WL-algorithm corresponds to the logic \FPC.

\begin{theo*}[\cref{theo:dwlcptpil}]
  A property of graphs is decidable by a polynomial time
  \DWL-algorithm if and only if it is expressible in \CPT.
\end{theo*}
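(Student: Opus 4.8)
The plan is to establish the equivalence in \cref{theo:dwlcptpil} by two simulations, one in each direction, using the logic \CPT\ as a bridge between the combinatorial model and the machine model. Recall that \CPT\ is commonly formalised via \BGS-machines (Blass--Gurevich--Shelah), which manipulate hereditarily finite sets over the atoms of the input structure in an isomorphism-invariant way, subject to a polynomial bound on the number of atoms activated and on the running time. The key structural observation is that a run of a \DWL-algorithm is itself an isomorphism-invariant process: it maintains a growing relational structure obtained from the input by repeatedly adding pair-elements and contracting connected components of colour classes, and the algorithm's decisions depend only on the 2-{\WL} colouring and the combinatorial relationships among colour classes. Both the \emph{pair creation} and \emph{component contraction} operations, as well as the computation of the 2-{\WL} stable colouring, are canonical (isomorphism-invariant) constructions that can be carried out on hereditarily finite sets.

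\emph{From \DWL\ to \CPT.} First I would show that every polynomial time \DWL-algorithm is simulated by a \CPT-sentence (equivalently, a polynomial time \BGS-machine). The newly created elements of a \DWL-run are naturally modelled as hereditarily finite sets: a pair-element created from a colour class $c$ is represented by the (unordered or suitably tagged) set encoding the pair it stands for, and a contracted component is represented by the set of elements it identifies. Since the 2-{\WL} colouring of a relational structure is \FPC-definable (it is a fixed-point of a counting-aware refinement operator), and \FPC\ is contained in \CPT, the machine can recompute the stable colouring after each operation. The \DWL-algorithm's strategy is a polynomial-time decision procedure operating only on the abstract colour information, so it can be encoded directly in \CPT. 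The polynomial bounds on the length of the \DWL-run and on the size of the structure it builds translate into the polynomial atom- and step-bounds required by \CPT. The one point requiring care is faithfully encoding the abstract ``interface'' information a \DWL-algorithm receives (colour names, which classes arose from which operation, cardinalities) as \CPT-definable data about the hereditarily finite objects; this is routine but fiddly.

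\emph{From \CPT\ to \DWL.} Conversely, I would show that every \CPT-sentence is decided by a polynomial time \DWL-algorithm. Given a \BGS-machine, the idea is to have the \DWL-algorithm build, step by step, a relational structure that represents the hereditarily finite set computed so far, using pair creation to form set-theoretic pairs (hence, by nesting, arbitrary finite sets via Kuratowski-style or list-style encodings) and using component contraction to implement the crucial operation of \emph{identifying} two hereditarily finite objects that are equal as sets despite being built differently --- this is exactly where contraction earns its keep, since extensional equality of nested sets corresponds to collapsing components of an appropriate ``equality'' colour class. Counting in \CPT\ is handled because 2-{\WL} already computes colour class sizes, which a \DWL-algorithm can read off. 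The atom- and step-bounds of the \CPT-computation guarantee that the simulating \DWL-algorithm runs in polynomial time, provided the structure it maintains does not blow up --- again contraction is what keeps representations in canonical, polynomially-bounded form.

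I expect the main obstacle to be the \CPT-to-\DWL\ direction, specifically implementing set-theoretic \emph{extensionality} within \DWL. A \BGS-machine treats $\{a,a\}$ and $\{a\}$ as the same object and, more seriously, treats two syntactically different set terms that happen to denote the same hereditarily finite set as equal; a naive \DWL-simulation would create distinct pair-elements for these. The fix is to maintain, as an invariant, that the current \DWL-structure is a \emph{canonical} encoding of the current hereditarily finite set --- e.g. every set is represented exactly once, with children in a 2-{\WL}-definable order --- and after each machine step, to restore this invariant by a bounded sequence of contractions that merge duplicate representations. Proving that this restoration is always expressible through the operations \DWL\ permits (only colour classes may be acted upon, and the merging must be isomorphism-invariant) and that it keeps the structure polynomially bounded is the technical heart of the argument. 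A secondary subtlety, in the other direction, is ensuring the \DWL-algorithm's pair-creation on a colour class (which creates one element \emph{per pair} in that class) is matched by a \CPT-definable set comprehension of exactly the right cardinality, so that the simulation stays faithful round by round.
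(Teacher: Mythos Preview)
Your \DWL-to-\CPT\ direction matches the paper's \cref{lem:dwlTOcpt} almost exactly: represent the cloud structure by hereditarily finite sets, use that \FPC\ (hence \CPT) defines the coarsest coherent configuration, and simulate the four operations directly.

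Your \CPT-to-\DWL\ direction, however, is a genuinely different route from the paper's. You propose to simulate \BGS-machines directly, encoding hereditarily finite sets as \DWL-elements and using $\addcom$ to enforce extensionality. The paper does \emph{not} do this. Instead it goes through an intermediate characterisation: by a result of Gr\"adel, Pakusa, Schalth\"ofer and Kaiser, \CPT\ is equivalent to \emph{interpretation logic} \IL, whose programs iterate a single 2-dimensional $\FOH$-interpretation (first-order logic with the H\"artig quantifier). The paper then simulates one \IL-step in \DWL: an $\addpair$ creates all pairs of the current universe, the $\FOH$-formulas of the interpretation are evaluated by first blowing up to $k$-tuples (reducing to 3-variable counting logic, which 2-\WL\ decides) and then reading off colour classes, and finally a single $\addcom$ implements the quotient by the interpretation's congruence $\phi_=$.

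What the paper's approach buys is precisely that it sidesteps the obstacle you correctly flag as the technical heart of your argument. \IL-programs operate on ordinary relational structures, not on hereditarily finite sets, so there is no extensionality problem at all: the only identification needed is the explicit congruence $\phi_=$ supplied by the interpretation, and that is dispatched by one $\addcom$. Your approach would, if carried through, give a more self-contained proof that does not import the \CPT$\equiv$\IL\ theorem, but you would have to actually prove that a canonical, duplicate-free representation of the active hereditarily finite sets can be maintained within the polynomial budget using only isomorphism-invariant colour-class contractions --- and that is a real piece of work you have sketched but not done.
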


\begin{cor*}
  There is a polynomial time {\DWL} algorithm that decides isomorphism
  of the CFI graphs.
\end{cor*}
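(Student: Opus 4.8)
The plan is to derive the corollary directly from \cref{theo:dwlcptpil} together with the theorem of Dawar, Rossman, and Richerby~\cite{dawricros08}, which states that isomorphism of the CFI graphs is expressible in \CPT. The only thing that requires care is to phrase ``isomorphism of the CFI graphs'' as a \emph{property of graphs} in the sense of \cref{theo:dwlcptpil}, so that the theorem can be applied off the shelf.

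First I would fix the standard encoding of an isomorphism instance as a single binary relational structure: given two graphs $G$ and $H$ (which, by the promise, we may take to be CFI graphs in the sense of Cai, Fürer, and Immerman~\cite{caifurimm92}), form the disjoint union $G\cupdot H$ and add a fresh unary colour marking the vertices originating from $G$; call the resulting labelled graph $G\oplus H$. Two such encodings $G\oplus H$ and $G'\oplus H'$ are isomorphic as labelled structures if and only if $G\cong G'$ and $H\cong H'$, so ``$G\cong H$'' is an isomorphism-invariant property of $G\oplus H$. The class of all structures of the form $G\oplus H$ with $G,H$ CFI graphs is closed under isomorphism, and by~\cite{dawricros08} the property ``$G\cong H$'' is expressible by a \CPT-sentence on this class; since we are dealing with a promise problem, we may take this to be a \CPT-sentence that is defined on all inputs and behaves arbitrarily on structures not of the form $G\oplus H$.

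Next I would invoke the implication of \cref{theo:dwlcptpil} that every \CPT-expressible property of graphs is decidable by a polynomial time \DWL-algorithm, applied to the \CPT-sentence just described. This yields a polynomial time \DWL-algorithm that, on input $G\oplus H$ with $G,H$ CFI graphs, accepts exactly when $G\cong H$; since \DWL-algorithms operate in an isomorphism-invariant way by construction, this is precisely a polynomial time \DWL\ isomorphism test for the class of CFI graphs, which proves the corollary. I expect no genuine obstacle: all of the substance sits inside \cref{theo:dwlcptpil} and inside the \CPT-construction of~\cite{dawricros08}, so the corollary is a one-line consequence once the input conventions are aligned. The single point worth verifying is that the encoding of pairs of graphs built into the notion of a \DWL\ isomorphism test coincides with (or is polynomial-time interreducible with) the marked disjoint union $G\oplus H$ used above, which is routine.
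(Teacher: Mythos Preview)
Your proposal is correct and matches the paper's approach: the corollary is stated there as an immediate consequence of \cref{theo:dwlcptpil} together with the result of Dawar, Richerby, and Rossman~\cite{dawricros08} that isomorphism of CFI graphs is \CPT-definable, and the paper gives no further argument. Your additional care about encoding the instance as a single structure is appropriate; note only that the paper's own convention for ``decides isomorphism on $\CC$'' (Section~\ref{sec:normDWL}) uses the \emph{unmarked} disjoint union $A_1\uplus A_2$ of connected structures rather than your marked $G\oplus H$, but as you observe this is a routine interreduction.
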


A direct consequence of this result is that \DWL\ is strictly more powerful
than the standard WL-algorithm. Thus \DWL\ provides an answer to
Question~A: it gives us purely combinatorial isomorphisms tests
strictly more powerful than standard WL. Moreover, the logical
characterisation in terms of \CPT\ (\cref{theo:dwlcptpil}) shows that
the class of polynomial time \DWL-algorithms is robust and, arguably,
natural.

Our second main result addresses Question~B. While not fully resolving
it, it substantially extends the realm of isomorphism algorithms that
can automatically be transformed to canonisation algorithms. A
\emph{complete invariant} for a class $\CG$ of graphs is a mapping
$\CI\colon \CG\to\{0,1\}^*$ such that for all $G,H\in\CG$ we have $G\cong H$
if and only if $\CI(G)=\CI(H)$. 

\begin{theo*}[\cref{theo:inv}]
  Let $\CG$ be a class of graphs such that there is a polynomial-time
  {\DWL}-algorithm deciding isomorphism on $\CG$. Then there is a
  polynomial-time \DWL-algorithm that computes a complete invariant for $\CG$.
\end{theo*}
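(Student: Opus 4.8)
\textit{Proof idea.} The plan is to use the hypothesised isomorphism test $A$ as a subroutine inside a single \DWL-computation run on the input graph $G\in\CG$, producing an output string that pins $G$ down up to isomorphism. The first thing to note is that this is legitimate: $A$ is itself a \DWL-algorithm, so a strategy that builds auxiliary structures from the current one and then follows $A$'s strategy on them is again a \DWL-algorithm. Moreover, since every \DWL-algorithm operates in an isomorphism-invariant way, \emph{any} string it outputs is automatically an invariant of $G$; the entire content of the theorem is to arrange that this invariant is \emph{complete} — different on non-isomorphic $G,H\in\CG$ — and that the whole computation stays polynomial.

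The first ingredient I would set up is an isomorphism-invariant refinement $\chi$ of the $2$-{\WL} colouring that is obtained by calling $A$. On the current structure $\CS$ (initially $G$), for any two elements $u,v$ lying in the same $2$-{\WL} colour class, form the two structures obtained by additionally individualising $u$, respectively $v$, present their marked disjoint union to $A$, and iterate to a fixed point over all such pairs. In polynomial time this yields a colouring $\chi$ finer than the $2$-{\WL} colouring with the property that $\chi(u)=\chi(v)$ exactly when individualising $u$ and individualising $v$ produce isomorphic structures; in particular any two elements of a $\chi$-class lie in a common orbit of $\operatorname{Aut}(\CS)$ and are therefore fully interchangeable. (This uses that $A$ still applies to the vertex-coloured and individualised binary relational structures that arise — a mild point given the ambient setting of edge-labelled digraphs and coloured graphs discussed in the introduction.)

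Now the main loop. If $\chi$ restricted to the original vertices of $G$ is discrete, I would order those vertices by their $\chi$-colours and output the resulting adjacency matrix, which is already a canonical form. Otherwise symmetry must be broken at the $\chi$-least non-singleton class $C$. A \DWL-algorithm cannot single out one element of $C$, so instead I would pass to (an encoding of) the disjoint union, over all $v\in C$, of $\CS$ with $v$ individualised — an isomorphism-invariant construction that is only polynomially larger than $\CS$. The decisive observation is that, because $C$ is an orbit, all $|C|$ summands are pairwise isomorphic, so once the summands have been refined as above every colour class of the union meets each summand in exactly one element; from this matching one reads off, in an isomorphism-invariant way, the ordered isomorphism type of a single summand, without ever having chosen one. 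I would append $|C|$ together with this data to the output, collapse the structure back to essentially the size of one summand so the size does not accumulate across rounds, and iterate. After at most $|V(G)|$ rounds every original vertex has been individualised; the accumulated record is a complete invariant, since from it one recovers the final discrete structure, hence the individualised $G$, hence $G$ up to isomorphism, while isomorphic inputs produce identical records by isomorphism invariance.

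The step I expect to be the main obstacle is the size control in this symmetry-breaking loop. Done naïvely, the disjoint-union construction is re-expanded in every one of the up to $n$ rounds and the structure blows up exponentially; keeping it polynomial is precisely what the $\addcom$-operation of \DWL\ is for, but using it correctly is delicate, most visibly for structures that split into many pairwise isomorphic pieces (disjoint unions being the extreme case), where one has to recurse on the pieces and reassemble the invariant as a multiset rather than operate on the whole union at once. Orchestrating $\addcom$ together with the $\addunion$-operation — needed to simulate high-dimensional refinement on the small colour classes that arise along the way, exactly as in the CFI example of the introduction — and with the calls to $A$, all inside one isomorphism-invariant polynomial-time \DWL-algorithm, is where the technical machinery developed earlier in the paper has to be brought to bear.
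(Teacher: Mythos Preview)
Your approach is genuinely different from the paper's, and it has a real gap. You repeatedly apply the isomorphism test $A$ to structures obtained from $G$ by individualising vertices (and later to disjoint unions of such structures). But the hypothesis only guarantees that $A$ decides isomorphism \emph{on $\CG$}; on inputs outside $\CG$ it may return garbage. You call this ``a mild point,'' but it is exactly the point: the theorem is stated without any closure assumption on $\CG$, and your argument silently assumes closure under colouring. That extra hypothesis is precisely what the paper adds only later, in \cref{cor:canon}, when going from a complete invariant to a canonical form via Gurevich's theorem. A secondary issue: your claim that after refinement ``every colour class of the union meets each summand in exactly one element'' does not follow from the summands being pairwise isomorphic---if $(\CS,v)$ still has nontrivial automorphisms, its orbit partition has non-singleton classes, and the corresponding colour classes in the union meet each summand in more than one element. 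So the recursion you sketch does not terminate in $|V(G)|$ rounds as written.

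The paper avoids all of this by never leaving $\CG$. Its key move is to show (Lemmas~\ref{lem:almostNorm} and~\ref{lem:norm}) that any polynomial-time \DWL\ isomorphism test can be replaced by a \emph{normalised} one $M_\norm$ whose cloud, on input $A_1\uplus A_2$, always consists of two disjoint components, one over each $A_i$, with no crossing constructions. By \cref{lem:ccNorm} the algebraic sketch of the whole is then determined by the unordered pair of sketches of the halves, so the internal run on $A_1\uplus A_2$ depends only on the multiset $\{A_1,A_2\}$ up to isomorphism. The complete invariant is simply $\CI(A):=\rho_\norm(A\uplus A)$, the internal run on the disjoint union of $A$ with itself. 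If $\CI(A_1)=\CI(A_2)$, then by the decoupling this common run is also the run on $A_1\uplus A_2$, which is accepting (since the run on $A_1\uplus A_1$ is), hence $A_1\cong A_2$. No individualisation, no recursion, no closure hypothesis; the heavy lifting is entirely in the normalisation lemmas.
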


We say that a class $\CG$ of (vertex) coloured graphs is \emph{closed under
colouring} if all graphs obtained from a graph in $\CG$ by changing
the colouring also belong to $\CG$.
Isomorphisms between coloured graphs are defined in the usual way such that the colour
of each vertex has to be preserved. By a result due to
Gurevich~\cite{gur97} relating complete invariants to canonisation, we
obtain the following corollary.

\begin{cor*}[\cref{cor:canon}]
  Let $\CG$ be a class of coloured graphs closed under colouring such that
  there is a polynomial time {\DWL} algorithm deciding isomorphism on
  $\CG$. Then there is a polynomial time canonisation algorithm for
  $\CG$.
\end{cor*}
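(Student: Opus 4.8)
The plan is to combine \cref{theo:inv} with Gurevich's classical reduction \cite{gur97} of canonisation to the computation of a complete invariant. First I would apply \cref{theo:inv} to obtain a polynomial-time \DWL-algorithm computing a complete invariant $\CI\colon\CG\to\{0,1\}^*$; since every polynomial-time \DWL-algorithm can be simulated by an ordinary polynomial-time algorithm (this underlies the inclusion of polynomial-time \DWL\ in \CPT, together with $\CPT\subseteq\PTIME$), this in fact yields an ordinary polynomial-time algorithm computing $\CI$. The assumption that $\CG$ is closed under colouring enters here in an essential way: it guarantees that every graph obtained from a graph in $\CG$ by individualising some vertices---i.e.\ by giving each of a chosen sequence of vertices its own fresh colour---again belongs to $\CG$, so that $\CI$ is defined on all such graphs and remains a complete invariant for them.

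Next I would run the individualisation procedure. To canonise a given $G\in\CG$ with vertex set $V$, $|V|=n$, I build a bijection $\lambda\colon V\to\{1,\dots,n\}$ greedily. Suppose $\lambda(u_1)=1,\dots,\lambda(u_{j-1})=j-1$ have already been fixed, and let $G_{j-1}$ be $G$ with $u_1,\dots,u_{j-1}$ individualised. For every not-yet-chosen vertex $v\in V$ form the graph $G_{j-1}[v]$ by individualising $v$ as well (with one further fresh colour) and compute $\CI(G_{j-1}[v])$; among all $v$ minimising $\CI(G_{j-1}[v])$ with respect to the lexicographic order on $\{0,1\}^*$, choose the one occurring first in the input presentation of $V$ and set $\lambda(v)=j$. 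After $n$ rounds, output $\lambda(G)$, the isomorphic copy of $G$ on vertex set $\{1,\dots,n\}$ obtained by transporting $G$ along $\lambda$; this copy lies in $\CG$ because $\CG$ is closed under isomorphism. The running time is polynomial: there are $n$ rounds, and each invokes the polynomial-time invariant algorithm on at most $n$ graphs of size $O(n)$.

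What remains is to show that $\lambda(G)$ depends only on the isomorphism type of $G$, which is the heart of the argument. The key claim, proved by induction on $j$, is that whenever $G\cong H$, the graphs $G_j$ and $H_j$ built in round $j$ of the runs on $G$ and on $H$ are isomorphic via an isomorphism respecting the individualising colours. The case $j=0$ is the hypothesis $G\cong H$. For the inductive step, fix an isomorphism $\phi\colon G_{j-1}\to H_{j-1}$ respecting the individualising colours; then $v\mapsto\phi(v)$ is a bijection between the not-yet-chosen vertices of $G$ and those of $H$ with $\CI(G_{j-1}[v])=\CI(H_{j-1}[\phi(v)])$, so the two runs attain the same minimal invariant value. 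If $v$ and $w$ are the vertices selected in round $j$ of the two runs, then $\CI(G_{j-1}[v])=\CI(H_{j-1}[w])$, and completeness of $\CI$ gives $G_{j-1}[v]\cong H_{j-1}[w]$, i.e.\ $G_j\cong H_j$, completing the induction. Taking $H=G$ shows in particular that the arbitrary tie-breaking does no harm: any selection rule produces isomorphic graphs $G_j$ and hence the same final output $\lambda(G)$. Finally, for $j=n$ both $G_n$ and $H_n$ are totally individualised, so any colour-respecting isomorphism $G_n\to H_n$ is forced vertex by vertex; unwinding the definitions this gives $\lambda(G)=\mu(H)$, where $\mu$ is the bijection produced by the run on $H$. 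Hence $G\mapsto\lambda(G)$ is a canonical form for $\CG$ computable in polynomial time.

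The main obstacle is precisely this correctness argument: one has to be careful that the greedy construction, although it breaks ties arbitrarily, is nevertheless isomorphism-invariant. The point is that the set of invariant values encountered in a round is an isomorphism invariant of the partially individualised graph, and \emph{completeness} of $\CI$---the implication $\CI(A)=\CI(B)\Rightarrow A\cong B$---is exactly what propagates ``$G_{j-1}\cong H_{j-1}$'' to ``$G_j\cong H_j$'' through the induction. The closure-under-colouring hypothesis is the other load-bearing ingredient, since without it $\CI$ would not be guaranteed to make sense on the individualised graphs that the procedure produces.
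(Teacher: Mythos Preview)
Your proposal is correct and follows the same route as the paper: combine \cref{theo:inv} with Gurevich's reduction \cite{gur97} from canonisation to complete invariants. The only difference is that the paper cites Gurevich's theorem as a black box, whereas you have unpacked its proof (the greedy individualisation procedure and the inductive correctness argument); your unpacking is accurate. One minor remark: the fact that a polynomial-time \DWL-algorithm can be simulated by an ordinary polynomial-time Turing machine is immediate from the definition of \DWL\ (the cloud operations, including 2-\WL, run in polynomial time), so the detour through \CPT\ is unnecessary.
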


The rest of this paper is organised as follows. After giving the
necessary preliminaries in Section~\ref{sec:prel}, we formally
introduce \DWL\ in Section~\ref{sec:dwl}. In
  Section~\ref{sec:PureDWL}, we prove that \DWL\ is
  equivalent to a restricted form that we call pure
  \DWL. Section~\ref{sec:normDWL} is the technical core of the
  paper. We prove our main technical result about
  isomorphism testing in \DWL\ and canonisation. The difficult part of the
  proof is a normal form that we obtain for \DWL-algorithms deciding
  isomorphism.
  Finally, in Section~\ref{sec:cptdwl} we establish the equivalence
  between \DWL\ and \CPT. Due to space liminations, we have to defer
  many of the proofs to a technical appendix.

\section{Preliminaries}
\label{sec:prel}

\subsection*{Binary Relations and Structures}
Let $R$ be a binary relation.
The \emph{domain} of $R$ is defined as the set
$\dom(R):=\{u\mid\exists v\colon\;(u,v)\in R\}$, and
the \emph{codomain} of $R$ is
$\codom(R):=\{v\mid\exists u\colon\;(u,v)\in R\}$.
The \emph{support} of $R$
is $\supp(R):=\dom(R)\cup\codom(R)$.
The
\emph{converse} of $R$ is the relation $R^{-1}:=\{(v,u)\mid
(u,v)\in R\}$.
The
\emph{concatenation} of two binary relations $R_1,R_2$ is the relation
$R_1 \circ R_2:=R_1R_2:=\{(u,w)\mid\exists v\colon\;(u,v)\in R_1\text{ and }(v,w)\in
R_2\}$. Union, intersection and difference between relations are
defined in the usual set-theoretic sense. The \emph{strongly connected
  components} of a binary relation
$R$ are defined in the usual way as inclusionwise
maximal sets $S\subseteq \dom(R)\cap\codom(R)$ such that for all $u,v\in S$ there is an
$R$-path of length at least $1$ from $u$ to $v$.
(In particular a singleton
set $\{u\}$ can be a strongly connected component only if $(u,u)\in R$.)
We write $\sccs(R)$ to denote the set of strongly connected
components of $R$. Moreover, we let $R^\scc:=\bigcup_{S\in\sccs(R)}S^2$ be the relation describing whether two elements are in the same strongly connected component.
For a set $V$, the \emph{diagonal} of $V$ is the
relation $\diag(V):=\{(v,v)\mid v\in V\}$. For a relation $R$ we let
$R^\diag:=R\cap\diag(\dom(R))$ be the diagonal elements in~$R$. We call $R$ a 
\emph{diagonal relation} if $R=R^\diag=\diag(\dom(R))$.

A \emph{vocabulary} is a finite set $\tau$ of binary relation
symbols.  In some places, we need to specify how relation symbols
$R\in\tau$ are represented: we always assume that they are binary
strings $R\in\{0,1\}^*$. In particular, this allows us to order the
relation symbols lexicographically.  Note that the lexicographical
order on the relation symbols induces a linear order on each
vocabulary $\tau$. This will be important later, because it allows us
to represent vocabularies in a canonical way.  Let
$R_1,\ldots,R_t\in\{0,1\}^*$ be the sequence of all relation symbols
in $\tau$ according to the lexicographical order.  A
\emph{$\tau$-structure} $A$ is a tuple $(V(A),R_1(A),\ldots,R_t(A))$
consisting of a finite set $V(A)$, the \emph{vertex set} or
\emph{universe}, and a (possibly empty) relation
$R(A)\subseteq V(A)^2$ for each relation symbol $R\in\tau$.  We view
graphs as structures whose vocabulary consist of a single relation
symbol $E$. While we are mainly interested in graphs, to develop our
theory it will be necessary to consider general structures. We may see
structures as directed graphs with coloured edges; each binary
relation symbol corresponds to an edge colour and edges may have
multiple colours. Note that we can also simulate unary relations and
hence vertex colourings in binary structures $A$ by diagonal
relations.

Besides \emph{substructures} of a structure (obtained by
deleting vertices and edges) and \emph{restrictions} of a structure
(obtained by removing entire relations from the structure and the
vocabulary), we sometimes need to consider a combination of both. Let $A$ be a $\tau$-structure and let $\tilde\tau\subseteq\tau$ and $\tilde V\subseteq V(A)$.
The \emph{$\tilde\tau$-subrestriction} of $A$ on $\tilde V$ is the
$\tilde\tau$ structure $\tilde A:=A[\tilde\tau,\tilde V]$ with
universe $V(\tilde A)=\tilde V$ and $E(\tilde A)=E(A)\cap \tilde V^2$ for all
$E\in\tilde \tau$. 
We write $A[\tilde V]$ to denote $A[\tau,\tilde
V]$. 

The \emph{Gaifman graph} of a $\tau$ structure $A$ is the
undirected graph with vertex set $V(A)$ in which two elements $v,w$
are adjacent if they are related by some relation of $A$, that is,
$(v,w)\in R(A)$ or $(w,v)\in R(A)$ for some $R\in\tau$. A structure
$A$ is \emph{connected} if its Gaifman graph is connected.

Isomorphisms between $\tau$-structures are defined as bijective
mappings between their universes that preserve all relations. We write
$A\cong A'$ to denote that $A$ and $A'$ are isomorphic. Structures of
distinct vocabularies are non-isomorphic by definition. A
\emph{property} $\CP$ of structures is an
isomorphism closed class of structures. If all structures in $\CP$
have the same vocabulary $\tau$, then $\CP$ is a property \emph{of
$\tau$-structures}. An
\emph{invariant} for a class $\CC$ of structures (that we usually
assume to be closed under isomorphism) is a
mapping $\CI$ with domain $\CC$ such that $A\cong A'\implies
\CI(A)=\CI(A')$. If the converse also holds, that is, $A\cong A'\iff
\CI(A)=\CI(A')$, then $\CI$ is a \emph{complete invariant} for $\CC$. A
\emph{canonical form} is a complete invariant $\Can$ whose range also
consists of structures from $\CC$ and that satisfies $A\cong \Can(A)$ for all
$A$.

When carrying out computations on structures, we need to fix an
encoding by binary strings. One way of doing this is to first specify
the vocabulary, as a list of binary strings representing the relation
symbols, then the universe, also as a list of binary strings
representing the elements, and then the actual relations as lists of
pairs of strings. The details of this encoding are not
important. However, it is important to note that this encoding is
\emph{not} canonical: isomorphic structures may end up with different
string encodings. Moreover, the encoding depends on how we represent
the elements of the universe by binary strings, implicitly fixing a
linear order on the universe. Obviously, the output of an algorithm computing a
property or invariant of abstract structures must not depend on this
choice.

\subsection*{Coherent Configurations and the Weisfeiler-Leman Algorithm}
Let $\sigma$ be a vocabulary. A \emph{coherent $\sigma$-configuration} $C$ is a $\sigma$-structure $C$
with the following properties.
\begin{itemize}
  \item $\{R(C)\mid R\in\sigma \}$ is a partition of $V(C)^2$. In
    particular, all relations $R(C)$ must be nonempty.
  \item For each $R\in\sigma$ the relation $R(C)$ is either a
  subset of or disjoint from the diagonal $\diag(V(C))$.
\item For each $R\in\sigma$ there is an
$R^{-1}\in\sigma$ such that $R^{-1}(C) = R(C)^{-1}$.
\item For all triples $R_1,R_2,R_3\in\sigma$ there is a number
$q = q(R_1, R_2 ,R_3)\in\NN$ such that for all $(u,v)\in R_1(C)$ there are
exactly $q$ elements $w \in V(C)$ such that $(u,w)\in R_2(C)$ and $(w,v)\in
R_3 (C)$.
\end{itemize}
The numbers $q(R_1,R_2,R_3)$ are called the \emph{intersection numbers} of
$C$ and the function $q\colon\sigma^3\to\NN$ is called the
\emph{intersection function}.

We say that a coherent $\sigma$-configuration~$C$ is at least as
\emph{fine} as, or \emph{refines}, a
$\tau$-structure $A$ (we write $C\sqsubseteq A$) if $V(C)=V(A)$ and for each $R\in\sigma$
and each $E\in\tau$
it holds that $R(C)\subseteq E(A)$ or
$R(C)\subseteq V(A)^2\setminus E(A)$.
Conversely, we say that $A$ is
at least as \emph{coarse} as, or \emph{coarsens}, $C$.
Two coherent configurations $C,C'$ are \emph{equally fine}, written $C\equiv C'$, if
$C\sqsubseteq C'$ and $C'\sqsubseteq C$. In this case, the coherent structures are equal up to a renaming of the vertices and
the relation symbols.
We say that a coherent configuration $C$ is a \emph{coarsest coherent configuration refining} a
structure $A$ if $C\sqsubseteq A$ and $C'\sqsubseteq C$ for every coherent
configuration $C'$ satisfying~$C'\sqsubseteq A$.
If both $C,C'$ are coarsest coherent configurations refining $A$, then $C\equiv C'$.

\begin{theo}[\cite{immlan90,weilem68}]
  For every binary structure $A$ there is a coarsest coherent
  configuration $C$ refining $A$, and given $A$ it can be computed in
  polynomial time (time $\CO(n^3\log n)$, to be precise).
\end{theo}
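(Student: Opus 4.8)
The plan is to construct $C$ as the stable colouring of a $2$-dimensional {\WL}-style refinement on pairs, and then to prove separately that this stable colouring is a coherent configuration refining $A$ and that every coherent configuration refining $A$ is at least as fine as it. Writing $n=|V(A)|$, I would start from the colouring $\chi_0$ of $V(A)^2$ that gives $(u,v)$ and $(u',v')$ the same colour precisely when $u=v\Leftrightarrow u'=v'$ and, for every $E\in\tau$, both $(u,v)\in E(A)\Leftrightarrow(u',v')\in E(A)$ and $(v,u)\in E(A)\Leftrightarrow(v',u')\in E(A)$; this is the coarsest colouring refining $A$ that is compatible with converses and with the diagonal. The refinement step sends $\chi_i$ to the colouring $\chi_{i+1}$ in which $(u,v)$ is assigned $\chi_i(u,v)$ together with the multiset $\{(\chi_i(u,w),\chi_i(w,v))\mid w\in V(A)\}$, colourings being identified up to renaming. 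This step only ever splits classes, so the induced partitions refine one another monotonically and, as there are at most $n^2$ pairs, stabilise after at most $n^2$ steps at some $\chi_\infty$; I take $\sigma$ to have one symbol per colour of $\chi_\infty$ and $R(C)$ to be the corresponding class.

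To see that $C$ is a coherent configuration refining $A$ I would verify the four axioms. The classes partition $V(A)^2$ and are nonempty by construction; each class is a subset of or disjoint from $\diag(V(A))$ because $\chi_0$ already separates the diagonal and refinement preserves this; the classes are closed under converse, which follows once one shows by induction on $i$ that $\chi_i(u,v)$ determines $\chi_i(v,u)$ (the base case is built into $\chi_0$, and in the induction step the multiset attached to $(v,u)$ is obtained from the one attached to $(u,v)$ by reversing and swapping each pair); and the intersection-number axiom is exactly the assertion that $\chi_\infty$ is stable, i.e.\ that $|\{w\mid(u,w)\in R_2(C),(w,v)\in R_3(C)\}|$ is constant as $(u,v)$ ranges over a class $R_1(C)$, the common value being $q(R_1,R_2,R_3)$.

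For coarseness I would take an arbitrary coherent configuration $C'$ with $C'\sqsubseteq A$, let $\chi'$ be the colouring induced by its relations, and show by induction that $\chi'$ refines every $\chi_i$. For $\chi_0$ this uses that $C'$ refines $A$, separates the diagonal (second axiom), and, being closed under converse (third axiom), also separates pairs according to the atomic type of their reverse. For the step, if $(u,v),(u',v')$ lie in a common $C'$-class $R_1(C')$ then, since by induction every $\chi_i$-class is a union of $C'$-classes, the count $|\{w\mid\chi_i(u,w)=b,\chi_i(w,v)=c\}|$ equals the sum of the intersection numbers $q(R_1,R_2,R_3)$ over all $C'$-classes $R_2,R_3$ with $R_2(C')$ contained in the $b$-coloured pairs and $R_3(C')$ in the $c$-coloured pairs; this depends only on $R_1$, so $\chi_{i+1}(u,v)=\chi_{i+1}(u',v')$. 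Hence $\chi'$ refines $\chi_\infty$, i.e.\ $C'\sqsubseteq C$, which also gives uniqueness of the coarsest coherent configuration up to $\equiv$.

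For the running time, a naive round-by-round evaluation of the refinement step is too slow ($\CO(n^3)$ per round, up to $n^2$ rounds), so I would compute $\chi_\infty$ by a partition-refinement procedure in the style of Paige and Tarjan: maintain the current partition of $V(A)^2$ together with a worklist of classes still to be used as splitters, where using a class $Z$ as a splitter refines each pair $(u,v)$ by the sparse histogram $\big(|\{w\mid(u,w)\in Y,(w,v)\in Z\}|\big)_Y$ over the current colours $Y$ (and symmetrically with the two legs exchanged). Building all these histograms costs $\CO(n\,|Z|)$, and relabelling the resulting signatures costs the same by radix sort since colour identifiers and counts are polynomially bounded; after a split only the new parts are re-enqueued, and by Hopcroft's ``process the smaller part'' trick it suffices to enqueue all but one largest new part, so each pair lies in an enqueued splitter only $\CO(\log n)$ times and the total cost is $\CO(n^3\log n)$. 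The hard part will be exactly this running-time analysis: making the Paige--Tarjan amortisation rigorous requires the usual two-level (compound/simple-block) bookkeeping so that ``using $Z$ as a splitter'' stays meaningful while the partition changes under it, and one has to verify that running every colour through the worklist in both leg-roles, together with the splits it triggers, really drives the partition all the way down to the coherent closure and not merely to a strictly coarser fixed point; this is standard but delicate, and I would follow \cite{immlan90} for the details.
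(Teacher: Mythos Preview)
The paper does not actually prove this theorem; it is stated with citations to \cite{immlan90,weilem68} and used as a black box throughout. Your proposal is the standard argument and is correct in outline: the $2$-{\WL} refinement on $V(A)^2$ stabilises to the coarsest coherent configuration refining $A$, and the $\CO(n^3\log n)$ bound comes from a Paige--Tarjan style implementation as in \cite{immlan90}. There is nothing to compare against in the paper itself, and your sketch matches what the cited references do.
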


A \emph{coherently $\sigma$-coloured $\tau$-structure}
is a pair $(A,C)$ consisting of a $\tau$-structure $A$ and a coherent
$\sigma$-configuration $C$ refining $A$ (and thus it holds $V(A)=V(C)$).
Unless explicitly stated otherwise, we always
assume that the vocabulary of $A$ is $\tau$ and the vocabulary of $C$
is $\sigma$, and we say that the vocabulary of $(A,C)$ is $(\tau,\sigma)$. 
We call the relation symbols in $\sigma$ \emph{colours},
whereas we keep calling the symbols in $\tau$ relation symbols. We usually
denote colours (from $\sigma$) by $R$ and relation symbols (from
$\tau$) by $E$.

We define the \emph{symbolic subset relation} of a coherently coloured
structure $(A,C)$ to be the binary relation
$\subseteq_{\sigma,\tau}=\{(R,E)\in\sigma\times\tau\mid R(C)\subseteq E(A)\}\subseteq
\sigma\times\tau$. We often omit the subscripts and just write
$R\subseteq E$ instead of $R\subseteq_{\sigma,\tau} E$.
 The
\emph{algebraic sketch}
of a coherently coloured
structure $(A,C)$ is the tuple
\[
D(A,C)=(\tau,\sigma,\subseteq_{\sigma,\tau},q)
\]
consisting of the 
vocabularies $\tau$, $\sigma$, the symbolic subset relation
$\subseteq_{\sigma,\tau}$, and the intersection function
$q\colon\sigma^3\to\NN$ of $C$.

The next lemma says that for all coherently coloured structures
$(A,C)$, we can choose a canonical coarsest coherent configuration $C(A)$ in the
set $\{C'\mid C'\sqsubseteq A\}$. 

\begin{lem}\label{lem:canData}
There is a polynomial-time algorithm that, for a given algebraic sketch $D(A,C')$,
computes the algebraic sketch of $D(A,C)$ of a coherently
coloured structure $(A,C)$ where $C$ is a canonical coarsest coherent
configuration of $A$. 
\end{lem}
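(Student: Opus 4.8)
The plan is to carry out the stable refinement underlying the Weisfeiler--Leman algorithm entirely on the level of algebraic sketches, and then to rename the colours canonically. Write $(\tau,\sigma)$ for the vocabulary of the input sketch $D(A,C')$. Since $(A,C')$ is a coherently coloured structure, $C'$ is a coherent configuration refining $A$, and, as recalled above, there is a coarsest coherent configuration $C_A$ refining $A$; hence $C'\sqsubseteq C_A$, so every colour of $C_A$ is a union of colours of $C'$. It therefore suffices to identify, from $D(A,C')$ alone, the partition of $\sigma$ whose blocks are the colour classes of (a canonical copy of) $C_A$; the new symbolic subset relation and the new intersection function can then simply be read off. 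It is worth stressing that the algorithm never gets to see $A$ or $C'$ themselves, only the finite sketch, so all of this must be done by combinatorics on $\sigma$, $\subseteq_{\sigma,\tau}$ and $q$.

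First I would form the initial partition $\mathcal P_0$ of $\sigma$ in which two colours $R,R'$ lie together iff they have the same diagonal status and $R\subseteq E\iff R'\subseteq E$ for all $E\in\tau$, where the linear order on $\tau$ lets us read off this profile canonically; here the diagonal status, and likewise the converse map $R\mapsto R^{-1}$, are recoverable from $q$ by standard properties of coherent configurations. Then I would iteratively refine the current partition $\mathcal P$, separating $R$ from $R'$ whenever $R^{-1}$ and $(R')^{-1}$ lie in different blocks, or whenever $\sum_{S\in\mathbf S,\,T\in\mathbf T}q(R,S,T)\neq\sum_{S\in\mathbf S,\,T\in\mathbf T}q(R',S,T)$ for some blocks $\mathbf S,\mathbf T$ of $\mathcal P$. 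This stabilises after at most $|\sigma|$ rounds, each computable in time polynomial in the size of $D(A,C')$, and yields a coarsest stable partition, which I again call $\mathcal P$.

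For correctness I would establish two facts. First, the quotient $C$ of $C'$ obtained by merging colours according to any stable partition is a coherent configuration refining $A$: blockwise constancy of the subset profile yields $R(C)\subseteq E(A)$ or $R(C)\cap E(A)=\emptyset$ for all $E$; blockwise constancy of the diagonal status and closure under converses give the remaining structural axioms; and for $(u,v)$ in a merged colour $\mathbf R$ the number of $w$ with $(u,w)\in\mathbf S(C)$ and $(w,v)\in\mathbf T(C)$ equals $\sum_{S\in\mathbf S,\,T\in\mathbf T}q(R,S,T)$ for the colour $R$ of $(u,v)$ in $C'$, which by stability is independent of the choice of $R\in\mathbf R$, so the intersection numbers are well defined. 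Second, every coherent configuration refining $A$ that coarsens $C'$ arises from a stable partition, and, by the usual monotonicity of the refinement operator, the partition $\mathcal P$ produced above is refined by every stable partition; hence it is the coarsest one, so the quotient $C$ is at least as coarse as $C_A$. Since $C$ is itself a coherent configuration refining $A$ and $C_A$ is the coarsest such, $C$ is also at most as coarse as $C_A$, so $C\equiv C_A$. The algebraic sketch of $(A,C)$ is now immediate: the colours are the blocks of $\mathcal P$, the symbolic subset relation records the common profile of each block, and the intersection numbers are the sums above.

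It remains to make the choice of $C$ canonical. For this I would relabel its colours: first order the blocks of $\mathcal P$ by a colour-refinement-style invariant (initialised by diagonal status and subset profile, then refined using the quotient intersection numbers), breaking any residual ties by the linear order on the original colour symbols of $\sigma$, which is available because relation symbols are binary strings; then rename the colours to a canonical initial segment of $\{0,1\}^*$. I expect this last step to be the main obstacle: one has to check that the relabelling is genuinely canonical, so that isomorphic inputs yield literally identical output sketches. Everything else is a routine transcription of the stable-refinement procedure to the symbolic level, the only other point needing (minor) care being the recovery of the diagonal colours and of the converse map from $q$.
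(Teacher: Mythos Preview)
Your approach is essentially the same as the paper's: iteratively coarsen the colours of $C'$ by merging those with identical $\tau$-profile and identical block-summed intersection-number behaviour, then assign canonical names. The paper does this in one pass by defining a sequence of total quasiorders $\preceq_0,\preceq_1,\ldots$ on $\sigma$ (initialised by the $\tau$-profile $\pi_R=\{E\in\tau\mid R\subseteq E\}$, then refined via the sets $\{([R_2]_i,[R_3]_i,\sum q(R,R_2',R_3'))\}$), rather than separating the partition phase from the ordering phase as you do; but the underlying combinatorics is the same.

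The gap you flag is real in the following sense: your proposed tie-breaker would be fatal if it were ever invoked, because the lexicographic order on the symbols of $\sigma$ depends on the input $C'$, not only on $A$, so two different refinements $C',C''$ of the same $A$ could yield different outputs. The resolution is simply that there are no residual ties. Your colour-refinement-style invariant on the blocks of $\mathcal P$ is exactly the stable-refinement procedure run with $C'$ replaced by the quotient $C\equiv C_A$ itself; the equivalence classes of the stabilised invariant therefore again form a stable partition and hence a coherent configuration refining $A$ that coarsens $C_A$. Since $C_A$ is already coarsest, these classes are singletons, so every block receives a distinct invariant and the ordering is total without any appeal to $\sigma$. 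You should state and use this observation rather than leave the tie-breaker in place. The paper sidesteps the whole issue by noting that a total quasiorder automatically linearly orders its own equivalence classes, so partition and canonical order emerge together from a single refinement.
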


The assertion that $C$ is \emph{canonical} means that
$C=C(A)$ only depends on $A$, i.e., for
algebraic sketches $D(A,C'),D(A,C'')$ the algorithm has the same
output.
We also write $D(A)$ to denote the algebraic sketch $D(A,C(A))$.

In fact, the previous lemma implies that we can choose
a string encoding for $D(A)=D(A,C(A))$ canonically. Formally, this means that we have a function $\enc$ mapping each structure $A$ to a binary string $\enc(A)$ representing $D(A)$ 
such that for isomorphic structures $A,A'$ we have
$\enc(D(A))=\enc(D(A'))$.
To obtain a canonical string encoding, we have to explain how to encode algebraic sketches.
Algebraic sketches are tuples consisting of sets and relations on
binary strings and natural numbers and as such can be encoded by
binary strings.
We encode the natural numbers using the unary representation.
With a unary representation the encoding size of the sketch of a coherently coloured structure $(A,C(A))$
and the encoding size $n:=|D(A)|$ are polynomially bounded in each other.
This will be useful later.

\section{Deep Weisfeiler Leman}
\label{sec:dwl}

A \DWL-algorithm is a 2-tape Turing machine $M$ with an additional
storage device $\reg_\conf$, called \emph{cloud},
 that maintains a coherently
coloured structure $(A,C(A))$. The machine has a work tape $\tap_\work$
and an \emph{interaction tape} $\tap_\query$
that allows a limited
form of interaction with the coherently coloured structure in the
cloud $\reg_\conf$.

The input of a \DWL-algorithm $M$ is a structure $A$
(the vocabulary $\tau$ of $A$ does not need to be fixed and can vary across the inputs).
For the starting configuration of
$M$ on input $A$, the machine is initialised with
the coherently coloured structure $(A,C(A))$ in the cloud
and with the algebraic sketch $D(A)=D(A,C(A))$ (canonically encoded as a string) on the interaction tape. The work tape is initially empty.
The Turing machine never has direct access to the structures in its
cloud, but it operates on relation symbols and vocabularies. (Recall
our assumption that relation symbols are binary strings.)

The Turing machine works as a standard 2-tape Turing machine.
Additionally, there are three particular transitions that can modify
the coherently coloured structure in the cloud. 
   For such transitions, the Turing machine writes a
relation symbol $X\in\tau\cup\sigma$ or a set of colours
$\pi\subseteq\sigma$ on the interaction tape and enters one of the
four states $q_\addpair$, $q_\addcom$, $q_\addunion$ and $q_\forget$.
We say that the Turing machine \emph{executes} $\addpair(X)$,
$\addcom(X)$, $\addunion(\pi)$, $\forget(X)$, respectively.  These
transitions modify the structure $A$ that is stored in the cloud.  In
particular, they can create new relations and possibly new elements
that are added to the structure.

\paragraph*{$\addpair(X)$.}
The state $q_\addpair$ can be entered while $X\in\tau\cup\sigma$
is written on the interaction tape. 
If $X=E\in\tau$ is a relation symbol, let $P:=E(A)$, otherwise if
$X=R\in\sigma$ is a colour, let $P:=R(C(A))$.  In this case, the
machine will add a fresh vertex to the universe for each of the pairs
contained in $P$.  Formally, we update $V(A) \leftarrow V(A)\cupdot P$
(where $\cupdot$ denotes the disjoint union operator which we assume
to be defined in some formally correct way, but we never worry about
the identity (or name) of the elements in the disjoint union).  Next,
we will create relations that describe how the fresh vertices relate
to the old universe.  We update
$\tau \leftarrow \tau\cup\{E_\lef,E_\rig\}$ and define $D_X$ to be the
lexicographically first binary string that is not already contained in
$\tau$ and then we update $\tau\leftarrow \tau\cupdot \{D_X\}$ again.
The relation $D_X$ describes the fresh vertices: $D_X(A):=\diag(P)$.
The relations $E_\lef(A),E_\rig(A)$ describe how the fresh vertices
relate to the old universe:
$E_\lef(A) \leftarrow E_\lef(A)\cup\{(u,(u,v))\in V(A)^2\mid (u,v)\in
P\}$ and
$E_\rig(A) \leftarrow E_\rig(A)\cup\{(v,(u,v))\in V(A)^2\mid (u,v)\in
P\}$ (in case that $E_\lef,E_\rig$ were not already defined, we
initialise $E_\lef(A),E_\rig(A)$ with the empty set before we take the
union).

\paragraph*{$\addcom(X)$.}
The state $q_\addcom$ can be entered while $X\in\tau\cup\sigma$
is written on the interaction tape. 
We will define a set $\CS:=\sccs(U)$ consisting of strongly connected
components.  If $X=E\in\tau$ is a relation symbol, let
$\CS:=\sccs(E(A))$, otherwise if $X=R\in\sigma$ is a colour, let
$\CS:=\sccs(R(C(A)))$.  Let $U:=V(A)\setminus\bigcup \CS$.  Next, we
will contract these components: we update
$V(A)\leftarrow U\cupdot \CS$.  Let $D_X$ be the lexicographically
first binary string that is not already contained in $\tau$ and update
$\tau\leftarrow \tau\cupdot\{D_X\}$.  The relation $D_X$ describes the
fresh vertices: $D_X(A):=\diag(\CS)$.  We update the relations for
each $E\in\tau$ and set
$E(A)\leftarrow (E(A)\cap U^2)\cup\{(u,S)\mid \exists v\in
S\in\CS\colon\; (u,v)\in E(A)\}\cup \{(S,v)\mid \exists u\in
S\in\CS\colon\; (u,v)\in E(A)\}\cup\{(S_1,S_2)\mid\exists u\in S_1\in
\CS\exists v\in S_2\in \CS\colon\; (u,v)\in E(A)\}$.
  
\paragraph*{$\addunion(\pi)$.}
The state $q_\addunion$ can be entered while $\pi\subseteq\sigma$ is
written on the
interaction tape.
Let $E_\pi$ be the lexicographically first binary string that is not
already contained in $\tau$ and then update
$\tau:=\tau\cupdot\{E_\pi\}$ where
$E_\pi(A):=\bigcup_{R\in\pi}R(C(A))$.
  
\paragraph*{$\forget(X)$.}
The state $q_\forget$ is entered while $X=E\in\tau$ is written on the interaction tape.
   We update $\tau\leftarrow \tau\setminus\{E\}$.
   
Each of these four transitions therefore modify the structure $A$ in the cloud.
After such a transition, the machine recomputes the coarsest coherent configuration $C(A)$ refining $A$.
The coherently coloured structure $(A,C(A))$ is stored in the cloud and
the algebraic sketch $D(A)$ (canonically encoded as a
string) is written on the interaction tape.

\medskip

Let us define the running time of \DWL-algorithms.
Recall that the input of the underlying Turing machine
is the algebraic sketch $D(A)$.
For the running time we take the following costs into account.
Each transition taken by the Turing machine counts as one time step.
For an input structure $A$, we take $n$-many steps into account to write down the initial algebraic sketch $D(A)$ to the tape (where $n=|D(A)|$ is the encoding length of $D(A)$).
Recall, that the intersection numbers are encoded using unary representation and therefore
each \DWL-algorithm needs at least linear time (in $|V(A)|$).
Similar, we also take $n'$-many steps into to write down the updated algebraic sketch $D(A')$ to the interaction tape
(where $n'=|D(A')|$ is the encoding length of $D(A')$).
We say that a \DWL-algorithm $M$ runs in polynomial time if there is a polynomial $p$ in $n=|D(A)|$ that bounds the running time of $M$.
The definition of polynomial time remains unchanged if we take polynomial costs into account for maintaining the cloud (such as
the running time of the Weisfeiler-Leman algorithm).

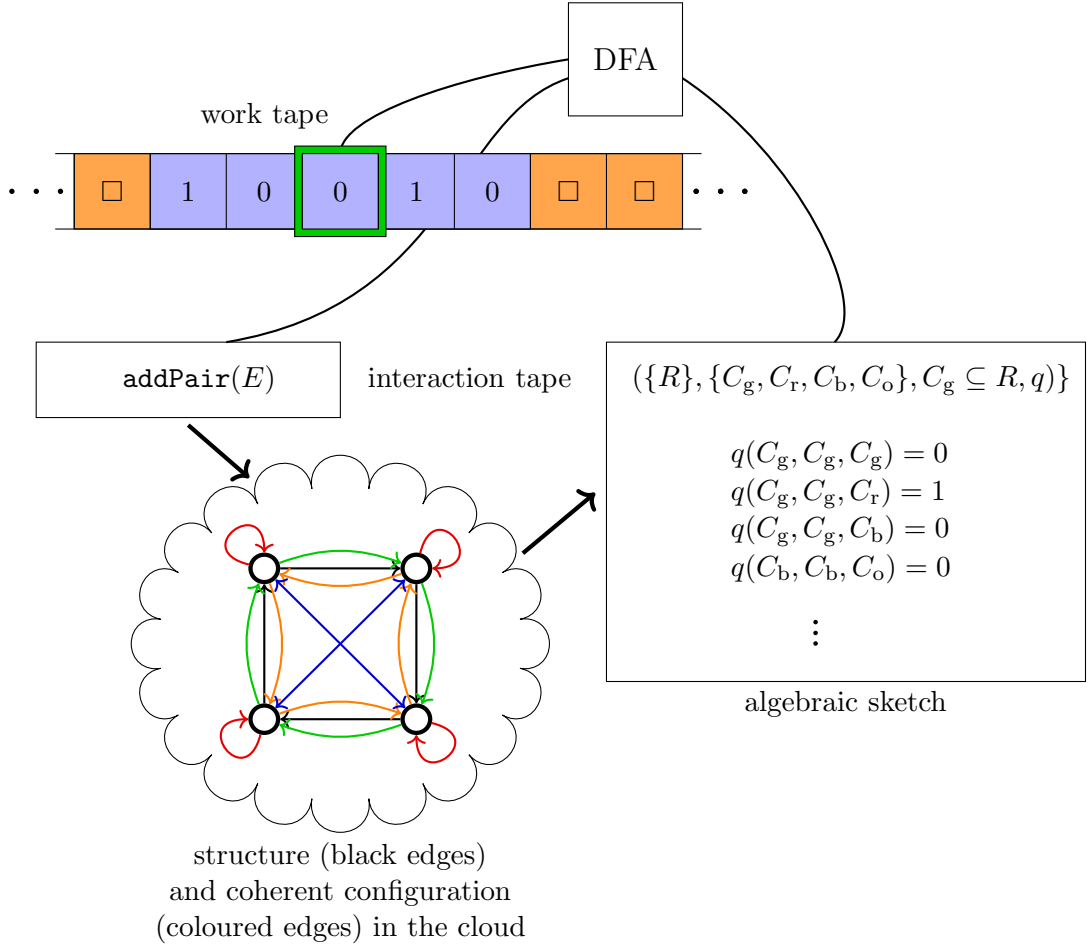
\begin{figure}
\centering
\begin{tikzpicture}[scale=1]

\draw  (2.5,7.5) rectangle (4,6);
\node at (3.25,6.75) {\large DFA};
\draw[thick] (2.5,6.75) .. controls (1,6.5) and (-0.5,6) .. (-0.5,5.5);
\draw[thick] (2.5,6.5) .. controls (1,6) and (1,3.5) .. (-2,3);
\draw[thick] (4,6.5) .. controls (5.5,5.5) and (6.5,3.5) .. (6,3);

\draw[fill=orange!70]  (-4,5.5) rectangle (4,4.5);
\draw[fill=blue!30]  (-3,5.5) rectangle (2,4.5);
\draw (-4.25,5.5) -- (4.25,5.5);
\draw (-4.25,4.5) -- (4.25,4.5);

\draw[fill=green!80!black]  (-1.1,5.6) rectangle (0.1,4.4);
\draw[fill=blue!30]  (-1,5.5) rectangle (0,4.5);

\node at (-1.5,6) {work tape};

\foreach \x in {-4,-3,-2,-1,0,1,2,3,4}
{
\draw (\x,5.5) -- (\x,4.5);
}

\foreach \x/\text in {-3.5/\square,-2.5/1,-1.5/0,-0.5/0,0.5/1,,1.5/0,2.5/\square,3.5/\square}
{
\node at (\x,5) {$\text$};
}
\node at (4.5,5) {\huge$\ldots$};
\node at (-4.5,5) {\huge$\ldots$};

\node[cloud, cloud puffs=20,cloud puff arc=200, aspect=2, draw, minimum width=5.5cm, minimum height = 5cm
    ]  at (-0.5,-1) {};

\node[text width=5.5cm,text centered] at (-0.5,-4.3) {structure (black edges) and coherent configuration
  (coloured edges) in the cloud};
\draw  (3,3) rectangle (9.3,-1.5);
\node at (6.15,-1.8) {algebraic sketch};
\draw  (-4.5,3) rectangle (-0.5,2);
\node at (1.2,2.5) {interaction tape};
\node[anchor=west] at (3.2,2.5) {$(\{R\},\{C_{\text{g}}, C_{\text{r}}, C_{\text{b}}, C_{\text{o}}\}, C_{\text{g}}\subseteq R,q)\}$};

\node[draw, circle, fill=white,ultra thick] (v2) at (0.5,0) {};
\node[draw, circle, fill=white,ultra thick] (v3) at (0.5,-2) {};
\node[draw, circle, fill=white,ultra thick] (v4) at (-1.5,-2) {};
\node[draw, circle, fill=white,ultra thick] (v1) at (-1.5,0) {};

\foreach \x/\y in {1/2,2/3,3/4,4/1}
{
\draw[thick, ->] (v\x) edge (v\y);
\draw[thick, ->, green!80!black] (v\x) edge[bend left=20] (v\y);
\draw[thick, <-, orange] (v\x) edge[bend left=-20] (v\y);

}
\draw[thick, <->, blue!80!black] (v4) -- (v2);
\draw[thick, <->, blue!80!black] (v1) -- (v3);

\draw[thick, ->, red!90!black] (v1) .. controls (-2.5,0.25) and (-1.5,1) .. (v1);
\draw[thick, ->, red!90!black] (v2) .. controls (0.75,1) and (1.5,0) .. (v2);

\draw[thick, ->, red!90!black] (v3) .. controls (1.5,-2.25) and (0.5,-3) .. (v3);
\draw[thick, ->, red!90!black] (v4) .. controls (-1.75,-3) and (-2.5,-2) .. (v4);
\node[anchor=west] at (4.5,1.5) {$q(C_\text{g},C_\text{g},C_\text{g})=0$};
\node[anchor=west] at (4.5,1) {$q(C_\text{g},C_\text{g},C_\text{r})=1$};
\node[anchor=west] at (4.5,0.5) {$q(C_\text{g},C_\text{g},C_\text{b})=0$};
\node[anchor=west] at (4.5,0) {$q(C_\text{b},C_\text{b},C_\text{o})=0$};

\node at (4.5,5) {\huge$\ldots$};
\node at (5.75,-0.75) {\huge$\vdots$};

\node[anchor =west] at (-3.5,2.5) {$\addpair(E)$};

\draw[ultra thick,->] (-2.5,1.9) -- (-1.7,1.2);

\draw[ultra thick,->] (1.9,0.2) -- (2.85,1);

\end{tikzpicture}%
\caption{A \DWL\ Turing machine}
\end{figure}

\subsection*{Isomorphism Invariance}
Whenever a \DWL-computa\-tion creates new elements, there is an issue
with the identity (or name) of these new elements. This problem is implicit in
our usage of the disjoint union operator in the definitions. We do not
have to worry about this, because \DWL-algorithms never have direct
access to the structure in the cloud and only depend on its
isomorphism types. All we need to make sure is that in every step the newly created
elements are distinct from the existing ones. 

Since it will be very important throughout the paper, let us state the
invariance condition more formally. We define the \emph{internal run} of a
\DWL-algorithm $M$ on input $A$ to be the sequence of configurations of the
underlying Turing machine without the content of the cloud.
Then the internal run of a \DWL-algorithm is 
isomorphism invariant: if $A$ and $A'$ are isomorphic $\tau$-structures, then $D(A)=D(A')$ and
therefore the internal run of a \DWL-algorithm $M$ on input $A$ is identical with
the internal run of $M$ on input $A'$. The contents of the clouds in
the computations of $M$ on inputs $A$ and~$A'$ may differ, but they are isomorphic
in the corresponding steps of the computations.

\subsection*{\DWL-Decidability and Computability}
A~\DWL-algorithm \emph{accepts} an input if the algorithm halts with 1 written under the head on the work tape.
A \DWL-algorithm \emph{decides} a property $\CP$ of structures
if it accepts an input structure $A$ whenever $A\in\CP$ and rejects $A$
otherwise.
(We do not require that all structures in $\CP$ have the same vocabulary.)

A \DWL-algorithm \emph{computes} a function $\CI\colon\CA\to\{0,1\}^*$ for a
class $\CA$ of structures if on input $A\in\CA$ it
stops with $\CI(A)\in\{0,1\}^*$ on the work tape. Observe that if
$\CI\colon\CA\to\{0,1\}^*$ is
\DWL-computable then it is an invariant, that is,  if $A,A'\in\CA$ are
isomorphic then $\CI(A)=\CI(A')$.

Let $\CE$ be a function that maps each structure $A$ to a binary
relation $\CE(A)$ over $V(A)$. A \DWL-algorithm
\emph{computes} $\CE$ if on input $A\in\CA$ it stops with a coherently
$\sigma'$-coloured $\tau'$-structure $(A',C(A'))$ and
the encoding of a relation symbol $E'\in\tau'$ on the work tape
such that $E'(A')=\CE(A)$.

Let $\CS$ be a function that maps each structure $A$ to  a subset $\CS(A)\subseteq V(A)$.
A \DWL-algorithm \emph{computes} $\CS$ if it computes $\CE(A):=\diag(\CS(A))$.

\subsection*{Basic \DWL{}-computable functions}
\label{sec:basic}
In the following three lemmas, we collect a few basic properties and
functions that are \DWL-computable.

\begin{lem}\label{lem:op1}
Let $E_1,E_2\in\{0,1\}^*$ be two relation symbols. Then the following
functions on the class of all $\tau$-structures with $E_1,E_2\in\tau$ are
\DWL-computable in polynomial time.
\begin{enumerate}
  \item\label{e:cup} $\CE_\cup(A):=E_1(A)\cup
  E_2(A)$.
  \item\label{e:cap} $\CE_\cap(A):=E_1(A)\cap
  E_2(A)$.
  \item\label{e:setminus} $\CE_\setminus(A):=E_1(A)\setminus E_2(A)$.
  \item\label{e:diag} $\CE_\diag(A):=\diag(V(A))$.%
  \item\label{e:conv} $\CE_{-1}(A):=E_1(A)^{-1}$.%
  \item\label{e:circ} $\CE_\circ(A):=E_1(A)\circ E_2(A)$.
  \item\label{e:scc} $\CE_\scc(A):=E_1(A)^\scc$.
\end{enumerate}
Moreover, all the above functions are computable by polynomial-time
\DWL-algorithms that do not add any new vertices to the structure.
\end{lem}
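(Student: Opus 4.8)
The plan is to realise each of the seven operations by a short sequence of \DWL-transitions that manipulate only the vocabulary $\tau$ and the relation symbols, never entering the states $q_\addpair$ or $q_\addcom$ (which are the only transitions that create vertices). The key tool is the $\addunion(\pi)$ transition together with the observation that the algebraic sketch $D(A)$ on the interaction tape tells the machine exactly which colours $R\in\sigma$ satisfy $R(C(A))\subseteq E(A)$ for each relation symbol $E\in\tau$ --- this is precisely the symbolic subset relation $\subseteq_{\sigma,\tau}$. Since $\{R(C(A))\mid R\in\sigma\}$ partitions $V(A)^2$ and refines $A$, every relation that is a Boolean combination of the $E_i(A)$ is a union of colour classes, and the machine can read off from the sketch which colours to union. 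I would prove the easy cases first and then treat $\CE_\circ$ and $\CE_\scc$, where the refinement has to be exploited more carefully.

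\textbf{The Boolean cases (\ref{e:cup}, \ref{e:cap}, \ref{e:setminus}) and the diagonal (\ref{e:diag}).}
For $\CE_\cup$: the machine inspects the sketch, computes $\pi:=\{R\in\sigma\mid R\subseteq E_1\text{ or }R\subseteq E_2\}$ (a purely syntactic computation on the interaction tape), executes $\addunion(\pi)$, and outputs the fresh symbol $E_\pi$; since $C(A)$ refines $A$ we have $E_\pi(A)=\bigcup_{R\in\pi}R(C(A))=E_1(A)\cup E_2(A)$. For $\CE_\cap$ take $\pi:=\{R\mid R\subseteq E_1\text{ and }R\subseteq E_2\}$, for $\CE_\setminus$ take $\pi:=\{R\mid R\subseteq E_1\text{ and }R\not\subseteq E_2\}$; here one must note that $R(C(A))\subseteq E_2(A)$ or $R(C(A))\cap E_2(A)=\emptyset$ by the refinement property, so $R\not\subseteq E_2$ in the sketch really means disjointness. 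For $\CE_\diag$, the diagonal $\diag(V(A))$ is a union of colour classes because each $R(C(A))$ is either contained in or disjoint from $\diag(V(C))$ by the second axiom of a coherent configuration and this information ($R(C)\subseteq R^{-1}(C)\circ R(C)$, or more directly $q(R,\cdot,\cdot)$ data and the diagonal flag recoverable from the sketch) is determined by $D(A)$; take $\pi:=\{R\mid R(C(A))\subseteq\diag(V(A))\}$ and $\addunion(\pi)$.

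\textbf{Converse (\ref{e:conv}), concatenation (\ref{e:circ}), and strongly-connected components (\ref{e:scc}).}
For $\CE_{-1}$: by the third coherent-configuration axiom, for each $R\in\sigma$ there is $R^{-1}\in\sigma$ with $R^{-1}(C)=R(C)^{-1}$, and the map $R\mapsto R^{-1}$ is recoverable from the intersection function (it is the unique $S$ with $q(R,S,R_\diag)>0$ for an appropriate diagonal colour, or can be extracted via the sketch more directly); put $\pi:=\{R^{-1}\mid R\subseteq E_1\}$ and $\addunion(\pi)$, giving $E_\pi(A)=E_1(A)^{-1}$. For $\CE_\circ$: the crucial point is that $R_1(C)\circ R_2(C)$ is a union of colour classes --- this is exactly the coherence/algebra property, and which colours $R_3$ appear in $R_1(C)\circ R_2(C)$ is detected by $q(R_3,R_1^{-1},R_2)>0$ (equivalently $q(R_1,R_2,R_3^{-1})>0$), all readable from $q$ in the sketch. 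So the machine computes $\pi:=\{R_3\in\sigma\mid\exists R_1\subseteq E_1,\ \exists R_2\subseteq E_2\colon\ q(R_1,R_2,R_3^{-1})>0\}$ and executes $\addunion(\pi)$; one checks $E_\pi(A)=\bigcup_{R_1\subseteq E_1,R_2\subseteq E_2}R_1(C)\circ R_2(C)=E_1(A)\circ E_2(A)$. Finally $\CE_\scc$: first isolate $E_1(A)$ as a union of colour classes via $\addunion$ to get a symbol $E$ with $E(A)=E_1(A)$; then the transitive closure $\TC(E(A))$ is obtained by iterating $\CE_\circ$ (using \ref{e:circ} and \ref{e:cup}) $O(\log|V(A)|)$ times --- since the machine always has the current $|V(A)|$ available from the sketch, it can count the iterations and this is polynomial time --- and then $R^\scc$ for $R=E_1(A)$ equals $\TC(E_1(A))\cap\TC(E_1(A))^{-1}$, which is handled by \ref{e:conv} and \ref{e:cap}; as a final cosmetic step one may restrict to $\dom\cap\codom$ but this is again a Boolean/diagonal manipulation already covered. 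None of these steps uses $q_\addpair$ or $q_\addcom$, so no vertices are added.

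\textbf{Main obstacle.}
The only genuinely nontrivial point is justifying that concatenation (and hence transitive closure) of colour classes stays within the span of colour classes and that the relevant structure --- namely which colours occur in a product --- is fully determined by the algebraic sketch rather than by the concrete structure $(A,C(A))$ sitting in the cloud. This is a standard fact about coherent configurations (the "path algebra" generated by the colours), but in the \DWL\ setting it must be phrased so that the machine extracts it purely from $\tau$, $\sigma$, $\subseteq_{\sigma,\tau}$ and $q$. Once that is in place, the running-time bound is routine: each operation is a constant number of transitions plus polynomial-time symbolic bookkeeping on strings of length $\le n$, and the iteration in \ref{e:scc} adds only a logarithmic factor, while the cloud maintenance (recomputing $C(A)$ after each $\addunion$) is polynomial by the theorem of Immerman--Lander and Weisfeiler--Leman cited above.
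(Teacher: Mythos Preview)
Your proposal is correct and follows essentially the same strategy as the paper: realise each operation via a single $\addunion(\pi)$ with $\pi$ read off from the algebraic sketch, and build $\CE_\scc$ from the earlier parts by iterating concatenation to transitive closure and then intersecting with the converse. One small slip to fix: with the paper's convention $q(R_1,R_2,R_3)$ counts, for $(u,v)\in R_1(C)$, the vertices $w$ with $(u,w)\in R_2(C)$ and $(w,v)\in R_3(C)$, so the correct criterion for $\CE_\circ$ is $\pi=\{R\mid \exists R_1\subseteq E_1,\,R_2\subseteq E_2:\ q(R,R_1,R_2)\ge 1\}$---your versions $q(R_3,R_1^{-1},R_2)$ and $q(R_1,R_2,R_3^{-1})$ have the inverses in the wrong places; similarly, where you wave at diagonality being ``recoverable from the sketch'', the paper gives the concrete test $R$ diagonal $\iff q(R',R'',R)=0$ for all $R'\neq R''$.
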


\begin{lem}\label{lem:op2}
Let $E_1,E_2\in\{0,1\}^*$ be two relation symbols. Then the following
properties of $\tau$-structures with $E_1,E_2\in\tau$ are \DWL-decidable in polynomial time.
\begin{enumerate}
  \item\label{e:subseteq} $\CP_{\subseteq}:=\{A\mid E_1(A)\subseteq E_2(A)\}$
    and $\CP_=:=\{A\mid E_1(A)= E_2(A)\}$.
  \item\label{e:=} $\CP_{\|\leq\|}:=\{A\mid |E_1(A)|\leq |E_2(A)|\}$ and $\CP_{\|=\|}:=\{A\mid |E_1(A)|= |E_2(A)|\}$.
\end{enumerate}
Moreover, all the above properties are decidable by polynomial-time
\DWL-algorithms that do not add any new vertices to the structure.
\end{lem}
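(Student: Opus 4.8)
The plan is to decide all four properties without ever performing a cloud transition: each of them is determined by, and computable in polynomial time from, the algebraic sketch $D(A)=(\tau,\sigma,\subseteq_{\sigma,\tau},q)$ that already sits on the interaction tape in the starting configuration. Since no $\addpair$ or $\addcom$ (indeed no cloud transition at all) is used, the ``adds no new vertices'' clause is then automatic. Two facts drive this. First, since $C(A)$ refines $A$ and the colour classes $\{R(C(A))\mid R\in\sigma\}$ partition $V(A)^2$, each relation of $A$ is the \emph{disjoint} union of the colour classes it contains: $E_k(A)=\bigcup\{R(C(A))\mid R\subseteq E_k\}$, with the index set $\{R\mid R\subseteq E_k\}$ read off directly from the symbolic subset relation $\subseteq_{\sigma,\tau}$ in the sketch. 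Second, the cardinality $|R(C(A))|$ of every colour class can be computed in polynomial time from the intersection function $q$ alone; this is part of the standard structure theory of coherent configurations \cite{DBLP:books/daglib/0037866}, and I would spell out just enough of it below.

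Granting these two facts, the lemma is immediate. For part~\ref{e:subseteq}: since colour classes are nonempty, $E_1(A)\subseteq E_2(A)$ holds iff $\{R\mid R\subseteq E_1\}\subseteq\{R\mid R\subseteq E_2\}$, and $E_1(A)=E_2(A)$ holds iff these two sets of colours coincide, so the \DWL-algorithm simply inspects $\subseteq_{\sigma,\tau}$. For part~\ref{e:=}: the algorithm computes the integers $|E_1(A)|=\sum_{R\subseteq E_1}|R(C(A))|$ and $|E_2(A)|=\sum_{R\subseteq E_2}|R(C(A))|$ and compares them. All this runs in time polynomial in $n=|D(A)|$, because the intersection numbers are written in unary (so every number occurring is polynomially bounded in $n$), $|\sigma|\le n$, and the required arithmetic and set operations are polynomial-time.

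It remains to explain how to obtain $|R(C(A))|$ from $q$, which is where the only genuine work lies. Call $R\in\sigma$ a \emph{diagonal colour} if $R(C(A))\subseteq\diag(V(A))$; equivalently $R(C(A))\circ R(C(A))=R(C(A))$, which amounts to the arithmetic condition ``$q(R,R,R)>0$ and $q(S,R,R)=0$ for all $S\neq R$'' and is therefore recognisable from the sketch. Let $1_1,\dots,1_m$ be the diagonal colours and put $X_i:=\dom(1_i(C(A)))$; by the standard theory of coherent configurations every colour class is contained in a product $X_i\times X_j$ of two such fibers. For a colour $R$ whose source fiber is $X_i$ --- the unique fiber with $q(1_i,R,R^{-1})>0$ --- the number of out-neighbours of any fixed $u\in X_i$ via $R$ equals $q(1_i,R,R^{-1})$, so $|R(C(A))|=|X_i|\cdot q(1_i,R,R^{-1})$. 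Finally, fixing any diagonal colour $1_1$, one has $|X_i|=\sum_R q(1_1,R,R^{-1})$ with $R$ ranging over all colours satisfying $q(1_1,R,R^{-1})>0$ and $q(1_i,R^{-1},R)>0$, since this sum adds the out-valencies of all colours leading from $X_1$ to $X_i$ and hence counts $|X_i|$ (using that between any two fibers at least one colour class occurs). Each of these quantities is evaluated from $q$ in polynomial time.

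The main --- really the only --- obstacle is the second of the two facts above: one must invoke (or reprove) the portion of coherent-configuration theory asserting that colour classes respect fibers and that fiber sizes and out-valencies are encoded in the intersection numbers. Everything else (the reduction of set inclusion and cardinality comparison to a computation on $D(A)$, the polynomial running time, and the absence of new vertices) follows directly from the partition property of $C(A)$ together with the unary encoding of the sketch.
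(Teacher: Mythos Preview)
Your proposal is correct and follows essentially the same route as the paper: read both properties directly off the algebraic sketch, using the symbolic subset relation for part~\ref{e:subseteq} and computing colour-class cardinalities from the intersection numbers for part~\ref{e:=}. The only cosmetic differences are that the paper detects diagonal colours via the condition $q(R',R'',R)=0$ for all $R'\neq R''$ (rather than your $R\circ R=R$ criterion) and computes a fiber size $|X_i|$ by summing out-valencies of colours \emph{within} $X_i$ rather than from a fixed reference fiber into $X_i$; both variants are valid and yield the same numbers.
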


\begin{lem}\label{lem:op3}
Let $E\in\{0,1\}^*$ be a relation symbol. Then the following
functions $\CS$ on the class of all $\tau$-structures with $E\in\tau$ are
\DWL-computable in polynomial time.
\begin{enumerate}
  \item\label{e:dom} $\CS(A):=\dom(E(A))$.
  \item\label{e:codom} $\CS(A):=\codom(E(A))$.
  \item\label{e:supp} $\CS(A):=\supp(E(A))$.
\end{enumerate}
Moreover, all the above functions are computable by polynomial-time
\DWL-algorithms that do not add any new vertices to the structure.
\end{lem}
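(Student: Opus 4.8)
The plan is to reduce all three functions to the operations already shown \DWL-computable in \cref{lem:op1}, using the observation that domains, codomains, and supports of a binary relation are naturally expressed via diagonal relations together with concatenation and converse. First I would handle $\CS(A):=\dom(E(A))$. Recall that the target is the diagonal relation $\CE(A):=\diag(\dom(E(A)))$. I claim that $\diag(\dom(E(A))) = \diag(V(A)) \cap (E(A)\circ E(A)^{-1})$. Indeed, $(u,u)$ lies in $E(A)\circ E(A)^{-1}$ iff there is some $v$ with $(u,v)\in E(A)$ and $(v,u)\in E(A)^{-1}$, i.e.\ iff $u\in\dom(E(A))$; intersecting with the full diagonal strips away the off-diagonal pairs of $E\circ E^{-1}$. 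So the \DWL-algorithm computes, in sequence, a relation symbol for $E(A)^{-1}$ (via \cref{lem:op1}\eqref{e:conv}), then one for $E(A)\circ E(A)^{-1}$ (via \eqref{e:circ}), then one for $\diag(V(A))$ (via \eqref{e:diag}), then intersects the latter two (via \eqref{e:cap}), and finally writes the resulting relation symbol on the work tape. Each step is a polynomial-time \DWL-subroutine that adds no new vertices, so the composition is as well; concatenation of such algorithms is routine (run one, then the next starting from the structure left behind, keeping track of the relevant relation symbols on the work tape).

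For $\CS(A):=\codom(E(A))$ the argument is symmetric: $\diag(\codom(E(A))) = \diag(V(A)) \cap (E(A)^{-1}\circ E(A))$, since $(v,v)\in E^{-1}\circ E$ iff some $u$ has $(v,u)\in E^{-1}$ and $(u,v)\in E$, i.e.\ $v\in\codom(E(A))$. Alternatively one can simply note $\codom(E(A))=\dom(E(A)^{-1})$ and invoke part~\eqref{e:dom} after first applying \cref{lem:op1}\eqref{e:conv}. For the support, $\supp(E(A))=\dom(E(A))\cup\codom(E(A))$, so one computes the two diagonal relations $\diag(\dom(E(A)))$ and $\diag(\codom(E(A)))$ by the preceding two cases and takes their union via \cref{lem:op1}\eqref{e:cup}. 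The union of two diagonal relations is again diagonal, with domain the union of the two domains, which is exactly $\diag(\supp(E(A)))$ as required.

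There is essentially no hard part here; the only thing to be a little careful about is the bookkeeping when composing the basic \DWL-subroutines, since each of them may extend the vocabulary with fresh relation symbols (e.g.\ lexicographically-first unused strings) and leaves its output symbol recorded on the work tape rather than at a fixed name. The standard way around this is to observe that a \DWL-algorithm can read the output symbol a subroutine wrote on the work tape, copy it aside, and pass it as the designated input symbol to the next subroutine; this is purely finite-string manipulation on the Turing-machine tapes and does not touch the cloud. Since all invoked subroutines run in polynomial time and add no vertices, and we invoke only a constant number of them, the overall algorithm runs in polynomial time and adds no new vertices, establishing the ``moreover'' clause as well.
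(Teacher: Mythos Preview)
Your proof is correct. For parts~\ref{e:codom} and~\ref{e:supp} you argue exactly as the paper does: $\codom(E(A))=\dom(E(A)^{-1})$ via \cref{lem:op1}\eqref{e:conv}, and $\supp(E(A))=\dom(E(A))\cup\codom(E(A))$ via \cref{lem:op1}\eqref{e:cup}. For part~\ref{e:dom} your route differs slightly from the paper's. You reduce entirely to the black-box operations of \cref{lem:op1}, computing $\diag(\dom(E(A)))=\diag(V(A))\cap(E(A)\circ E(A)^{-1})$ via a chain of \eqref{e:conv}, \eqref{e:circ}, \eqref{e:diag}, \eqref{e:cap}. The paper instead gives a one-shot construction: it selects directly from the algebraic sketch the set $\pi$ of diagonal colours $R$ with $|R(C)\circ E(A)|\geq 1$ and executes a single $\addunion(\pi)$. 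Your approach is more modular (everything is a corollary of \cref{lem:op1}) at the cost of a few extra $\addunion$ calls; the paper's is more direct but re-argues with the sketch. Both stay within polynomial time and add no vertices, so either is fine.
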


\section{Pure Deep Weisfeiler Leman}
\label{sec:PureDWL}

A \DWL-algorithm is called \emph{pure} if the algorithm executes
$\addpair(R)$ and $\addcom(R)$ only for colours $R\in\sigma$ and not
for relation symbols $E\in\tau$. It is sometimes convenient to only
consider pure \DWL-algorithms, and it is an indication of the
robustness of the definition that every \DWL-algorithm is equivalent
to a pure one.

\begin{theo}\label{theo:pure}
Let $\CE$ be a function that assigns each structure $A$ a relation $\CE(A)$
and that is \DWL-computable in polynomial time.
Then there is a pure \DWL-algorithm that computes $\CE$ in polynomial-time.
(The same holds for functions $\CI:\CA\to\{0,1\}^*$ and properties $\CP:\CA\to\{0,1\}$.)
\end{theo}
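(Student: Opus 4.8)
The plan is to simulate an arbitrary polynomial-time \DWL-algorithm $M$ by a pure one $M'$. The only transitions that distinguish non-pure from pure algorithms are $\addpair(E)$ and $\addcom(E)$ for a relation symbol $E\in\tau$. So the idea is to replace each such call by a short subroutine that first manufactures a \emph{colour} $R$ whose relation $R(C(A))$ equals $E(A)$, and then calls $\addpair(R)$ or $\addcom(R)$ instead. If $E(A)$ happens to be a union of colour classes of the current coherent configuration $C(A)$, this is trivial: use $\addunion(\pi)$ with $\pi=\{R\in\sigma\mid R(C(A))\subseteq E(A)\}$, which produces a new relation symbol $E_\pi$ with $E_\pi(A)=E(A)$, and then note that after recomputing the coherent configuration, $E_\pi(A)$ \emph{is} a union of colour classes; in fact when $E(A)$ is already a union of colours, some colour $R$ of the refined configuration satisfies $R(C(A)) = E(A)$ only if $E(A)$ is itself a single class — which it need not be. So $\addunion$ alone does not suffice, because $\addpair$ and $\addcom$ want a single colour, not a relation that is a union of several colours.

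The first real step, then, is to observe that we can reduce to the case where $E(A)$ is a single colour class. For $\addpair$: instead of creating one fresh vertex per pair in $P=E(A)$ in a single shot, we can iterate over the colour classes $R_1,\dots,R_m$ of $C(A)$ that refine $E$ (these are identifiable from the algebraic sketch via $\subseteq_{\sigma,\tau}$), and call $\addpair(R_i)$ for each $i$ in the lexicographic order of the $R_i$. Each such call creates fresh vertices and fresh relation symbols $E_\lef,E_\rig,D_{R_i}$; afterwards we take unions of the various $E_\lef$'s (and of the $E_\rig$'s, and of the $D_{R_i}$'s) using \DWL-computable set operations (\cref{lem:op1}) to reconstruct exactly the relations $E_\lef, E_\rig, D_X$ that the single non-pure call $\addpair(E)$ would have produced, and \forget\ the intermediate symbols. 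Since the disjoint unions are over disjoint pair-sets, the resulting structure is isomorphic to the one $M$ would have; the only subtlety is the \emph{names} of the fresh relation symbols $E_\lef, E_\rig, D_X$, but these are chosen lexicographically-first-unused in both machines, and $M'$ can rename its reconstructed relations to match by a further bounded sequence of $\addunion$/$\forget$ operations (or, more simply, $M'$ maintains a dictionary translating $M$'s symbol names to its own, which is legitimate since the internal run is isomorphism-invariant). The analogous reduction works for $\addcom(E)$: contracting the strongly connected components of $E(A)$ can be decomposed, but here one must be more careful because SCCs of $E(A)$ need not respect the colour classes.

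This brings us to what I expect to be the main obstacle: handling $\addcom(E)$ when $E(A)$ is not a union of colours. Contracting SCCs is a genuinely global operation and cannot be decomposed class-by-class the way $\addpair$ can. The fix is to first \emph{create} the colour we need. Using $\addunion(\pi)$ with $\pi = \{R \mid R(C(A))\subseteq E(A)\}$ we obtain a relation symbol $F$ with $F(A) = \bigcup_{R\in\pi} R(C(A))$. If $\subseteq_{\sigma,\tau}$ is such that this union is exactly $E(A)$ — which it is precisely when $E(A)$ is a union of colour classes — good; but in general $\bigcup_{R\in\pi}R(C(A)) \subsetneq E(A)$ is possible only if... actually no: every pair $(u,v)\in E(A)$ lies in exactly one colour class $R$, and by the refinement property $R(C(A))\subseteq E(A)$ or $R(C(A))\cap E(A)=\emptyset$; since $(u,v)\in R(C(A))\cap E(A)$, we get $R(C(A))\subseteq E(A)$, so $R\in\pi$. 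Hence $\bigcup_{R\in\pi}R(C(A)) = E(A)$ \emph{always}. So $F(A)=E(A)$ exactly, and moreover after recomputing the coherent configuration, $F$ is a union of colour classes of $C(A)$. Now the remaining gap is: $\addcom$ (and $\addpair$) in the \emph{pure} model require a single colour symbol $R\in\sigma$, and $F\in\tau$ is a relation symbol that may span several colours. But here I can simply apply the $\addcom(R_i)$-iteration idea again — contracting SCCs of $F(A)$ — except that SCCs of $F(A)$ genuinely mix colour classes. The resolution: a pure $\addcom(R)$ can only be applied to colours, so I instead contract iteratively, but I must argue that a single application of $\addcom$ on the union $F$ can be simulated by repeatedly: pick one colour $R$ contained in $F$, call $\addcom(R)$, which contracts the SCCs of $R(C(A))$, then re-examine. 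This does \emph{not} in general reach the SCCs of $F(A)$ — so the honest approach is different: we observe that after the $\addunion$ call, $F$ is a union of colours, and $\addcom$ and $\addpair$ with a \emph{relation symbol that is a union of colours} can be simulated purely by the class-by-class iteration plus set operations \emph{because taking SCCs commutes appropriately}. Actually the clean statement is: the \DWL-model already allows $\addcom$ on relation symbols; what we must remove is $\addcom$ on relation symbols that are \emph{not} unions of colours. Since we have just shown every relation of the form $E(A)$ appearing as an argument can be rewritten (via $\addunion$) as a relation which \emph{is} a union of colours with the same extension, it remains to simulate $\addcom(F)$ and $\addpair(F)$ for $F$ a union of colours. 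For $\addpair$ this is the disjoint-union iteration above. For $\addcom(F)$ with $F$ a union of colours: compute $F^\scc$ using \cref{lem:op1}\ref{e:scc}, which is again a union of colours (being a \DWL-computable relation, hence invariant, hence coarsened by $C(A)$ — one needs that $F^\scc$ is a union of classes, which holds because $\CE_\scc$ is computed without adding vertices so $C(A)$ refines the output); then the SCCs of $F$ are the SCCs of the equivalence relation $F^\scc$, each of which, restricted to a single colour class of the refined configuration on $F^\scc$...

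I will structure the write-up as: (i) the dictionary/renaming remark establishing that it suffices to match $M$'s internal run up to consistent relabelling of fresh symbols; (ii) the observation that every relation $E(A)$ used as an $\addpair$/$\addcom$ argument equals $E_\pi(A)$ for $\pi$ the set of colours below $E$, reducing to arguments that are unions of colours; (iii) simulation of $\addpair(F)$, $F$ a union of colours, by lexicographically-ordered iteration of $\addpair(R)$ over the colours $R\subseteq F$ followed by reassembly of $E_\lef,E_\rig,D_X$ via \cref{lem:op1}; (iv) simulation of $\addcom(F)$, $F$ a union of colours, via computing $F^\scc$, then contracting colour class by colour class; (v) a uniform argument that all of this is polynomial time and preserves the output relation/invariant/property up to isomorphism. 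The hard part is genuinely step (iv): SCC-contraction is not local to colour classes, and I expect to need an intermediate normalisation where I first make the SCC-equivalence relation $F^\scc$ explicit and argue that contracting $F^\scc$ (a union of colours, since $\CE_\scc$ adds no vertices) is what $\addcom(F)$ does, and that contracting an equivalence relation which is a union of colours can be done one colour-class at a time — because distinct classes of $C(A)$ lying inside $F^\scc$ correspond to disjoint sets of SCCs (each SCC, being a colour-homogeneous clique in $F^\scc$, lies in exactly one class), so $\addcom(R)$ for each such $R$ in turn contracts a disjoint family of components and the composite effect is exactly $\addcom(F)$. Getting the bookkeeping of relation-symbol names and the fresh diagonal markers $D_X$ to line up across this iteration, and checking the intermediate coherent-configuration recomputations do not interfere, is where the care is needed, but each individual point is routine given \cref{lem:op1,lem:op2,lem:op3}.
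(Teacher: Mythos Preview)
Your treatment of $\addpair(E)$ is essentially the paper's: iterate $\addpair(R_i)$ over the colours $R_i$ refining $E$, then reassemble $E_\lef,E_\rig,D_E$ via the set operations of \cref{lem:op1}. (The paper phrases the iteration as ``take $E_2:=E\setminus R_1$, pick a colour inside $E_2$, repeat'', which sidesteps the bookkeeping issue that your fixed list $R_1,\ldots,R_m$ of colours may get refined after the first $\addpair$.)

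Your treatment of $\addcom(E)$, however, has a genuine gap. The crucial claim in your step~(iv) is that ``each SCC, being a colour-homogeneous clique in $F^\scc$, lies in exactly one class'', so that contracting colour by colour contracts disjoint families of whole components. This is false: a strongly connected component $S$ of $F$ can contain vertices of several diagonal colours, and then the clique $S^2\subseteq F^\scc$ is split across several edge-colour classes. Contracting one such colour $R$ contracts only $\sccs(R)$, which are in general proper subsets of the components of $F$; the composite effect of doing this for each colour is \emph{not} $\addcom(F)$. You yourself note earlier that the na\"ive iteration ``does not in general reach the SCCs of $F(A)$'', and the subsequent argument does not actually repair this.

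The paper's fix is twofold and quite different from what you sketch. First, as long as some SCC of $E$ is \emph{not} discrete (i.e.\ contains two vertices of the same diagonal colour), one can always find a non-diagonal colour $R\subseteq E^\scc$ with \emph{homogeneous support}; for such $R$ every vertex has equal $R$-in-degree and $R$-out-degree, hence every $R$-edge lies on a cycle, so $\addcom(R)$ genuinely shrinks the universe. Iterating this terminates in polynomially many steps. Second, once every SCC is discrete, the paper does \emph{not} contract further: it picks one diagonal-colour representative per SCC and simulates the contraction by rewiring all relations through the representatives (the formula $E'(A)=E_U'\,Z\,E(A)\,Z\,E_U'$ in the paper). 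Your proposal is missing both the homogeneous-support argument that guarantees progress and the representative trick that finishes the simulation without a final non-pure contraction.
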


\section{Normalised Deep Weisfeiler Leman}
\label{sec:normDWL}

This section focuses on \DWL-algorithms $M$ that decide isomorphism on
a class $\CC$.  There are actually two natural ways of using \DWL{} to
decide isomorphism.

The first is the obvious one if we take \DWL{} as a computation model:
an isomorphism test implemented in \DWL{} takes the disjoint union of
two structures as its input and then decides if they are
isomorphic. For simplicity, we assume that the two structures are
connected; we can always achieve this by adding a new relation that
relates all elements of the respective structures. The \emph{disjoint
  union} of two $\tau$-structures $A_1$ and $A_2$ is the
$\tau$-structure $A:=A_1\uplus A_2$ with universe
$V(A)=V(A_1)\cupdot V(A_2)$ and relations $E(A)=E(A_1)\cupdot E(A_2)$
for $E\in\tau$.  A \DWL-isomorphism test takes the disjoint union
$A_1\uplus A_2$ of two connected $\tau$-structures as its input and
decides the property of the two components of its input structure
being isomorphic. We say that a \DWL-algorithm \emph{decides
  isomorphism on a class $\CC$} of structures if it correctly decides
isomorphism for structures $A_1,A_2\in\CC$.

The second way of using \DWL{} to decide isomorphism is inspired by
the way the classical Weisfeiler-Leman algorithm is used to decide
isomorphism:
the WL algorithm is said to distinguish two
structures $A_1$ and $A_2$ if the algebraic sketches $D(A_1)$ and $D(A_2)$
are distinct. Note that this is an incomplete isomorphism test, because two
structures may have the same algebraic sketch even though they are non-isomorphic.
Generalising this notion, we say that a \DWL-algorithm \emph{distinguishes} two
$\tau$-structures $A_1$ and $A_2$ if on input $A_i$
the algorithm halts with coherently coloured structure $(A_i',C(A_i'))$ in the cloud such that $D(A_1')\neq D(A_2')$. 
(Note that here the algorithm only takes a single structure as
its input.) If $A_1$ and $A_2$ are isomorphic, it can never
happen that a \DWL-algorithm distinguishes them, because \DWL-computations are isomorphism invariant. However,
as for the classical Weisfeiler-Leman algorithm, there may be
non-isomorphic structures not distinguished by the algorithm.
We say
that a \DWL{}-algorithm is a \emph{distinguisher for a class $\CC$} of structures if
it distinguishes all non-isomorphic structures $A_1,A_2\in\CC$.

In this section, we shall prove that for each class $\CC$ of
structures, there is a polynomial-time \DWL-algorithm deciding
isomorphism on $\CC$ if and only if there is a polynomial-time
\DWL-algorithm that is a distinguisher for $\CC$
(Theorem~\ref{theo:inv}). Thus the two concepts we just defined agree.
This is a nontrivial theorem that is an important step towards a
central goal of the paper: to turn isomorphism testing
algorithms into canonisation algorithms (Question~B).
The reason why
the equivalence between isomorphism test and distinguisher is
important is that a canonisation algorithm, just like a distinguisher,
only works on one input structure, whereas an isomorphism test works
on two structures simultaneously.  In fact, this issue lies at the
heart of the isomorphism versus canonisation problem. So, in order to
modify a \DWL-isomorphism test we first have to decouple the
computations on the two structures from each other, that is, transform
an isomorphism test into a distinguisher.

The main
technical complication in doing so is that a \DWL-algorithm working on
the disjoint union of two graphs can create pairs with endpoints in
each of the structures and then create pairs of such pairs and so on,
which makes an association of the created objects with one of the
structures impossible. We will call algorithms that avoid such
constructions normalised. The largest part of this section is devoted
to a proof that it suffices to consider normalised algorithms.

We say that $v\in V(A_1\uplus A_2)$ \emph{belongs} to $A_i$ if $v\in V(A_i)$ for $i\in \{1,2\}$.
We consider structures~$A$ that are iteratively obtained from $A_1\uplus A_2$ by applying the \DWL-algorithm $M$.
Inductively, if $M$ adds a pair $(u,v)$ to the universe where both $u,v$ belong the same structure $A_i$, then we say
that the fresh vertex $(u,v)$ \emph{belongs} to $A_i$ for $i\in \{1,2\}$.
Analogously, if $M$ contracts a component $S\in\sccs(R(C(A)))$ where all vertices in $S$ belong to the same structure $A_i$,
then we say that the fresh vertex $S$ \emph{belongs} to $A_i$ for $i\in \{1,2\}$.
We define $\CV_i(A):=\{v\in V(A)\mid v\text{ belongs to }A_i\}$ for both $i\in \{1,2\}$.
The vertices $\CV_\plain(A):=\CV_1(A)\cup\CV_2(A)$ are called \emph{plain}.
The edges $\CE_\plain(A):=\CV_1(A)^2\cup\CV_2(A)^2$ are called \emph{plain}.
The edges $\CE_\across(A):=\{(v_i,v_j)\mid v_i\in \CV_i(A),v_j\in\CV_j(A),i\neq j\in\{1,2\}\}$ are called \emph{crossing}.
We write $\sigma_\across$, to denote the set of colours $R\in\sigma$ such that $R(C(A))\subseteq\CE_\across(A)$.
Analogously, we define $\sigma_\plain$.
Note that it might be the case, that a relation
is neither a subset of $\CE_\plain(A)$ nor $\CE_\across(A)$.
In such a case, the relation is neither plain nor crossing. The goal is to avoid the construction of such relations.

A \DWL{}-algorithm $M$ that decides isomorphism on a class $\CC$
is called \emph{normalised} if
at any point in time
$V(A)=\CV_\plain(A)$ and
$A$ consists of exactly two connected components ($\CV_1(A)$ and
$\CV_2(A)$).
We also say that a structure~$A$ is \emph{normalised} if it is 
obtained from a normalised \DWL-algorithm.

Since a normalised structure $A$ consists of exactly two connected
components during the entire run, its holds that
$\sigma=\sigma_\plain\cupdot\sigma_\across$ for vocabulary of the coherent configuration $C(A)$.
Moreover, $\CE_\plain$ and~$\CE_\across$ are \DWL-computable functions
since a \DWL-algorithm can detect whether a colour
belongs to edges between the same and between distinct connected components.
Since all relation symbols are plain, we can see $A$ as a disjoint union of
its subrestrictions $A[\CV_1(A)]$ and $A[\CV_2(A)]$. The following
lemma gives a connection between
the coherent configuration $C(A)$ and its subrestrictions. 

\begin{lem}\label{lem:ccNorm}
Let $A$ be a $\tau$-structure that is normalised.
Let $\sigma_i\subseteq\sigma$ be the set of colours
$R$ such that $R(C(A))\cap \CV_i(A)^2\neq \emptyset$.
Let $A_i:=A[\CV_i(A)]$ be the
$\tau$-subrestriction, and let $C_i:=C(A)[\sigma_i,\CV_i(A)]$ be the
$\sigma_i$-subrestriction, for $i \in \{1,2\}$.
\begin{enumerate}
  \item\label{e:product} It holds $\sigma=(\sigma_1\cup\sigma_2)\cupdot\sigma_\across$ and for each $R\in\sigma_\across$,
there are diagonal colours $R_1\in\sigma_1,R_2\in\sigma_2$
such that $R(C(A))=\{(v_1,v_2)\in\CE_\across(A)\mid v_i\in\supp(R_i(C))\}$,
  \item\label{e:ind} $(A_i,C_i)$ is again a coherently coloured structure.

Moreover, given the set $\{D(A_1,C_1),D(A_2,C_2)\}$, the algebraic sketch $D(A)$
can be computed in polynomial time,
\item\label{e:eqfine} The coherent configurations $C_i$ and~$C(A_i)$  are equally fine (i.e., $C_i\equiv C(A_i)$).

Moreover, when given the set $\{D(A_1),D(A_2)\}$, the set $\{D(A_1,C_1),D(A_2,C_2)\}$
can be computed in polynomial time.
\end{enumerate}
\end{lem}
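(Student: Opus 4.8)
The plan is to prove the three statements in order, leveraging the decomposition $V(A)=\CV_1(A)\cupdot\CV_2(A)$ and the fact that $A$ has exactly two connected components.

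For \eqref{e:product}, I would first observe that $\sigma=\sigma_\plain\cupdot\sigma_\across$ because $A$ is normalised, and then refine the plain part: a plain colour $R$ with $R(C(A))\subseteq\CE_\plain(A)=\CV_1(A)^2\cup\CV_2(A)^2$ must actually be contained in exactly one of $\CV_1(A)^2$ or $\CV_2(A)^2$, since a colour class of a coherent configuration is connected in the sense that any two pairs $(u,v),(u',v')\in R(C(A))$ are linked via the intersection numbers, and no colour can bridge the two components; this gives $\sigma_\plain=\sigma_1\cupdot\sigma_2$ (disjointness of $\sigma_1,\sigma_2$ again uses that a plain colour lives in one component). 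For the crossing colours, fix $R\in\sigma_\across$ and $(v_1,v_2)\in R(C(A))$ with $v_i\in\CV_i(A)$. The key point is that for a coherent configuration, the set of first coordinates $\{u\mid (u,w)\in R(C(A))\text{ for some }w\}=\dom(R(C(A)))$ is itself a union of diagonal colour classes (the ``fibres'' of the configuration), and likewise for $\codom$; this is a standard property of coherent configurations. Let $R_1\in\sigma_1$ be the diagonal colour with $\supp(R_1(C(A)))=\dom(R(C(A)))\subseteq\CV_1(A)$ and $R_2\in\sigma_2$ the diagonal colour with $\supp(R_2(C(A)))=\codom(R(C(A)))\subseteq\CV_2(A)$. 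It remains to show $R(C(A))$ is \emph{all} of $\{(v_1,v_2)\in\CE_\across(A)\mid v_i\in\supp(R_i(C(A)))\}$, i.e.\ the crossing relation between these two fibres is monochromatic. This follows because the only Gaifman-edges between $\CV_1(A)$ and $\CV_2(A)$ are crossing ones and there is no further structure distinguishing pairs within $\dom(R)\times\codom(R)$: formally, one checks that the coherent configuration obtained from $C(A)$ by merging all crossing colours between any two fixed fibres into one still satisfies coherence (the intersection numbers involving a merged crossing colour are determined by the fibre sizes), hence by maximality of the coarsest coherent configuration this merging was already performed, so each crossing colour is exactly a product of two fibres. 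The diagonality of $R_1,R_2$ then follows since fibres are diagonal colour classes.

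For \eqref{e:ind}, I would verify directly that $(A_i,C_i)$ satisfies the four axioms of a coherently coloured structure: $\{R(C_i)\mid R\in\sigma_i\}$ partitions $\CV_i(A)^2$ because, by \eqref{e:product}, the colours meeting $\CV_i(A)^2$ are exactly those in $\sigma_i$ and they already partitioned $V(A)^2$; the diagonal condition and the converse condition are inherited by restriction (note $R^{-1}$ of a colour in $\sigma_i$ is again in $\sigma_i$ since it also meets $\CV_i(A)^2$); and the intersection numbers are preserved because for $(u,v)\in R_1(C_i)\subseteq\CV_i(A)^2$, any $w$ with $(u,w)\in R_2(C(A)),(w,v)\in R_3(C(A))$ must lie in $\CV_i(A)$ — otherwise $R_2$ or $R_3$ would be crossing and, by \eqref{e:product}, its other endpoint would be forced into $\CV_j(A)$, contradicting $v\in\CV_i(A)$ or $u\in\CV_i(A)$. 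That $C_i$ refines $A_i$ is immediate from $C(A)\sqsubseteq A$ by restriction. For the computability claim, I would give the explicit reconstruction: $\tau$ and the universe sizes come directly from $D(A_1,C_1),D(A_2,C_2)$; $\sigma=\sigma_1\cupdot\sigma_2\cupdot\sigma_\across$ where $\sigma_\across$ has one colour per ordered pair of diagonal colours $(R_1,R_2)\in\sigma_1\times\sigma_2$ (both directions), after renaming colour symbols canonically; the symbolic subset relation on the crossing colours is empty (crossing edges lie in no relation of $A$, since $A=A_1\uplus A_2$); and the intersection numbers $q(R_1,R_2,R_3)$ split by cases according to which of the three colours are plain-in-$\CV_1$, plain-in-$\CV_2$, or crossing, each case being a simple arithmetic expression in the data of the two sketches and the fibre sizes. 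Since the total number of colours is polynomial in the input, this is a polynomial-time computation.

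For \eqref{e:eqfine}, I want $C_i\equiv C(A_i)$, i.e.\ $C_i$ is itself a coarsest coherent configuration refining $A_i$. By \eqref{e:ind}, $C_i$ is a coherent configuration refining $A_i$, so $C(A_i)\sqsubseteq C_i$. For the converse $C_i\sqsubseteq C(A_i)$, I would argue by contradiction: if $C(A_i)$ were strictly coarser nowhere helps; instead, suppose $C(A_i)$ strictly refines — no, what I need is that $C_i$ is \emph{not} too fine. The clean argument: take the coherent configuration on $V(A)$ that on $\CV_i(A)^2$ agrees with $C(A_i)$, on $\CV_j(A)^2$ agrees with $C(A_j)$, and on crossing edges uses one colour per pair of fibres-of-$C(A_i)$-and-$C(A_j)$; one checks this is a coherent configuration refining $A$ (same case analysis as in \eqref{e:ind}), hence it is refined by the coarsest one $C(A)$, which forces $C(A_i)\sqsubseteq C_i$ on $\CV_i(A)^2$, giving $C_i\equiv C(A_i)$. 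The computability statement — reconstructing $\{D(A_1,C_1),D(A_2,C_2)\}$ from $\{D(A_1),D(A_2)\}$ — then follows because $D(A_i,C_i)$ and $D(A_i)=D(A_i,C(A_i))$ are the algebraic sketches of equally fine coherent configurations, hence equal as combinatorial objects (they differ only by a renaming of colour symbols, which is canonical given the lexicographic order assumption); so in fact $\{D(A_1,C_1),D(A_2,C_2)\}=\{D(A_1),D(A_2)\}$ up to canonical renaming, which is trivially computable.

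The main obstacle is the ``product'' claim in \eqref{e:product}: showing that each crossing colour class is exactly a Cartesian product of two fibres, rather than some finer relation between them. The right way to see this is that merging all crossing colours between a fixed pair of fibres preserves coherence — the intersection numbers $q(R_1,R_2,R_3)$ with at least one crossing argument are pinned down by whether the non-crossing arguments are compatible fibre-diagonals and by fibre sizes, because any path $u\to w\to v$ with one crossing step has its middle vertex $w$ confined to one side — so by the maximality defining the coarsest coherent configuration, no crossing colour can be properly contained in a fibre-product. Everything else is a careful but routine bookkeeping of how coherence axioms and algebraic-sketch data restrict and recombine across the two components, using \cref{lem:canData} to keep the choice of $C(\cdot)$ canonical throughout.
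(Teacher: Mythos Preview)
Your approach is essentially the paper's: for \eqref{e:product} you merge crossing colours sharing the same domain and codomain fibres, check that the merged structure is still a coherent configuration refining $A$, and conclude by coarseness of $C(A)$ that no actual merging occurred; for \eqref{e:ind} you verify the axioms by restriction and reconstruct the sketch case-by-case; for \eqref{e:eqfine} you assemble a coherent configuration on $V(A)$ out of $C(A_1),C(A_2)$ plus product crossing colours and again invoke coarseness.

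There is, however, one genuine error. Your claim that a plain colour ``must actually be contained in exactly one of $\CV_1(A)^2$ or $\CV_2(A)^2$'', and hence that $\sigma_\plain=\sigma_1\cupdot\sigma_2$, is false. If $A_1\cong A_2$ then the coarsest coherent configuration $C(A)$ cannot distinguish the two components, so every plain colour, including every diagonal colour, meets both $\CV_1(A)^2$ and $\CV_2(A)^2$. Concretely, take $A$ to be the disjoint union of two triangles: the coarsest coherent configuration has a single diagonal colour supported on all six vertices and a single non-diagonal plain colour containing all within-triangle pairs. The lemma statement itself writes $\sigma=(\sigma_1\cup\sigma_2)\cupdot\sigma_\across$ with $\cup$, not $\cupdot$, for precisely this reason. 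Your ``linked via the intersection numbers'' heuristic does not confine a colour to one component: intersection numbers are symmetry-invariant and respect the swap $A_1\leftrightarrow A_2$. The fix is easy---drop the disjointness claim---but you should then revisit your reconstruction in \eqref{e:ind}, where you wrote $\sigma=\sigma_1\cupdot\sigma_2\cupdot\sigma_\across$, and check that the intersection-number case analysis still goes through when a plain colour lies in $\sigma_1\cap\sigma_2$ (it does, since colour symbols are literal binary strings and the overlap is visible from the two sketches).
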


By $A_R$, we denote the structure that is obtained from $A$ by executing $\addcom(R)$ for some colour $R\in\sigma$.
The next lemma tells us that, given $D(A)$, the algebraic sketch $D(A_R)$
can be computed (by a Turing machine).
In this sense, the contraction of strongly connected components does not lead to new information.
However, it can still be useful, since it shrinks the size of the universe which might help
to ensure a polynomial bound on the universe size.

\begin{lem}\label{lem:comData}
Let $A$ be a $\tau$-structure.
\begin{enumerate}
  \item\label{com:alg} There is a
polynomial-time algorithm that for a given algebraic sketch $D(A)$ and a given
colour $R\in\sigma$, computes the algebraic sketch
$D(A_R)$.
\item\label{com:fine} The coherent configuration $C(A)[V(A)\setminus\bigcup\CS]$ refines $C(A_R)[V(A_R)\setminus\CS]$
where $\CS:=\sccs(R(A))$.
\end{enumerate}
\end{lem}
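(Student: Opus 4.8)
The plan is to treat the two parts separately but with part~\ref{com:alg} following from part~\ref{com:fine} together with a direct combinatorial computation on the sketch. I would start from the explicit definition of $A_R$ given by the $\addcom(R)$ transition: writing $\CS:=\sccs(R(C(A)))$ and $U:=V(A)\setminus\bigcup\CS$, the universe of $A_R$ is $U\cupdot\CS$, and every relation $E\in\tau$ of $A_R$ is the union of four pieces (the untouched part on $U^2$, the $U$-to-$\CS$ part, the $\CS$-to-$U$ part, and the $\CS$-to-$\CS$ part) obtained by existentially projecting $E(A)$ along the contraction map $\pi\colon V(A)\to V(A_R)$ that sends $v\in U$ to itself and $v\in S\in\CS$ to $S$. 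The key observation is that $\pi$ is \emph{refinement-respecting}: whether $(\pi(u),\pi(v))\in E(A_R)$ depends only on the pair of colours $(R_u,R_v)$ of representatives $u,v$ in $C(A)$ together with the incidence pattern, because all elements in a single colour class of $C(A)$ behave identically with respect to the intersection function. This is what lets us work purely at the level of the sketch.

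For part~\ref{com:fine}, I would show that the configuration $C'$ defined on $V(A_R)\setminus\CS=U$ by assigning to a pair $(u,v)\in U^2$ the colour $R(C(A))$ of $(u,v)$ in $C(A)$ (i.e.\ the restriction $C(A)[U]$) is itself a coherent configuration refining $A_R[U]$: coherence is inherited because the intersection numbers of $C(A)$ restricted to $U$ can only be affected by summing over $w\in\bigcup\CS$, and one checks that for $(u,v)\in R_1(C(A))$ with $u,v\in U$ the count of such $w$ in a fixed colour class is again determined by the intersection function of $C(A)$, hence constant; and $C(A)[U]$ refines $A_R[U]$ because each relation of $A_R$ restricted to $U^2$ is exactly $E(A)\cap U^2$, which $C(A)$ already refines. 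Since $C(A_R)$ is by definition the coarsest coherent configuration refining $A_R$, its restriction $C(A_R)[U]$ is coarser than any coherent configuration refining $A_R$ that lives on a superset of $U$; in particular $C(A)[U]$ refines $C(A_R)[U]$, which is the claim.

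For part~\ref{com:alg}, I would combine part~\ref{com:fine} with \cref{lem:canData}: it suffices to compute, from $D(A)$, an algebraic sketch $D(A_R,C')$ for \emph{some} coherent configuration $C'$ refining $A_R$, and then apply the polynomial-time coarsening algorithm of \cref{lem:canData} to obtain $D(A_R)=D(A_R,C(A_R))$. To build such a $C'$, I take the colours of $C'$ to be: the colours of $C(A)$ restricted to $U^2$ (read off from the sketch using $\subseteq_{\sigma,\tau}$ to know which of these are subrelations of which $E(A_R)$-piece), plus colours for the new diagonal $D_R(A_R)=\diag(\CS)$, plus colours for the $U$-to-$\CS$, $\CS$-to-$U$, and $\CS$-to-$\CS$ relations, where each such new colour is indexed by the $C(A)$-colour of an incident representative pair; the new intersection numbers are computed by case analysis, each being a finite sum of old intersection numbers $q(R_1,R_2,R_3)$ (for paths staying in $U$ one reuses the old value; for paths through a contracted component one sums the relevant $q$'s over the colours describing membership in that component, using that $\sccs$ of a colour relation is \DWL-visible from the sketch via \cref{lem:op1}\eqref{e:scc}). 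All of this is a bounded amount of arithmetic on the polynomially many entries of the sketch, so it is polynomial time; and the resulting tuple is a legitimate algebraic sketch because the configuration it describes is coherent and refines $A_R$ by construction.

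The main obstacle is the bookkeeping in part~\ref{com:alg}: making sure the tentative configuration $C'$ one writes down really is coherent, i.e.\ that the claimed intersection numbers are well-defined (independent of the chosen representatives) and satisfy the coherence axioms, especially for triples of colours that mix the $U$-part and the contracted-components part. The cleanest way around this is \emph{not} to insist that the natural colouring is coherent, but only to produce \emph{any} coherent refinement of $A_R$ whose sketch is computable from $D(A)$ — for instance one may even take $C'$ finer than the natural choice if that makes coherence easier to verify — and then lean entirely on \cref{lem:canData} to coarsen it down to $D(A_R)$. Once phrased that way, the verification reduces to the two routine points that (i) the relations of $A_R$ are determined at the colour level, and (ii) the $\sccs$ data needed to identify the contracted blocks is itself extractable from $D(A)$ in polynomial time, both of which are already available from \cref{lem:op1}.
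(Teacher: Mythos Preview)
Your plan for part~\ref{com:alg} matches the paper's approach: build, from $C(A)$, a coherent configuration $C'$ on $V(A_R)$ that refines $A_R$, compute its sketch from $D(A)$, and then invoke \cref{lem:canData} to pass to $D(A_R)$. The paper makes this concrete by pushing $C(A)$ forward along the contraction map and then merging colours via the equivalence $R_1\sim R_1'$ iff $E_V R_1(C(A)) E_V = E_V R_1'(C(A)) E_V$ (with $E_V=\diag(U)\cup R(C(A))^\scc$); it then writes down the intersection function of the quotient explicitly as
\[
q_R([R_1]_\sim,[R_2]_\sim,[R_3]_\sim)=\frac{1}{c_{R_2,R_3}}\sum_{R_2'\in[R_2]_\sim,\;R_3'\in[R_3]_\sim} q(R_1,R_2',R_3'),
\]
where $c_{R_2,R_3}$ is the common size of the contracted components (or $1$). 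Your sketch is compatible with this, just less explicit.

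However, your argument for part~\ref{com:fine} has a genuine gap. You establish that $C(A)[U]$ is a coherent configuration refining $A_R[U]$, and then write: ``Since $C(A_R)$ is by definition the coarsest coherent configuration refining $A_R$, its restriction $C(A_R)[U]$ is coarser than any coherent configuration refining $A_R$ that lives on a superset of $U$; in particular $C(A)[U]$ refines $C(A_R)[U]$.'' This does not follow. The object $C(A_R)[U]$ is \emph{not} characterised as the coarsest coherent configuration refining $A_R[U]$; it is merely the restriction to $U$ of the coarsest configuration refining $A_R$ on the \emph{full} universe $U\cupdot\CS$. Knowing that $C(A)[U]$ is coherent and refines $A_R[U]$ says nothing about how it compares to $C(A_R)[U]$, because the coarsening that produces $C(A_R)$ is constrained by what happens on $\CS$ and on the $U$--$\CS$ edges, not just on $U^2$.

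The fix is to do on $V(A_R)$ exactly what you already propose for part~\ref{com:alg}: take the pushforward $C'$ of $C(A)$ along the contraction map (the paper's $([C]_\sim)_R$), verify that $C'$ is a coherent configuration refining $A_R$ on all of $V(A_R)$, and conclude that $C'$ refines $C(A_R)$. Restricting to $U$ then gives $C(A)[U]=C'[U]\sqsubseteq C(A_R)[U]$, since on $U^2$ the contraction does nothing and the equivalence $\sim$ is trivial there. In other words, part~\ref{com:fine} is a corollary of the very construction you need for part~\ref{com:alg}; it cannot be proved by working on $U$ alone.
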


The next lemma is of a similar flavour.
It states that we do not get any information if we apply $\addpair(R)$
to a coherently coloured structure that is normalised where
$R\in\sigma_\across$.
By $A^R$, we denote the structure that is obtained from $A$ by executing $\addpair(R)$ for a colour $R\in\sigma$ with $R\subseteq\CE_\across(A)$.
More generally, we write $A^\omega$ to denote the structure that is obtained from $A$ by executing $\addpair(R)$ for all colours $R\in\omega\subseteq\sigma$.

\begin{lem}\label{lem:pairData}
Let $A$ be a $\tau$-structure that is normalised and let $\omega\subseteq\sigma_\across$ be a set of crossing colours.
\begin{enumerate}
  \item\label{pair:alg} There is a
polynomial-time algorithm that for a given algebraic sketch
$D(A)$ and a given $\omega$, computes the algebraic
sketch $D(A^\omega)$.
\item\label{pair:eqfine} The coherent configurations $C(A)$ and~$C(A^\omega)[V(A)]$ are equally fine (i.e., $C(A)\equiv C(A^\omega)[V(A)]$).
\end{enumerate}
\end{lem}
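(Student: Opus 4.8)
The plan is to prove both parts of \cref{lem:pairData} by working with the subrestrictions $A_1 = A[\CV_1(A)]$ and $A_2 = A[\CV_2(A)]$ and showing that executing $\addpair(R)$ for a crossing colour $R \in \sigma_\across$ does not change the structure up to the information encoded in the two subrestrictions. By \cref{lem:ccNorm}\ref{e:product}, each crossing colour $R \in \sigma_\across$ is determined by a pair of diagonal colours $R_1 \in \sigma_1$, $R_2 \in \sigma_2$, so that $R(C(A)) = \{(v_1,v_2) \in \CE_\across(A) \mid v_i \in \supp(R_i(C))\}$; in other words $R(C(A)) = \supp(R_1(C(A_1))) \times \supp(R_2(C(A_2)))$, viewing the two supports as subsets of the respective components. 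This is the key structural fact: a crossing colour class is a \emph{full bipartite} relation between a colour class of $A_1$ and a colour class of $A_2$.

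First I would analyse what $\addpair(R)$ does geometrically. It adds one fresh vertex $(u,v)$ for each pair $(u,v) \in R(C(A))$, i.e.\ for each $u$ in the $A_1$-class $\supp(R_1)$ and each $v$ in the $A_2$-class $\supp(R_2)$. It adds the relation $D_R$ with $D_R(A) = \diag(R(C(A)))$, and the relations $E_\lef, E_\rig$ connecting each new vertex to its left and right endpoint. Crucially, each fresh vertex $(u,v)$ has $u \in \CV_1(A)$ and $v \in \CV_2(A)$, so it is connected (via $E_\lef$ and $E_\rig$) to both components; thus after the operation $A^R$ is connected rather than having two components, and the new vertices are "mixed". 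Since $A$ was normalised, I would argue that the new vertices all look alike to 2-WL: I want to show that the coherent configuration $C(A^R)$, when restricted to the old universe $V(A)$, is equally fine as $C(A)$ — this is part~\ref{pair:eqfine}. The heart of this is that the colour of an old pair $(x,y) \in V(A)^2$ in $C(A^R)$ refines its colour in $C(A)$ automatically (adding structure only refines), and conversely the extra structure — the new vertices and the $E_\lef, E_\rig, D_R$ relations — provides no new distinguishing power on old pairs, because by the bipartite-product form the set of new vertices reachable from an old vertex $x$ via $E_\lef^{-1}$ is $\{(x,v) \mid v \in \supp(R_2)\}$ if $x \in \supp(R_1)$ and empty otherwise, a datum already captured by the colour of $x$ in $C(A)$ (which records membership in the class $R_1$). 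One then checks that a stable colouring of $A$, lifted to $A^R$ by giving each new vertex $(u,v)$ a colour determined by the $C(A)$-colours of $u$ and $v$ (all such $(u,v)$ get the same pair of colours, namely $R_1$ and $R_2$), together with appropriate colours for the new edge relations, is already coherent on $A^R$; hence $C(A^R)[V(A)] \equiv C(A)$. For the iterated version $A^\omega$, the same argument applies since the colours in $\omega$ are all crossing and the new vertices created for different colours $R, R' \in \omega$ are distinguished by which $D_R$ relation holds, so no interference occurs; one does an easy induction on $|\omega|$ using the single-colour case, or argues directly.

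Given part~\ref{pair:eqfine}, part~\ref{pair:alg} is essentially bookkeeping: from $D(A)$ one knows $\sigma$, $\sigma_\across$, the symbolic subset relation, and the intersection numbers, hence one can read off for each $R \in \omega$ the associated diagonal colours $R_1, R_2$ and the cardinalities $|\supp(R_1)|, |\supp(R_2)|$ of the corresponding classes (these are computable from intersection numbers and class sizes in a coherent configuration). From this data one computes the number of new vertices, their new colour (a single new colour per $R \in \omega$, the diagonal colour $D_R$), the new colours splitting $E_\lef, E_\rig$ according to the $C(A)$-colour of the endpoint, and all the new intersection numbers — these are determined combinatorially because $C(A^\omega)$ is, up to the refinement established in~\ref{pair:eqfine}, the "obvious" coherent colouring of $A^\omega$ obtained from $C(A)$ and the product structure. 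Finally one applies \cref{lem:canData} to pass to the canonical coarsest coherent configuration, yielding $D(A^\omega)$. I expect the main obstacle to be the careful verification in~\ref{pair:eqfine} that the candidate colouring of $A^\omega$ is genuinely coherent — i.e.\ that all the intersection-number conditions hold — since one must check triangles with one, two, or three vertices among the new elements, using repeatedly the bipartite-product form of the crossing colours and the fact that $A$ was normalised with exactly two components; the normalisation hypothesis is what prevents the new "mixed" vertices from acquiring irregular neighbourhoods.
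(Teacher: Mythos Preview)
Your approach is essentially the same as the paper's: construct an explicit candidate coherent configuration on $A^\omega$ whose restriction to $V(A)$ agrees with $C(A)$, verify it is coherent, and then invoke \cref{lem:canData}. The paper makes this concrete by defining, for each pair $e=(u,v)\in V(A^\omega)^2$, the colour $R_e$ to be the $6$-tuple $(R_u,R_{11},R_{12},R_{21},R_{22},R_v)$ recording the $C(A)$-colours of all pairs among the projections $p_1(u),p_2(u),p_1(v),p_2(v)$ (where $p_i$ sends a plain vertex to itself and a pair-vertex to its $i$-th coordinate). Because each such projection set meets each component $\CV_i(A)$ in at most two vertices, \cref{lem:ccNorm}\ref{e:product} guarantees that all relevant crossing-colour information is already determined by the diagonal colours, so $\sigma^\omega$ and the symbolic subset relation are computable from $D(A)$ alone. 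The intersection numbers are then computed by a short case analysis on how many of $u,v,w$ are new: if $w$ is new one gets a product over its two coordinates, and if $w$ is plain but $u$ (say) is not, one reduces to the coordinate of $u$ lying in the same component as $w$.

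One caution: your suggested induction on $|\omega|$ is delicate, since after a single $\addpair(R)$ the resulting structure $A^R$ is connected and hence no longer normalised, so the inductive hypothesis does not apply directly. The paper avoids this by handling all of $\omega$ at once with the $6$-tuple colouring; your alternative of arguing directly is the right choice. Also note that the paper only writes out Part~\ref{pair:alg}; Part~\ref{pair:eqfine} is implicit, since the constructed $C^\omega$ visibly coincides with $C(A)$ on plain pairs, and the coarsest $C(A^\omega)$ is sandwiched between $C^\omega$ and the trivial refinement of $A$ on $V(A)$.
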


Towards showing that all \DWL-algorithm can be assumed to be normalised we take an intermediate step of algorithms that are almost normalised.

\subsection*{Almost Normalised \DWL{}}
In the following, let $M$ be a \DWL-algorithm that decides isomorphism on $\CC$.
If $M$ adds a pair $v=(v_1,v_2)$ to the universe where
$(v_1,v_2)\in\CE_\across(A)$, then that fresh vertex $v$ is called a \emph{crossing} pair.
The set of crossing vertices is denoted by $\CV_\across(A)$.
Let $\CV_\pair(A)$ denote the set of all pair-vertices (created by adding
pairs).

A \DWL{}-algorithm $M$ that decides isomorphism on a class $\CC$
is called \emph{almost normalised} if
each created pair-vertex is either plain or crossing,
i.e., at every point in time
$\CV_\pair(A)\subseteq\CV_\plain(A)\cup\CV_\across(A)$.
Furthermore, we require for
$\forget(E)$-executions
that $E\notin\{E_\lef,E_\rig\}\cup\tau$.
This additional requirement ensures that
$\CE_\plain,\CE_\across,\CV_\plain,\CV_\across$ are \DWL-computable functions.

We say that a structure $A$ is \emph{almost normalised} if it is
computed by an almost normalised \DWL-algorithm.

\begin{lem}\label{lem:almostNorm}
Assume that isomorphism on some class $\CC$ is \DWL-decidable in polynomial time.
Then, isomorphism on $\CC$ is decidable by an almost normalised \DWL{}-algorithm in polynomial time.
\end{lem}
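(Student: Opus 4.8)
The plan is to transform an arbitrary polynomial-time \DWL-algorithm $M$ deciding isomorphism on $\CC$ into an almost normalised one $M'$ with the same behaviour, by intercepting every $\addpair$ operation and preventing the creation of pair-vertices that are neither plain nor crossing. The key invariant to maintain is that at every point in time $\CV_\pair(A)\subseteq\CV_\plain(A)\cup\CV_\across(A)$, together with the side condition on $\forget$-executions (never forgetting $E_\lef,E_\rig$ or the original relation symbols in $\tau$), which by the remark preceding the lemma guarantees that $\CE_\plain,\CE_\across,\CV_\plain,\CV_\across$ remain \DWL-computable throughout the run of $M'$. First I would observe that since the original input is the disjoint union $A_1\uplus A_2$ of two connected structures, a colour $R\in\sigma$ of the current coherent configuration can be classified by $M'$ as plain, crossing, or mixed: using \cref{lem:op1} and \cref{lem:op2} one can test whether $R(C(A))\subseteq\CE_\plain(A)$, whether $R(C(A))\subseteq\CE_\across(A)$, or neither --- here $\CE_\plain(A)=\CV_1(A)^2\cup\CV_2(A)^2$ and $\CE_\across$ are obtained from $\CV_1,\CV_2$, which $M'$ tracks as auxiliary diagonal relations updated after each operation.

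The core of the construction is the following simulation. Whenever $M$ wishes to execute $\addpair(R)$ for a colour $R$, the algorithm $M'$ first inspects $R$. If $R$ is plain or crossing, $M'$ simply executes $\addpair(R)$ as well, and updates its bookkeeping relations $\CV_1,\CV_2$ (splitting the diagonal $D_R$ of the new vertices according to whether the two endpoints lie in $\CV_1$ or $\CV_2$, which is again \DWL-computable). If $R$ is mixed --- so $R(C(A))$ meets both $\CE_\plain(A)$ and $\CE_\across(A)$ --- then $M'$ does \emph{not} create these pairs directly. Instead it first refines $R$ by forming, via \cref{lem:op1}, the relations $R\cap\CE_\plain$ and $R\cap\CE_\across$; these are then honest relation symbols whose supports are unions of colour classes, and $M'$ executes $\addpair$ on the corresponding colours (it may first need an $\addunion$-type step, but in fact after recomputing the coherent configuration each of $R\cap\CE_\plain$ and $R\cap\CE_\across$ is a union of colours, and creating pairs for each such colour and taking the disjoint union of results yields exactly the pair-vertices $M$ wanted). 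Crucially, since $C(A)$ refines $A$ and both $\CE_\plain,\CE_\across$ are relations $M'$ can name, every colour class is entirely contained in one of them, so after one refinement round no colour of the extended structure is mixed on the original part --- but the newly created pair-vertices, being all either plain or crossing, do not reintroduce mixed colours among \emph{pairs}. The edges incident to the new vertices ($E_\lef,E_\rig$) may themselves fail to be plain or crossing, but the lemma only constrains pair-\emph{vertices}, not relation symbols, and $\CE_\plain,\CE_\across$ refer to edges on $\CV_\plain$, so this is consistent.

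The simulation of $\addcom(R)$ and $\addunion(\pi)$ is passed through unchanged, except that $M'$ updates $\CV_1,\CV_2$ (for contractions, a new vertex $S$ is plain if all its constituents lie in the same $\CV_i$, crossing cannot occur for a diagonal-based contraction vertex since $S$ is a set of vertices, not a pair, so we simply put $S$ into $\CV_i$ when $S\subseteq\CV_i$ and otherwise leave it unplaced --- but we must ensure $\CV_\pair$ stays inside $\CV_\plain\cup\CV_\across$, and contraction vertices are not pair-vertices, so they are irrelevant to the invariant). The only genuinely delicate point, and what I expect to be the main obstacle, is verifying that this interception does not change the \emph{internal run} in a way that breaks correctness: $M'$ must still be able to read off from the algebraic sketch exactly the information $M$ would have read, so that the sequence of decisions $M$ makes is faithfully reproduced. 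This requires showing that the algebraic sketch of the structure $M'$ builds determines, and is determined by, the algebraic sketch $M$ would have built --- essentially because the extra refinement steps ($R\mapsto R\cap\CE_\plain$, etc.) only split colours along a \DWL-definable relation, hence add no information beyond what \DWL\ already has access to, and because the pair-creation on a union of colours yields the same extended coherent configuration (up to the canonical renaming $\equiv$) as pair-creation on the single mixed colour would have, once one also records the partition of the new vertices induced by $\CE_\plain/\CE_\across$. One shows this by a step-by-step induction on the run, invoking \cref{lem:canData} and the isomorphism-invariance of \DWL-computations: isomorphic inputs yield identical internal runs for both $M$ and $M'$, and the structures in the clouds of $M$ and $M'$ at corresponding steps, while not equal, have the property that $M'$'s cloud is obtained from (a structure interdefinable with) $M$'s cloud by adding only \DWL-definable relations and splitting mixed colours --- operations under which the accept/reject decision is preserved. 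The polynomial-time bound is immediate: each intercepted $\addpair$ is replaced by a constant number of \DWL-operations from \cref{lem:op1,lem:op2} plus at most $|\sigma|$ many $\addpair$ calls, and the total number of vertices created is no larger than what $M$ itself creates, so the size of the structure --- and hence the length of the algebraic sketch --- stays polynomially bounded.
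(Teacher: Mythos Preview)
Your proposal has a genuine gap: it misidentifies the obstacle. You treat the only danger as a colour $R$ that is ``mixed'' in the sense of meeting both $\CE_\plain$ and $\CE_\across$, and you propose to split such a colour along this partition. But $\CE_\plain\cup\CE_\across$ does not cover $V(A)^2$ once non-plain vertices exist. Concretely: after one round of $\addpair$ on a crossing colour, you have crossing pair-vertices $w\in\CV_\across(A)$, which lie in neither $\CV_1(A)$ nor $\CV_2(A)$. Any edge $(u,w)$ with $u$ plain (or $w'$ crossing) is in neither $\CE_\plain$ nor $\CE_\across$. If $\hat M$ now executes $\addpair$ on a colour containing such edges, the resulting pair-vertex $(u,w)$ is neither plain nor crossing, and your decomposition $R\cap\CE_\plain$, $R\cap\CE_\across$ simply discards these edges, so $M'$ no longer simulates $\hat M$. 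The same problem arises from contraction vertices: you correctly note that a contraction vertex $S$ with constituents in both $\CV_1$ and $\CV_2$ (or containing crossing vertices) is ``unplaced'', and you dismiss this because $S$ is not a pair-vertex---but any later $\addpair$ with $S$ as an endpoint produces a forbidden vertex, so the invariant is not self-maintaining in the way you claim.

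The paper's construction is accordingly much heavier. It does not try to make $M'$'s cloud match $\hat M$'s cloud up to a refinement; instead it maintains a \emph{surrogate representation}: for every vertex $v$ of the simulated structure (plain, crossing, or neither) it keeps a set $P_\pa(v)$ of genuine crossing pair-vertices, and for each such crossing vertex a pair of plain anchors $p_\lt,p_\rt$. When $\hat M$ wants to add a pair $(u,v)$ for arbitrary $u,v$, the simulator instead creates pairs among the plain anchors associated to $u$ and $v$---these pairs are automatically plain or crossing---and then recovers a vertex representing $(u,v)$ by a suitable quotient. This is why the paper's bound on $|V(A)|$ is quartic in $|V(\hat A_t)|$ rather than linear: the surrogate construction creates many auxiliary vertices. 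Your linear bound (``the total number of vertices created is no larger than what $M$ itself creates'') is another symptom that the argument is missing the real work.
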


The proof of the lemma shows how to simulate a \DWL-algorithm~$\hat M$ with an almost normalised algorithm~$M$. While technically quite involved, the idea behind it is simple. To simulate~$\hat M$, we need to guarantee that all vertices that are created are actually crossing or plain, that is, they are formed from pairs of plain vertices. For this, we continuously guarantee that for each vertex~$v$ created so far, there are two unique plain vertices~$p_\lt(v),p_\rt(v)$ that are only associated with~$v$.
Instead of creating a new pair~$(v,v')$ for two arbitrary vertices we will then create the pair~$(p_\lt(v),p_\rt(v'))$ instead. The crucial point is that this way the new vertex is still plain or crossing because $p_\lt(v)$ and~$p_\rt(v)$ are plain. 

There is a second less severe issue that we need to take care of, namely vertices that are crossing could lose this property if the two plain end-points of which they are the crossing pair are contracted away. For this reason we copy vertices and use the originals when applying $\addpair$-operations. We use copies to create pairs but only ever contract originals, which resolves the second issue.
For details we refer to Appendix~\ref{app:normDWL}.

\begin{lem}\label{lem:norm}
Assume that isomorphism on some class $\CC$ is \DWL-decidable in polynomial time.
Then, isomorphism on $\CC$ is decidable by a normalised \DWL{}-algorithm in polynomial time.
\end{lem}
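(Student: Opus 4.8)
By \cref{lem:almostNorm} (together with \cref{theo:pure}) we may assume that isomorphism on $\CC$ is decided by a polynomial-time \DWL-algorithm $M$ that is \emph{almost normalised} and \emph{pure}. The plan is to build a normalised polynomial-time \DWL-algorithm $M'$ that internally simulates $M$. The guiding observation is that in an almost normalised run the structure $A$ maintained by $M$ decomposes into its \emph{plain part} $A[\CV_1(A)]\uplus A[\CV_2(A)]$, which is itself a normalised structure, and a \emph{crossing part} consisting of the crossing pair-vertices together with the crossing and ``neither'' relations, and that the crossing part carries no information beyond the \emph{unordered} pair of algebraic sketches of the two plain components. The algorithm $M'$ will therefore keep only the plain part as the coherently coloured structure in its cloud, and will maintain on its work tape a shadow copy of the full sketch $D(A)$ that $M$ expects to see, together with a translation between the vocabulary of $A$ and that of the cloud.

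Concretely, $M'$ maintains as an invariant that its cloud holds a normalised structure $A'$ isomorphic to $A[\CV_1(A)]\uplus A[\CV_2(A)]$ and that its work tape holds $D(A)$. From the sketch $D(A')$ of its cloud, $M'$ computes in polynomial time the unordered set $\{D(A[\CV_1(A)]),D(A[\CV_2(A)])\}$ and conversely, using \cref{lem:ccNorm}; since only the unordered pair is ever available, this step --- and hence $M'$ as a whole --- is isomorphism invariant, as $M'$ genuinely cannot tell which of the two components is which. From this pair the shadow $D(A)$ is recovered, because the crossing part of $A$ arose from $M'$'s record by a sequence of operations each of which has a ``free'' effect on the sketch: adding crossing pair-vertices (\cref{lem:pairData}), contracting strongly connected components of a colour (\cref{lem:comData}), and $\addunion$ and $\forget$, which change only the vocabulary $\tau$ and the symbolic subset relation and, in the case of $\addunion(\pi)$, leave the coherent configuration unchanged because $\bigcup_{R\in\pi}R(C(A))$ is a union of colour classes. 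Thus $M'$ can at every step reconstruct $D(A)$ and reproduce the internal run of $M$. When $M$ issues an operation on a \emph{plain} colour $R\in\sigma_i$ --- equivalently, by \cref{lem:ccNorm}, on a colour of the coherent configuration of the $i$-th component --- then $M'$ performs the corresponding $\addpair$ or $\addcom$ on the $i$-th component of $A'$; this is a genuine change, but it preserves normalisation, since a contraction inside one component neither touches the other nor merges them. When $M$ issues any other operation --- on a crossing colour, on a ``neither'' colour, or on a relation symbol $M'$ keeps only virtually --- then $M'$ leaves its cloud untouched and updates only the shadow. In both cases $M'$ presents $M$ with the sketch it expects, so the internal runs of $M$ and $M'$ coincide; hence $M'$ accepts exactly when $M$ does, and it decides isomorphism on $\CC$. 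The running time stays polynomial because every step of $M'$ is a polynomial-time manipulation of sketches of size polynomially bounded in those occurring in $M$'s run, and because $A'$ grows only through operations that $M$ also performs on its plain part.

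The hard part is to guarantee that the structures $A$ produced by $M$ keep this shape throughout the run, i.e.\ always decompose into two plain components plus a crossing part determined by them. Every operation of a pure $M$ respects the decomposition except one: $\addcom(R)$ applied to a crossing colour $R\in\sigma_\across$ may contract a strongly connected component containing plain vertices of \emph{both} sides into a single new vertex that is neither plain nor crossing, and the disappearance of those plain vertices from the two components cannot be imitated by any operation of a normalised algorithm, which can neither delete vertices from nor merge its two components. (This case is genuinely possible: running $M$ on two copies of the edge $K_2$, the crossing colour strongly connects all four vertices.) Resolving it is the technical core. The natural route is to show such contractions need never be carried out for real --- the crossing part is exactly what $M'$ keeps virtual, it never needs contracting for a size bound, and $M$ can be preprocessed so that each $\addcom$ of a crossing colour is replaced by an internal bookkeeping update via \cref{lem:comData} with a corresponding adjustment of the vocabulary translation --- so that afterwards $\addcom$ is applied only to plain colours, whose strongly connected components lie within a single component of $A$, making every contraction-vertex plain and keeping the decomposition intact. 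Carrying out this bookkeeping correctly, in particular maintaining the colour translation across crossing-pair creations and virtual contractions and verifying it is done isomorphism-invariantly, is where the bulk of the (deferred) argument lies.
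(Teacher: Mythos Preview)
Your approach is essentially the paper's: start from an almost normalised, pure algorithm~$\hat M$; keep only the plain part $\hat A_t[\CV_\plain(\hat A_t)]$ in the cloud of the simulating machine; and maintain the full algebraic sketch $D(\hat A_t)$ on the work tape using \cref{lem:ccNorm}, \cref{lem:pairData}, and \cref{lem:comData}. You correctly identify the two key facts that make the translation of colours work, namely \cref{lem:pairData}~Part~\ref{pair:eqfine} and \cref{lem:comData}~Part~\ref{com:fine}.

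The one substantive difference is your treatment of $\addcom$. You single out contractions of crossing colours as ``the technical core'' and propose to \emph{preprocess} $\hat M$ so that such contractions are replaced by pure bookkeeping, leaving only plain-colour contractions to be executed for real---but you then defer the actual preprocessing argument. The paper does not preprocess: in its Case~2, \emph{every} $\addcom(\hat R)$ is handled the same way, regardless of whether $\hat R$ is plain, crossing, or neither. The new sketch is computed via \cref{lem:comData}~Part~\ref{com:alg}, and in the cloud one contracts the sets $S\cap\CV_\plain(\hat A_t)$ for $S\in\sccs(\hat R(C(\hat A_t)))$ using a suitable relation symbol $E$. Part~\ref{com:fine} of \cref{lem:comData} is what guarantees that later plain colours of $\hat M$ still correspond to colours available in the simulator's cloud, so no separate preprocessing pass is needed. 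Your concern that a crossing SCC can genuinely span both plain components (as in your $K_2$ example) is a legitimate detail the paper's proof passes over quickly, but it is bookkeeping rather than the heart of the matter; framing it as where ``the bulk of the argument lies'' overstates its difficulty relative to the rest.
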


Similarly to the previous lemma, the proof of this lemma shows how to simulate a \DWL-algorithm that is almost normalised with a normalised \DWL-algorithm. The crucial insight here is Lemma~\ref{lem:pairData} which essentially says that we do not need to execute an $\addpair(R)$-operation when the relation~$R$ is crossing, since we can compute the algebraic sketch of the result without modifying the contents of the cloud.
One might be worried that we cannot continue to simulate the algorithm because the structure in the cloud was not modified, but 
\cref{lem:pairData} Part~\ref{pair:eqfine} shows that the coherent configuration does not become finer by any of the $\addpair$-operations we only simulate.  A similar statement holds for contractions by \cref{lem:comData} Part~\ref{com:fine}. Again we refer to Appendix~\ref{app:normDWL} for details.

\begin{theo}\label{theo:inv}
Let $\CC$ be a class of $\tau$-structures. 
The following statements are equivalent.
\begin{enumerate}
  \item\label{e:iso} Isomorphism on $\CC$ is \DWL-decidable in polynomial time.
  \item\label{e:inv} Some complete invariant on $\CC$ is \DWL-computable in polynomial time.
  \item\label{e:dis} There is a \DWL-distinguisher for $\CC$ that runs in polynomial time.
\end{enumerate}
\end{theo}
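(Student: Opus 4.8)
The plan is to prove the cycle of implications $\ref{e:inv}\Rightarrow\ref{e:dis}\Rightarrow\ref{e:iso}\Rightarrow\ref{e:inv}$, where the only difficult direction is the last one; the first two are essentially immediate from the definitions and the basic lemmas.

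\textbf{The easy directions.} For $\ref{e:inv}\Rightarrow\ref{e:dis}$: suppose $\CI$ is a \DWL-computable complete invariant on $\CC$. A distinguisher may run $M_\CI$ on its single input $A$, which halts with $\CI(A)$ on the work tape. We then need to turn this string difference into a difference in algebraic sketches of the cloud. This is a minor technicality: after computing $\CI(A)$ we can, using the basic operations from \cref{lem:op1}, build onto the cloud a gadget structure (e.g.\ a path whose length encodes $\CI(A)$ bit by bit, attached via a fresh relation symbol) so that the final algebraic sketch determines $\CI(A)$; then $D(A_1')\neq D(A_2')$ whenever $\CI(A_1)\neq\CI(A_2)$, and equality of sketches is forced by isomorphism invariance. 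For $\ref{e:dis}\Rightarrow\ref{e:iso}$: given a polynomial-time distinguisher $N$ for $\CC$, an isomorphism test on input $A_1\uplus A_2$ first uses \DWL-computable operations to isolate the two connected components (the input structures are connected, and we assumed a relation making each component connected; the two components are distinguishable colour-class-wise), simulates $N$ on each component separately inside the same cloud by working relative to the corresponding part of the universe, and compares the two resulting algebraic sketches; it accepts iff they coincide. Isomorphism invariance of \DWL\ guarantees no false negatives, and the distinguisher property guarantees no false positives.

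\textbf{The hard direction $\ref{e:iso}\Rightarrow\ref{e:inv}$.} This is where the machinery of normalised algorithms developed in the section is used. Starting from a polynomial-time \DWL-algorithm $M$ deciding isomorphism on $\CC$, \cref{lem:norm} gives a \emph{normalised} polynomial-time \DWL-algorithm $M'$ deciding isomorphism on $\CC$. The point of normalisation is that, by \cref{lem:ccNorm}, at every point in the run of $M'$ on an input $A_1\uplus A_2$ the cloud structure $A$ is the disjoint union of its two subrestrictions $A[\CV_1(A)]$ and $A[\CV_2(A)]$, and the algebraic sketch $D(A)$ is determined by the \emph{set} $\{D(A[\CV_1(A)]),D(A[\CV_2(A)])\}$ (and conversely each $D(A[\CV_i(A)])$ is recoverable). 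Hence we may run $M'$ as a purely local simulation on a \emph{single} input structure $B\in\CC$: maintain one component-structure in the cloud playing the role of ``$A[\CV_1]$'', and maintain the algebraic sketch of the other component symbolically on the work tape (via \cref{lem:comData} and \cref{lem:pairData} the contraction and cross-pair operations never refine the coherent configuration and their sketches are Turing-computable from the data already present, while same-side $\addpair$/$\addcom$ operations are genuinely performed on the cloud component); whenever $M'$ queries $D(A)$ we reconstruct it from the pair. The invariant we maintain is: for any two inputs $B,B'\in\CC$, the so-simulated run on $B$ visits cloud-sketch $D(B\text{-component})$ equal to the one on $B'$ iff $M'$ on $B\uplus B'$ behaves symmetrically. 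Running this simulation to completion yields, for each $B$, the final algebraic sketch $s(B):=D(B\text{-component at halt})$ written on the work tape; output $\CI(B):=s(B)$. Correctness: if $B\cong B'$ then by isomorphism invariance of \DWL\ the runs coincide and $s(B)=s(B')$. If $B\not\cong B'$, consider running the original $M'$ on $B\uplus B'$: since $M'$ decides isomorphism it rejects, and — by \cref{lem:ccNorm} — the run of $M'$ on $B\uplus B'$ is exactly the ``interleaving'' of the two single-structure simulations, so at some point the two component-sketches must differ (otherwise $M'$, seeing only the set of component sketches, could not distinguish $B\uplus B'$ from $B\uplus B$ which it must accept); tracing this back, the final recorded sketches already differ, i.e.\ $s(B)\neq s(B')$. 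Thus $\CI$ is a complete invariant, and it is computable in polynomial time because $M'$ is and all the sketch manipulations of \cref{lem:canData,lem:comData,lem:pairData,lem:ccNorm} are polynomial.

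\textbf{Main obstacle.} The delicate point is making precise the claim that the run of the normalised algorithm $M'$ on $B\uplus B'$ ``decomposes'' into two independent single-structure simulations whose only coupling is through the \emph{unordered} pair of component sketches, and then arguing that non-isomorphic $B,B'$ must produce differing component sketches at some step. This requires carefully exploiting that a normalised algorithm never creates crossing pairs that get nested further, that crossing colours carry no information beyond what the two sides already determine (\cref{lem:pairData}), and that contractions respect the component partition (\cref{lem:comData}); one must also check that the bookkeeping of which recorded sketch is ``side~1'' versus ``side~2'' can be done isomorphism-invariantly, i.e.\ that swapping $B$ and $B'$ merely swaps the two sketches. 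Given \cref{lem:norm,lem:ccNorm,lem:comData,lem:pairData}, this amounts to an induction on the number of steps of $M'$, but it is the step where all the normal-form work pays off and where the proof needs the most care.
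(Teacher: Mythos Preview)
Your easy directions are fine (though for $\ref{e:inv}\Rightarrow\ref{e:dis}$ the paper avoids your gadget construction by a cleaner trick: simply modify the invariant-computing machine to never execute $\forget$ and to copy elements before any $\addcom$; then once the cloud sketches diverge they cannot converge again).

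The hard direction $\ref{e:iso}\Rightarrow\ref{e:inv}$ has a genuine gap. Your argument establishes correctly that if $B\not\cong B'$ then at some step $T$ the component sketches $d_T$ (in the run of $M'$ on $B\uplus B$) and $d'_T$ (in the run of $M'$ on $B'\uplus B'$) differ. But your invariant is only the \emph{final} sketch $s(B)=d_{\text{final}}$, and nothing prevents $d_{\text{final}}=d'_{\text{final}}$ even though $d_T\neq d'_T$: once the component sketches diverge, $M'$ sees different singleton sets $\{d_T\}\neq\{d'_T\}$ and may take entirely different actions in the two self-pair runs, after which $\forget$ or $\addcom$ operations could make the sketches converge again. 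The phrase ``tracing this back, the final recorded sketches already differ'' is an unsupported leap. Relatedly, your interleaving claim---that the run on $B\uplus B'$ is the interleaving of the two single-structure runs---only holds up to the first divergence step $T$; after that, $M'$ on $B\uplus B'$ sees the two-element set $\{d_T,d'_T\}$, which matches neither singleton, and may take a third action.

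The paper's fix is strikingly simple and removes the need for any delicate interleaving analysis: take as the invariant the \emph{entire internal run} $\CI(A_1):=\rho_\norm(A_1\uplus A_1)$, not just the final sketch. (There is also no need to keep one component symbolic; one simply puts two copies of $A_1$ in the cloud and runs $M_\norm$.) Now $\CI(A_1)=\CI(A_2)$ directly forces the component sketches to agree at \emph{every} step, and then a one-line induction using \cref{lem:ccNorm} (which says $D(A)$ is determined by the unordered pair of component sketches) shows that the run of $M_\norm$ on $A_1\uplus A_2$ coincides step-by-step with this common run. Since that run is accepting ($M_\norm$ accepts $A_1\uplus A_1$), we get $A_1\cong A_2$.
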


\begin{proof}
To see that \ref{e:dis} implies \ref{e:iso}, let
$M$ be an distinguisher for $\CC$.
Given $A=A_1\uplus A_2$, we simulate the distinguisher on $A_1,A_2$ in parallel.
If the algebraic sketches $D(A_1)$ and $D(A_2)$ are distinct at some point in time, we
stop the simulation and reject isomorphism.
Otherwise, if the algebraic sketches coincide during the entire run, we accept.
Since $M$ is a distinguisher, the structures $A_1$ and $A_2$ are
isomorphic.

To see that \ref{e:inv} implies  \ref{e:dis}, let $M$ be a
\DWL-algorithm that computes a complete invariant on $\CC$. Let
$A_1,A_2$ be non-isomorphic. Then the runs $\rho_1,\rho_2$ of $M$ on inputs $A_1,A_2$
are distinct. The only way this can happen is that at some point
during the computation the algebraic sketches are distinct. We need to
make sure that they remain distinct until the end of the
computation. We can achieve this by never executing a
$\forget$-operation and by making a copy of the elements that are
deleted during a $\addcom$-operation before carrying out the
contraction. Copying elements can easily be done by applying
$\addpair$ to the corresponding diagonal colours. Since we never
discard elements or relations, once distinct, the algebraic sketches can never
become equal again. Of course now we always
have a richer structure with a potentially finer coherent
configuration in the cloud, but we can still simulate the original
computation of $M$.

It remains to show that \ref{e:iso} implies \ref{e:inv}.
Let $M$ be a \DWL-algorithm that decides isomorphism on $\CC$ in polynomial-time, i.e.,
it accepts $A=A_1\uplus A_2$ if and only if $A_1,A_2$ are isomorphic.
By \cref{lem:norm}, there is a normalised \DWL-algorithm $M_\norm$
that decides isomorphism on $\CC$.
We let $\rho_\norm(A)\in\{0,1\}^*$ denote the internal run of the \DWL-algorithm $M_\norm$ on input $A$.
We claim that the function $\CI(A_1):=\rho_\norm(A_1\uplus A_1)$ defines
a complete invariant that is \DWL-computable in polynomial time.
To compute $\CI$ with a \DWL-algorithm, we can easily simulate
$M_\norm$ on input $A_1\uplus A_1$ and track its internal run.
We show that $\CI$ defines a complete invariant.
Clearly, if $A_1$ and $A_2$ are isomorphic, then $\CI(A_1)=\CI(A_2)$.
On the other hand,
assume that $\CI(A_1)=\CI(A_2)$.
By definition of $\CI$, we have $I:=\rho_\norm(A_1\uplus A_1)=\rho_\norm(A_2\uplus A_2)$.
By \cref{lem:ccNorm}, we have that $I=\rho_\norm(A_1\uplus A_2)$
since $D(A)=D(A')$ if and only if $\{D(A[\CV_i(A)])\mid i\in \{1,2\}\}=\{D(A'[\CV_i(A')])\mid i\in \{1,2\}\}$
holds during the entire run.
Since $I$ is an accepting run and since $M_\norm$ decides isomorphism, it follows that $A_1,A_2$
are isomorphic.
\end{proof}

We say that a class $\CG$ of vertex-coloured graphs is \emph{closed under
colouring} if all structures obtained from a graph in $\CG$ by changing
the colouring also belong to $\CG$.
Formally, a vertex-coloured graph is a structure $A$ with vocabulary $\tau=\{E,C_1,\ldots,C_t\}$
where $E(A)$ is a binary relation and $C_i(A)$ are diagonal relations (which encode the $i$-th vertex colour).
A class $\CG$ is closed under vertex-colouring if for each $\{E,C_1,\ldots,C_t\}$-structure $A\in\CG$
also the $\{E,C_1',\ldots,C_s'\}$-structure $A'$ belongs to $\CG$ where $E(A)=E(A')$ and $C_i'(A')$ are arbitrary diagonal relations.

\begin{theo}[\cite{gur97}]
Let $\CG$ be a class of coloured graphs that is closed under colouring.
The following statements are equivalent.
\begin{enumerate}
  \item Some complete invariant on $\CC$ is computable in polynomial time.
  \item Some canonical form on $\CC$ is computable in polynomial time.
\end{enumerate}
\end{theo}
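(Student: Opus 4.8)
The implication from the second statement to the first is immediate, since a canonical form is by definition already a complete invariant; applying a polynomial-time canonisation algorithm thus yields a polynomial-time complete invariant. So the plan is to prove that the first statement implies the second: from a polynomial-time algorithm computing a complete invariant $\CI$ on $\CG$ I would build a polynomial-time canonisation algorithm, using the classical individualisation scheme with $\CI$ as the symmetry-breaking oracle. The role of the closure of $\CG$ under colouring is precisely that it lets us individualise vertices by fresh colours while staying inside the class, so that $\CI$ remains applicable to all the auxiliary structures we produce.

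The construction I have in mind builds, for an input coloured graph $G\in\CG$ with $n:=|V(G)|$, a linear order on $V(G)$ in $n$ rounds. After round $k$ the algorithm has selected $k$ vertices and turned $G$ into an individualised graph $G^{(k)}$ by adding, for each $j\le k$, a fresh diagonal colour $C^{\mathrm{ind}}_j$ carried exactly by the $j$-th selected vertex; these colours are pairwise distinct and distinct from the original colours of $G$, and by closure under colouring $G^{(k)}\in\CG$. In round $k+1$, for each not-yet-selected vertex $u$ I would form the structure $G^{(k)}_u$ obtained from $G^{(k)}$ by also colouring $u$ with $C^{\mathrm{ind}}_{k+1}$ (still in $\CG$), and then select \emph{any} $u^*$ that minimises $\CI(G^{(k)}_u)$ with respect to a fixed linear order on $\{0,1\}^*$ (a deterministic machine may just take the first such $u^*$ it encounters), setting $G^{(k+1)}:=G^{(k)}_{u^*}$. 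After $n$ rounds every vertex carries a distinct colour $C^{\mathrm{ind}}_i$; relabelling the vertex of colour $C^{\mathrm{ind}}_i$ by $i$ and discarding the auxiliary colours gives the output $\Can(G)$, an $\{E,C_1,\dots,C_t\}$-structure on $\{1,\dots,n\}$.

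Correctness would hinge on the following claim, proved by induction on $k$: \emph{if $G,H\in\CG$ are isomorphic, then the individualised graphs $G^{(k)},H^{(k)}$ produced on the two inputs are isomorphic.} The base case is $G\cong H$. For the step, fix a colour-preserving isomorphism $\psi\colon G^{(k)}\to H^{(k)}$ (which maps selected vertices to selected vertices since it preserves the $C^{\mathrm{ind}}_j$); then $\psi$ maps $G^{(k)}_u$ isomorphically onto $H^{(k)}_{\psi(u)}$, so $\CI(G^{(k)}_u)=\CI(H^{(k)}_{\psi(u)})$ for all candidates $u$, whence both sides share the same minimum $\CI$-value and $\psi$ bijects their minimiser sets. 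Hence the selected graphs $G^{(k+1)}$ and $H^{(k+1)}$ have equal $\CI$-value, and as both lie in $\CG$ and $\CI$ is a \emph{complete} invariant, $G^{(k+1)}\cong H^{(k+1)}$. Applying this with $k=n$: the fully individualised $G^{(n)},H^{(n)}$ are isomorphic, but in a coloured graph whose vertices carry pairwise distinct colours any isomorphism is forced to match equal colours and is therefore unique; translating this unique isomorphism through the relabelling yields $\Can(G)=\Can(H)$. Also $\Can(G)\cong G$ (it is a renaming of $G$) and $\Can(G)\in\CG$ ($\CG$ is isomorphism-closed). The running time is polynomial: $n$ rounds, each with at most $n$ evaluations of $\CI$ on structures of size polynomial in $n$, plus polynomial bookkeeping.

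I do not expect a deep obstacle --- the statement is folklore --- but two points would need care, and the first is the delicate one. The tie-breaking rule must depend \emph{only} on $\CI$-values and never on the arbitrary string encoding of the input: if among several $\CI$-minimisers the algorithm picked, say, the ``lexicographically least vertex name'', isomorphism-invariance of the output would collapse. The whole argument works because completeness of $\CI$ makes the choice among minimisers immaterial up to isomorphism (in particular, letting $\psi$ range over automorphisms of $G^{(k)}$ shows that ties are genuinely symmetric). The second point is routine: one should check that closure under colouring is used in full strength, namely that adding and later removing an unbounded number $s$ of arbitrary diagonal colours is permitted --- which it is, since the definition allows passing from any $\{E,C_1,\dots,C_t\}$-structure in $\CG$ to any $\{E,C_1',\dots,C_s'\}$-structure with the same edge relation and arbitrary diagonal $C_i'$ --- so every intermediate $G^{(k)}_u$ stays in $\CG$ and $\CI$ applies throughout.
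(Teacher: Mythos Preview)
The paper does not prove this theorem at all: it is stated with the citation \texttt{[gur97]} and used as a black box to derive \cref{cor:canon}. So there is no in-paper proof to compare against. Your argument is precisely the classical individualisation-and-invariant construction due to Gurevich, and it is correct; in particular, your inductive claim that $G^{(k)}\cong H^{(k)}$ is exactly the right invariant to carry, and your observation that completeness of $\CI$ is what makes the choice among $\CI$-minimisers immaterial (so that an encoding-dependent tie-break is harmless) is the crux of the matter.
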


\noindent
This gives us the following corollary.

\begin{cor}\label{cor:canon}
Let $\CG$ be a class of coloured graphs closed under colouring such that
there is a polynomial time {\DWL} algorithm deciding isomorphism on
$\CG$. Then there is a polynomial time canonisation algorithm for
$\CG$.
\end{cor}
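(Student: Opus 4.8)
The plan is to obtain the corollary by combining \cref{theo:inv} with the preceding theorem of Gurevich~\cite{gur97}. By assumption there is a polynomial-time \DWL-algorithm deciding isomorphism on $\CG$, so the implication from \ref{e:iso} to \ref{e:inv} in \cref{theo:inv} (applied with $\CC:=\CG$) yields a \DWL-algorithm $M$ computing a complete invariant $\CI$ on $\CG$ in polynomial time. The only gap between this and an application of Gurevich's theorem is that the latter speaks of complete invariants \emph{computable in polynomial time} in the ordinary sense (by a Turing machine receiving a string encoding of the input graph), whereas \cref{theo:inv} delivers \DWL-computability. So the first step is to argue that polynomial-time \DWL-computability implies ordinary polynomial-time computability.

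To this end I would simulate $M$ by an ordinary Turing machine $M'$ that keeps the coherently coloured structure $(A,C(A))$ of the cloud explicitly on an auxiliary tape. On input a string encoding of a graph $G\in\CG$, the machine $M'$ first computes the coarsest coherent configuration $C(G)$ and the algebraic sketch $D(G)$ in polynomial time, using the Weisfeiler--Leman algorithm from the theorem quoted in \cref{sec:prel}, and then follows $M$ step by step: ordinary transitions are copied verbatim, and each of the transitions $\addpair(X)$, $\addcom(X)$, $\addunion(\pi)$, $\forget(X)$ is executed by updating the explicit structure $A$ exactly as specified in \cref{sec:dwl} and then recomputing $C(A)$ and $D(A)$. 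Each such update, together with the recomputation of the coherent configuration, runs in time polynomial in the current size of $A$. Since $M$ runs in polynomial time in $n=|D(G)|$, and since by definition writing down each updated algebraic sketch is charged its full encoding length, the sketch size---and hence, because intersection numbers are written in unary so that the cloud universe $V(A)$ is polynomially bounded in $|D(A)|$ (as observed in \cref{sec:prel})---stays polynomially bounded in $n$ throughout the computation, even transiently. Hence $M'$ runs in polynomial time, and reading $\CI(G)$ off the work tape when $M$ halts gives an ordinary polynomial-time algorithm computing the complete invariant $\CI$ on $\CG$.

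With this in hand the corollary follows at once: $\CG$ is closed under colouring, so the preceding theorem of Gurevich~\cite{gur97} turns the polynomial-time complete invariant $\CI$ into a polynomial-time computable canonical form for $\CG$, and an algorithm computing this canonical form is by definition a polynomial-time canonisation algorithm for $\CG$.

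I do not anticipate a real obstacle here, since the substantive content is already packed into \cref{theo:inv}. The only point that deserves a line of care is the one isolated above, namely that the polynomial running-time bound forces the cloud to remain polynomially bounded, so that the cloud maintenance (Weisfeiler--Leman together with the four structure-modifying operations) can be carried out explicitly by an ordinary machine in polynomial time; everything else is routine bookkeeping.
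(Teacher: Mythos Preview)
Your proof is correct and follows the same approach as the paper, which simply records the corollary as an immediate consequence of \cref{theo:inv} and Gurevich's theorem. The only difference is that you spell out the simulation argument showing that polynomial-time \DWL-computability implies ordinary polynomial-time computability, whereas the paper leaves this implicit, relying on the remark in \cref{sec:dwl} that ``the definition of polynomial time remains unchanged if we take polynomial costs into account for maintaining the cloud.''
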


\begin{rem}
  It can be shown that \cref{cor:canon} also holds for arbitrary
  time-constructible time bounds of the form $T^{\CO(1)}$ where $T$ is
  at least linear.  This can be shown using padding arguments.
\end{rem}

\section{Equivalence between \CPT{} and \DWL{}}\label{sec:cptdwl}

In this section, we prove that \DWL\ has the same expressiveness as
the logic \CPT, choiceless polynomial time with counting.  As opposed
to the previous sections, in this section we use fairly standard
techniques from finite model theory. For this reason, and to avoid
lengthy definitions, we only describe the
high level arguments and omit the technical details. %

Let us start with a very brief introduction to \CPT, which follows the
presentation of \cite{gragro15}. We start by introducing a language
that we call $\BGS$ (for Blass, Gurevich, Shelah, the authors of the
original paper introducing choiceless polynomial time). The syntactical
objects of \BGS\ are \emph{formulas} and \emph{terms}; the latter play
the role of algorithms of \DWL. Like \DWL-algorithms, \BGS-terms operate on
a $\tau$-structure $A$, the \emph{input structure}. The terms are
interpreted over the hereditarily finite sets over $V(A)$, that is,
elements of $V(A)$, sets of elements of $V(A)$, sets of sets of
elements, et cetera.
We denote the set of all these hereditarily
finite sets by $\HF(A)$. \emph{\BGS-formulas} are Boolean combinations
of formulas $t_1=t_2$, $t_1\in t_2$, and $R(t_1,t_2)$, where
$t_1,t_2$ are terms and $R$ is in the vocabulary of the input
structure. The formulas $t_1=t_2$ and $R(t_1,t_2)$ have the obvious
meaning, $t_1\in t_2$ means that $a_1\in\HF(A)$
interpreting $t_1$ is an element of $a_2\in\HF(A)$
interpreting $t_2$.  \emph{Ordinary \BGS-terms} are formed using the
binary function symbols $\Pair$, $\Union$, the unary function symbols
$\TheUnique$, $\Card$, and the constant symbols $\Empty$ and
$\Atoms$. The constants are interpreted by the empty set and the set
$V(A)$, respectively. (The elements of $V(A)$ are the ``atoms'' or
``urelements'' in $\HF(A)$.) To
define the meaning of the function symbols, let $t_1$, $t_2$ be terms
that are interpreted by $a_1,a_2\in\HF(A)$. Then $\Pair(t_1,t_2)$ is
interpreted by $\{a_1,a_2\}$ and $\Union(t_1,t_2)$ is interpreted by
$a_1\cup a_2$, where we treat the union with an atom like the union
with the empty set. $\TheUnique(t_1)$ is interpreted by $b$ if
$a_1=\{b\}$ is a singleton set and by $\emptyset$
otherwise. $\Card(t_1)$ is interpreted by the cardinality of $a_1$
viewed as a von-Neumann ordinal.  Besides the ordinary terms,
BGS-logic has \emph{comprehension terms} of the form
\begin{equation}\label{eq:comp}
  \tag{$\star$}
\{ t\colon x\in u\colon \phi\}.
\end{equation}
Here $t,u$ are terms, $x$ is a variable that is not free in $u$, and
$\phi$ is a formula. To emphasise the role of the variable $x$, we
also write
$\{ t(x)\colon x\in u\colon \phi(x)\}$. This term
is interpreted by  the set of all values $t(a)$, where $a\in \HF(A)$ is an element of the
set defined by the term $u$, and $a$ satisfies the formula $\phi(x)$. Note
that $t,u$, and $\phi$ may have other free variables besides
$x$. The third type of \BGS-terms are \emph{iteration terms}. For
every term $t$ that has exactly one free variable $x$ we form a new
term $t^*$. We also write $t(x)$ and $t(x)^*$ to emphasise the role of
the free variable $x$. The term $t^*$ is
interpreted by the fixed point of the sequence
$t(\emptyset),t(t(\emptyset),t(t(t(\emptyset))),\ldots$, if such a
fixed point exists, or by $\emptyset$ if no fixed point exists.
The \emph{free variables} of terms
and formulas are defined in the natural way, stipulating that the variable $x$ in
a comprehension term of the form \eqref{eq:comp} and in an iteration term
$t(x)^*$ be bound. As usual, a \emph{sentence} is a formula without
free variables.

\CPT\ is the ``polynomial-time fragment'' of \BGS. To define \CPT\
formally, we restrict the length of iterations to be polynomial in the
size of the input structure, and we restrict the number of elements of
all sets that appear during the ``execution'' of a term to be
polynomial.

This completes our short description of the syntax and semantics of
$\BGS$ and $\CPT$. For details, we refer the reader to \cite{gragro15} (or
\cite{blagurshe99,dawricros08} for other, equivalent, presentations of
the language).

We say that a property $\CP$ of $\tau$-structures is \emph{\CPT-definable}
if there is a \CPT-sentence $\phi$ satisfied precisely by the
$\tau$-structures in $\CP$.

\begin{lem}\label{lem:dwlTOcpt}
  Let $\CP$ be a property of $\tau$-structures that is decidable by a
  polynomial time \DWL-algorithm. Then $\CP$ is $\CPT$-definable.
\end{lem}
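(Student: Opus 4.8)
The plan is to simulate a polynomial-time \DWL-algorithm $M$ by a \CPT-sentence, exploiting the fact that \CPT\ can manipulate hereditarily finite sets and thus can represent, step by step, the entire configuration of $M$: the contents of the two Turing tapes, the state, the head positions, and — crucially — the coherently coloured structure $(A,C(A))$ in the cloud together with its algebraic sketch $D(A)$. The key observation is that every object that \DWL\ creates is isomorphism-invariantly definable as a hereditarily finite set: a pair-vertex $(u,v)$ created by $\addpair$ is literally the set-theoretic pair $\{\{u\},\{u,v\}\}$ (or a tagged variant thereof) over the atoms $V(A)$, and a contracted component $S$ created by $\addcom$ is literally the set $S\subseteq\HF(A)$ of the vertices it contains. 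Since \DWL-algorithms run in polynomial time, the universe of the cloud structure stays polynomially bounded throughout, and the nesting depth of these sets is bounded by the number of $\addpair/\addcom$ steps, hence polynomial; this is exactly the resource bound that \CPT\ permits.

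The construction proceeds as follows. First I would fix a \CPT-definable encoding of a full \DWL-configuration as an element of $\HF(A)$: a tuple (built from $\Pair$) whose components encode the tape contents as sets of (position, symbol)-pairs, the state and head positions as von-Neumann ordinals obtained via $\Card$, the cloud structure $A'$ as a set of atoms-and-created-objects together with its relations (each relation a set of pairs), and the algebraic sketch $D(A')$. The relation symbols in $\tau$ and colours in $\sigma$ are finite binary strings, which \CPT\ can represent as hereditarily finite sets in a fixed way. Second, I would write a \CPT-term $\step$ that maps one configuration to the next: the ordinary Turing transitions are entirely combinatorial and definable using Boolean operations and comprehension terms over the (polynomially bounded) tape; the three cloud-modifying transitions are definable by comprehension terms that construct the new pair-vertices or contracted components explicitly as above; and the recomputation of the coarsest coherent configuration $C(A')$ and the algebraic sketch $D(A')$ after each modification is handled because this computation is polynomial-time (Theorem from \cite{immlan90,weilem68} and Lemma~\ref{lem:canData}), and \emph{any} polynomial-time function on an ordered input is \CPT-definable — here the cloud structure is not ordered, but the \emph{result} $D(A')$ is an isomorphism-invariant object, and the standard Immerman–Vardi-style argument inside \CPT\ (using iteration terms to run a polynomial-time fixed-point computation over the polynomially many atoms) yields it; alternatively one observes that 2-\WL\ stabilisation is itself a fixed-point process directly expressible by an iteration term. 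Third, I would iterate $\step$ using the iteration-term construction $t(x)^*$, with the polynomial iteration bound of \CPT\ matching the polynomial running-time bound of $M$, starting from the \CPT-definable initial configuration (cloud $=(A,C(A))$, sketch $D(A)$ on the interaction tape, empty work tape). Finally, the sentence $\phi$ asserts that the fixed-point configuration is a halting configuration with $1$ written under the work-tape head, which is a simple \CPT-definable property of the encoded configuration.

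The main obstacle, and the step deserving the most care, is the internal recomputation of the coarsest coherent configuration and the canonical algebraic sketch after each cloud modification: one must argue that this polynomial-time operation can be carried out inside \CPT\ even though the cloud structure carries no linear order. The clean way to do this is to note that computing the coarsest coherent configuration refining $A'$ is a colour-refinement fixed-point computation — iteratively refine the colour of each pair by the multiset of intersection data of neighbouring colours — and such multiset-based refinement is precisely the kind of computation \CPT\ performs natively via comprehension terms and $\Card$ (for counting) inside an iteration term; the canonicity of $C(A')$ from Lemma~\ref{lem:canData} ensures the result is well-defined and isomorphism-invariant, so no choice is needed. A secondary technical point is bookkeeping the identities of freshly created vertices so that successive $\addpair/\addcom$ operations compose correctly; this is handled by always representing created objects by their canonical set-theoretic form over $\HF(A)$ rather than by ad hoc names, which makes the disjoint-union and diagonal-relation bookkeeping in the \DWL\ definitions go through verbatim. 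Everything else — the bound on set sizes and iteration length, the Boolean manipulation of tapes, the final acceptance test — is routine given the polynomial running-time hypothesis on $M$.
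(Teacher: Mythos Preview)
Your proposal is correct and follows essentially the same approach as the paper: represent the cloud structure and the Turing-machine configuration as hereditarily finite sets, simulate each step (including the four cloud operations) by a \CPT-term, and handle the recomputation of the coarsest coherent configuration via a fixed-point argument. The paper's presentation differs only in emphasis: it first sets up an ordered domain of von-Neumann ordinals and invokes Immerman--Vardi plus the \FPC-definability of 2-\WL\ (citing \cite{graott93,ott97}) rather than describing the refinement fixed-point directly, but this is the same mechanism you sketch.
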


\begin{proof}
  We show that we can simulate a polynomial time $\DWL$-algorithm in
  $\CPT$. Formally, for this we have to define a \CPT-term that
  simulates a single computation step and then iterate this term, but
  we never go to this syntactic level. We only describe the
  high-level structure of the simulation by explaining how we can
  simulate the different steps carried out during a \DWL-computation. 

  A basic fact that we use in this proof is that $\CPT$ is at least as
  expressive as fixed-point logic with counting
  \FPC~\cite{blagurshe99}. Moreover, we use the Immerman-Vardi
  Theorem~\cite{imm82,var82} stating that on ordered structures, \FPC\ (even
  fixed-point logic without counting) can simulate all polynomial time computations. 

  In our \CPT-simulation, we first generate the set $N$ of the first $n$
  von-Neumann ordinals, where $n$ is the size of the input
  structure, together with the natural linear order $\le$ of these ordinals.
  We use the ordered domain $(N,\le)$ to simulate the computation carried out by our \DWL-algorithm. Since we are considering a
  polynomial time computation, we can do this in \FPC\ and hence in
  \CPT.

  Moreover, during the computation we always maintain a copy of the
  structure $A$ in the cloud. The elements as well as the relations of
  $A$ are represented by hereditarily finite sets. To represent pairs
  of elements, we use the usual von-Neumann encoding of ordered pairs.
  We can represent the relations of $A$ as sets of pairs, which
  are again hereditarily finite sets. To maintain the name of a
  relation, for every relation we create a pair $(x,r)$ consisting of
  the relation, represented by a set $x$, and its name $r$, which is a
  von-Neumann ordinal whose binary representation is the name of the
  relation.

  Now an $\addpair$ step can easily be simulated. For the $\addcom$-operation, we
  note that strongly connected components are definable in fixed-point
  logic. For $\addunion$, we use the $\Union$-operation, and $\forget$
  is trivial.

  It remains to explain how we deal with the coherent configurations
  that the \DWL-algorithm maintains during its computation. For this,
  we use the fact that the coarsest coherent configuration of a
  structure, that is, the ordered partition of the pairs of elements
  into the colour classes computed by the 2-dimensional Weisfeiler
  Leman algorithm, is definable in the logic
  \FPC~\cite{graott93,ott97}. Thus we can compute the coherent
  configuration during our simulation, and using the $\Card$-operation
  we can extract the algebraic sketch.  This enables us to simulate
  all parts of a \DWL-computation in \CPT.
\end{proof}

To prove the converse direction that $\DWL$ can decide all
\CPT-definable properties, rather than
working with \CPT\ directly, we will use the equivalent
\emph{interpretation logic} introduced by Grädel, Pakusa, Schalthöfer
and Kaiser~\cite{grapakschalkai15}.

Interpretation logic is based on the logic $\FOH{}$, first-order logic with a \emph{Härtig} (or
\emph{equal cardinality}) quantifier: the meaning of the
formula $\textsf{H}x.\phi_1(x).\phi_2(x)$ is that the number of elements satisfying $\phi_1$ equals the number of elements
satisfying $\phi_2$. Next, we need to define a specific form of syntactical
interpretation. 
A (2-dimensional)
\emph{$[\tau,\tau']$-interpretation} (over \FOH{}) is a tuple
\[
  \CI\!\!=\!\!\big(\phi_V(x_1,x_2),\!\phi_=((x_1,x_2),\!(x_1',x_2')),\!
  \phi_E((x_1,x_2),\!(x_1',x_2'))_{E\in\tau'}\big)
\]
of $\FOH{}$-formulas.
Such an interpretation maps each $\tau$-struc\-ture $A$ to a $\tau'$-structure
$\CI(A)$ defined as follows. Let $V:=\{(v_1,v_2)\in V(A)^2\mid
A\models \phi_V(v_1,v_2)\}$, and let 
$\sim$ be the finest equivalence
relation on $V$ such that
  $(v_1,v_2)\sim(v_1',v_2')$ for all $(v_1,v_2),(v_1',v_2')\in V$
  with $A\models\phi_=((v_1,v_2),(v_1',v_2'))$. Then the universe of $
\CI(A)$ is $V/_\sim$, and for every relation symbol $E\in\tau'$, the
relation $E(\CI(A))$ contains all pairs $(a,a')\in (V/_\sim)^2$ such
that there are $(v_1,v_2)\in a$, $(v_1',v_2')\in a'$ with
$A\models\phi_E(v_1,v_2,v_1',v_2')$. 

Now we are ready to define interpretation logic \IL.
An $\IL{}[\tau]$-program is a tuple
$\Pi=(\CI_\init,\CI_\step,\phi_\halt,\phi_\out)$ where for some
vocabulary $\tau_\work$, 
$\CI_\init$ is a $[\tau,\tau_\work]$-interpretation, $\CI_\step$ is a $[\tau_\work,\tau_\work]$-interpretation,
and $\phi_\halt$ and $\phi_\out$ are $\tau_\work$-sentences.
Let $A$ be a $\tau$-structure.
A \emph{run} of $\Pi$ on $A$ is a sequence of $\tau_\work$-structures $A_1,\ldots,A_m$ where $A_1:=\CI_\init(A)$
and $A_{i+1}:=\CI_\step(A_i)$ and $A_i\nvDash\phi_\halt$ for all $1\leq i< m$
and $A_m\vDash\phi_\halt$.
The program \emph{accepts} $A$ if $A_m\vDash\phi_\out$, otherwise it \emph{rejects}.
We say that a program $\Pi$ \emph{decides} a property $\CP$ of $\tau$-structures
if it accepts an input structure $A$ if $A\in\CP$ and rejects $A$
otherwise.

We say that a program $\Pi$ runs in polynomial time
if $m+\sum_{1\leq i\leq m}|V(A_i)|$ is bounded by a polynomial.

\begin{theo}[\cite{grapakschalkai15}]\label{theo:pilTOcpt}
  For every property $\CP$ of $\tau$-structures,
  the following statements are equivalent.
\begin{enumerate}
  \item $\CP$ is decidable by a polynomial-time \IL{}-program.
  \item $\CP$ is \CPT-definable.
\end{enumerate}
\end{theo}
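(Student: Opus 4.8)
The plan is to prove the two implications by mutual simulation, in the same spirit as the simulation carried out in \cref{lem:dwlTOcpt}: in each direction one represents the relevant finite objects --- hereditarily finite sets on the $\CPT$ side, $\tau_\work$-structures on the $\IL$ side --- inside a single structure equipped with a membership edge relation, and one shows that a single computation step of the one model is realised by a single bounded-resource construction of the other. Throughout I would use the standard facts that $\CPT$ subsumes $\FPC$ and that the Härtig quantifier of $\FOH$ is interchangeable with cardinality comparison, so that every $\FO$- or $\FOH$-definable relation over an already-built structure can be extracted either by an $\FOH$-formula or by a $\CPT$-term using $\Card$ together with comparison of von-Neumann ordinals.

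For the direction from $\CPT$ to polynomial-time $\IL$, I would fix a $\CPT$-sentence deciding $\CP$ and design an $\IL[\tau]$-program that evaluates it. The work vocabulary $\tau_\work$ carries a binary relation for set membership, a unary relation marking the atoms, and a handful of bookkeeping relations (a pointer into the syntax tree of the sentence, encodings of the intermediate values computed so far, loop counters, and the Boolean value produced so far). A $\tau_\work$-structure then encodes the collection of hereditarily finite sets that are currently ``alive'' during a bottom-up evaluation together with the current control state. The interpretation $\CI_\step$ performs one evaluation step: $\Pair$ and $\Union$ introduce at most one new element, for which the second interpretation dimension supplies a name as a pair; $\TheUnique$, $\Empty$, $\Atoms$ are immediate; $\Card$ uses the Härtig quantifier; a comprehension step $\{t(x)\colon x\in u\colon\phi(x)\}$ is realised by creating, through the pairing dimension, one fresh element per element $x$ of the value of $u$ that satisfies $\phi$, with its membership edges wired accordingly; and an iteration term is unfolded one round at a time, driven by a loop counter. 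Because the sentence lies in $\CPT$, the iteration lengths and the sizes of all sets appearing during evaluation are polynomially bounded in $|V(A)|$; this bound transfers directly to a polynomial bound on the number of steps $m$ and on every $|V(A_i)|$, so the program runs in polynomial time. The formula $\phi_\halt$ detects that the top-level evaluation has terminated and $\phi_\out$ reads off the stored Boolean value.

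For the converse, let $\Pi=(\CI_\init,\CI_\step,\phi_\halt,\phi_\out)$ be a polynomial-time $\IL[\tau]$-program; I would build a single $\CPT$-term simulating $\Pi$ on the input $A$. First, a single $2$-dimensional $\FOH$-interpretation $\CI$ is simulated as follows: the raw universe $\{(v_1,v_2)\mid A\models\phi_V(v_1,v_2)\}$ is a comprehension term collecting the relevant von-Neumann pairs out of $\Atoms\times\Atoms$; the equivalence $\sim$ is the reflexive-symmetric-transitive closure of the relation defined by $\phi_=$, which is computed by an inner iteration term; a set of representatives, one per $\sim$-class, is obtained by another comprehension term (choosing a representative within a class is legitimate since the class itself is available as a set); and each relation $E(\CI(A))$ is a comprehension term over pairs of these representatives. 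Having encoded one application of an interpretation, the run of $\Pi$ is the iteration of the term simulating $\CI_\step$, started from the term simulating $\CI_\init(A)$ and halted once the $\CPT$-formula equivalent to $\phi_\halt$ holds; this is packaged as an iteration term $t(x)^*$, with $\phi_\out$ giving the final answer. The polynomial bound on $m+\sum_{1\le i\le m}|V(A_i)|$ guarantees polynomially many outer rounds and polynomial size for every intermediate structure, hence for every set built by the term, so the term is a $\CPT$-term.

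The step I expect to be the main obstacle is the resource bookkeeping that has to keep both polynomial-resource measures alive through the translation simultaneously. In the $\CPT$-to-$\IL$ direction one must verify that the step-by-step evaluation of a $\CPT$-term never needs more than polynomially many hereditarily finite sets alive at once --- this is exactly where the polynomial restriction in the definition of $\CPT$ is used essentially, and some care is needed in scheduling the evaluation of comprehension and iteration subterms. In the $\IL$-to-$\CPT$ direction the delicate point is implementing the quotient by the finest equivalence refining $\phi_=$ (a transitive closure) and the choice of class representatives without blowing up set sizes and without leaving the isomorphism-invariant fragment. These are routine but somewhat intricate; full details can be found in \cite{grapakschalkai15}.
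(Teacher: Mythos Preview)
The paper does not prove this theorem at all: it is quoted as a result of Gr\"adel, Pakusa, Schalth\"ofer, and Kaiser \cite{grapakschalkai15} and used as a black box in the proof of \cref{theo:dwlcptpil}. So there is no ``paper's own proof'' to compare against; your sketch is essentially a summary of how the cited paper proceeds, and you correctly defer to that reference for the details.

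One small point of care in your sketch: in the $\IL$-to-$\CPT$ direction you write that ``choosing a representative within a class is legitimate since the class itself is available as a set''. Be explicit that you are \emph{not} choosing an element of the class (which would violate choicelessness) but rather using the equivalence class itself, as a hereditarily finite set, to stand for the quotient element. With that reading your outline is sound.
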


\noindent
The next lemma states that we can simulate $\FOH$-formulas in \DWL.

\begin{lem}\label{lem:foh2dwl}
  Let $\phi(x_1,x_2)$ be an $\FOH[\tau]$-sentence. Then the function
  mapping each $\tau$-structure $A$ to the binary relation
  $\phi(A):=\{(v_1,v_2)\in V(A)^2\mid A\models \phi(x_1,x_2)\}$ is
  computable by a polynomial time \DWL-algorithm.
\end{lem}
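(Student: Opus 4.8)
The plan is to proceed by induction on the structure of the $\FOH[\tau]$-formula $\phi(x_1,x_2)$, maintaining the invariant that for every subformula $\psi(\bar y)$ with free variables $\bar y = (y_1,\dots,y_k)$ we can \DWL-compute the $k$-ary ``satisfaction set'' of $\psi$, suitably encoded as a binary relation. Since \DWL\ only manipulates binary relations, the first point to settle is how to represent a $k$-ary relation $S \subseteq V(A)^k$: I would use iterated pairing via $\addpair$. Concretely, having created pair-vertices for $V(A)^2$, $V(A)^3$ (as pairs of a $V(A)^2$-vertex with a $V(A)$-vertex), and so on up to the maximal arity occurring in $\phi$, each tuple $(v_1,\dots,v_k)$ is represented by a single vertex, and the associated relation symbols $E_\lef, E_\rig$ together with the diagonal markers $D_X$ let us recover its components. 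All of this is done once, up front, using a bounded number of $\addpair$ calls (the arity is a constant depending only on $\phi$), so it costs polynomial time. A $k$-ary satisfaction set then becomes a diagonal relation on the vertices representing $k$-tuples.

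With the encoding fixed, the induction is routine using the basic closure properties already established. For atomic formulas $x_i = x_j$, $R(x_i,x_j)$, and equalities/relations among the free variables, the required sets are obtained by composing the projection relations $E_\lef, E_\rig$ with $R(A)$ and with $\CE_\diag$, using \cref{lem:op1} (composition, converse, intersection, the diagonal) and \cref{lem:op3} (domain/codomain/support) to read off components; booleans $\neg, \wedge, \vee$ use union, intersection, and complement (relative to the full diagonal of the tuple-vertex colour class) from \cref{lem:op1}. For the existential quantifier $\exists y_{k+1}\,\psi(\bar y, y_{k+1})$, I would take the diagonal relation for $\psi$'s satisfaction set on $(k+1)$-tuple vertices, use $E_\lef$ to project each $(k+1)$-tuple vertex down to its underlying $k$-tuple vertex, and then compute the domain of the resulting relation — exactly what \cref{lem:op3} gives — to obtain the $k$-ary satisfaction set of the quantified formula. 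The only genuinely new ingredient is the Härtig quantifier $\textsf H y.\psi_1(\bar y, y).\psi_2(\bar y, y)$: here I would, for each fixed value of the parameters $\bar y$, need to compare $|\{y : \psi_1\}|$ with $|\{y : \psi_2\}|$. This is where I expect the main obstacle to lie.

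The cardinality-comparison lemma \cref{lem:op2}\ref{e:=} only compares the \emph{global} sizes of two relations, i.e.\ it decides $|E_1(A)| \le |E_2(A)|$, not a parametrised family of comparisons. To handle $\textsf H$ I would exploit isomorphism invariance of \DWL\ together with coherence: after creating the $(k+1)$-tuple vertices and marking the two subsets defined by $\psi_1$ and $\psi_2$, two parameter-tuple vertices $a, a'$ lie in the same colour class of the coherent configuration only if, among other things, the number of $(k+1)$-tuple vertices extending $a$ that are marked by $\psi_1$ equals the corresponding count for $a'$ — and likewise for $\psi_2$ — because these counts are exactly intersection numbers with the relevant colours. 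Thus the truth value of the Härtig quantifier at $a$ depends only on the colour of $a$; it remains to \emph{extract} these counts from the algebraic sketch and compare them. Since the intersection numbers $q(R_1,R_2,R_3)$ are part of the algebraic sketch $D(A)$ (encoded in unary) and are available to the Turing-machine part of the \DWL-algorithm on its interaction tape, the machine can, colour class by colour class, read off the two counts, compare them as integers, and then mark precisely those colour classes where the comparison succeeds — producing the desired $k$-ary satisfaction set as a union of colour classes via $\addunion$. The delicate point I would need to verify carefully is that the relevant counts really are recoverable as intersection numbers of the coherent configuration of the structure \emph{after} the tuple-creation steps (one must ensure the projection edges $E_\lef$ and the marking colours are present, so that the triple $(D_a, E_\lef, \text{mark}_{\psi_i})$ has a well-defined intersection number that equals the count of extensions of $a$ satisfying $\psi_i$); granting that, the construction goes through, every step is polynomial, and no new vertices beyond the constantly-many tuple layers are ever added.
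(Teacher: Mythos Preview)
Your proposal is correct and follows a genuinely different route from the paper's proof. Both arguments begin the same way: create, via iterated $\addpair$, vertices representing $k$-tuples for $k$ up to the number of variables in $\phi$, together with the projection relations. From that point the two proofs diverge.

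The paper's argument is a one-shot reduction: it translates $\phi(x_1,x_2)$ into an equivalent formula $\phi'(x_1',x_2')$ over the extended tuple-structure $A'$ that uses only three variables, and then invokes the Cai--F\"urer--Immerman / Otto correspondence between $2$-\WL\ and the infinitary $3$-variable counting logic $\LC^3_{\infty\omega}$. Since $\FOH$ with three variables embeds into $\LC^3_{\infty\omega}$, the relation $\phi'(A')$ is a union of colour classes of $C(A')$, and the particular colours involved can be read off the algebraic sketch; a single $\addunion$ finishes. Your argument, by contrast, is a direct structural induction that never appeals to this external characterisation: atomics, booleans, and $\exists$ are handled by the elementary closure properties of \cref{lem:op1}--\cref{lem:op3}, and the H\"artig quantifier is dispatched by observing that the relevant per-parameter counts are literally out-degrees in a relation already present in $\tau$, hence constant on diagonal colour classes and recoverable from the intersection numbers in $D(A)$ (exactly the computation behind \cref{lem:op2}\ref{e:=}). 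The delicate point you flag---that the marking relations for $\psi_1,\psi_2$ must be added to $\tau$ \emph{before} the coherent configuration is recomputed, so that it refines them---is indeed the crux, and it goes through. What the paper's route buys is brevity and a uniform treatment (everything is pushed into one black-box theorem); what your route buys is self-containment: it avoids the dependence on the $\LC^3_{\infty\omega}$ machinery and makes explicit where each logical connective is absorbed by the \DWL\ primitives.
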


\begin{lem}\label{lem:pilTOdwl}
Let $\CP$ be a $\tau$-property of structures that 
is decidable by a polynomial-time $\IL[\tau]$-program.  Then $\CP$ is
decidable by a polynomial time \DWL-algorithm.
\end{lem}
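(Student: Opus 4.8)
The plan is to simulate the given polynomial-time $\IL[\tau]$-program $\Pi=(\CI_\init,\CI_\step,\phi_\halt,\phi_\out)$ by a \DWL-algorithm that maintains, inside the cloud, both the input structure $A$ and the current $\tau_\work$-structure $A_i$ appearing in the run of $\Pi$. Concretely, I would reserve a diagonal relation symbol $D$ kept equal to $\diag(V(A_i))$ (marking the universe of $A_i$ among the vertices of the cloud) and, for each $E\in\tau_\work$, a relation symbol $G_E$ kept equal to $E(A_i)$; moving relations into such designated symbols is possible with the basic operations of \cref{lem:op1}, and the input's universe is marked by $\diag(V(A))$ via \cref{lem:op1}. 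The simulation first builds $A_1:=\CI_\init(A)$ and then repeatedly applies $\CI_\step$, after each round evaluating $\phi_\halt$; once it holds, it evaluates $\phi_\out$ and accepts or rejects accordingly. Since $\Pi$ runs in polynomial time (in particular, halts), the loop terminates.

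The heart of the construction is the implementation of a single $2$-dimensional $\FOH$-interpretation $\CI=(\phi_V,\phi_=,(\phi_E)_E)$ on the structure currently marked by $D$. Relativising an $\FOH$-formula to a unary predicate and renaming its relation symbols again yields an $\FOH$-formula, so \cref{lem:foh2dwl} lets me compute, as a cloud relation $E_V$, the set $V:=\{(v_1,v_2)\in V(A_i)^2\mid A_i\models\phi_V\}$ (here one relativises the quantifiers of $\phi_V$ to $D$, redirects its atoms to the $G_E$'s, and conjoins $D(x_1)\wedge D(x_2)$). Executing $\addpair(E_V)$ then creates a fresh vertex $p$ for every pair in $V$, together with relation symbols $E_\lef,E_\rig$ --- whose \emph{unique} edges into $p$ point to the two components of the pair $p$ encodes --- and a fresh diagonal marker $D_V$ for the new vertices. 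Using $D_V,E_\lef,E_\rig$ and $D$, one writes for each of $\phi_=$ and the $\phi_E$ an $\FOH$-formula $\psi(p,p')$ over the cloud vocabulary saying ``$p,p'$ encode pairs $(v_1,v_2),(v_1',v_2')$ from $V$ and $A_i$ satisfies the corresponding formula on $(v_1,v_2),(v_1',v_2')$'': the components are recovered as the unique $E_\lef/E_\rig$-predecessors of $p,p'$, and the quantifiers and atoms of $\phi_=,\phi_E$ are relativised to $D$ and redirected to the $G_E$'s. Applying \cref{lem:foh2dwl} to these formulas yields cloud relations $R_=$ and $F_E$ living on the fresh vertices.

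Finally I form $S:=R_=\cup R_=^{-1}\cup D_V$ (using \cref{lem:op1}) and execute $\addcom(S)$. The strongly connected components of $S$ are precisely the classes of the finest equivalence relation $\sim$ on $V$ containing $R_=$, so this contraction replaces the fresh vertices by $V/_\sim=V(\CI(A_i))$ and leaves all other vertices untouched. Moreover, since $\addcom$ rewrites every relation symbol by contracting, each $F_E$ --- which beforehand contained only edges between fresh vertices --- becomes exactly $\{(a,a')\in(V/_\sim)^2\mid\exists p\in a\,\exists p'\in a'\colon(p,p')\in F_E\}=E(\CI(A_i))$. Re-assigning $D$ to the diagonal marker produced by $\addcom$ and setting $G_E:=F_E$ then makes the cloud represent $\CI(A_i)$. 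The $\tau_\work$-sentences $\phi_\halt,\phi_\out$ are evaluated by feeding ``(their $D$-relativisation)$\wedge D(x_1)\wedge x_1=x_2$'' to \cref{lem:foh2dwl} and testing emptiness of the resulting relation with \cref{lem:op1} and \cref{lem:op2}. For the running time, each $\IL$-step triggers one $\addpair$, one $\addcom$ and $O(|\tau_\work|)$ invocations of \cref{lem:foh2dwl} and the basic operations, hence runs in time polynomial in the current cloud size; to keep the cloud polynomial I would clean up after every round, $\forget$ting all relation symbols except the designated ones for $A$ and $A_i$ and contracting all remaining auxiliary vertices into a single vertex (possible because $\addunion$ on the set of all colours produces the complete relation on the universe, whose restriction to the auxiliary vertices makes that set strongly connected). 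Since $\Pi$ runs in polynomial time, $m$ and all $|V(A_i)|$ are polynomially bounded, so the transient pre-quotient set $V\subseteq V(A_i)^2$ and hence the whole cloud stay polynomially bounded, giving an overall polynomial running time.

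The hardest part, I expect, is the faithful realisation of the $2$-dimensional quotient interpretation: pulling the $4$-ary formulas $\phi_=,\phi_E$ back to binary relations on the freshly created pair-vertices via $E_\lef,E_\rig$, forming the equivalence closure correctly, and recognising that a single $\addcom$ on (the reflexive--symmetric closure of) the $\phi_=$-relation simultaneously performs the quotient of the universe and the push-forward of every $\phi_E$-relation to that quotient. The remaining effort is bookkeeping, in particular the clean-up needed to keep the cloud polynomially bounded across the polynomially many iterations.
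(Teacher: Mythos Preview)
Your proposal is correct and follows essentially the same approach as the paper: materialise pair-vertices via $\addpair$, translate the interpretation's $4$-variable formulas to $2$-variable formulas over those pair-vertices and invoke \cref{lem:foh2dwl}, then quotient via $\addcom$. You are more explicit than the paper about the clean-up needed to keep the cloud polynomial across iterations (which the paper's short proof glosses over), and you add only the $\phi_V$-pairs rather than all pairs up front, but these are inessential differences.
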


\begin{proof}
Let $\Pi=(\CI_\init,\CI_\step,\phi_\halt,\phi_\out)$ be an \IL{}-program that decides $\CP$
in polynomial time.
We have to simulate the run $A_1,\ldots,A_m$ with a \DWL-algorithm.

To simulate a single application of $\CI_\init$ or $\CI_\step$ applied
to a structure $A$, we proceed as follows: using an
$\addpair$-operation, we first compute the extension $A'$ of $A$ by
all pairs of elements. Now we can translate the formulas of the
interpretation, which have either $2$ or $4$ free variables, to
formulas with $2$ free variables ranging over $A'$. Then we can apply
\cref{lem:foh2dwl} to compute the image of the interpretation. Note
that to factorise the structure by the equivalence relation $\sim$, we
need to apply a $\addcom$-operation.

By repeatedly applying this construction, we can simulate the whole run
of $\Pi$. To check the halting condition and generate the output, we
note that we can formally regard the \FOH-sentences $\phi_\halt$ and
$\phi_\out$ as formulas with 2 free dummy variables (these formulas
are either satisfied by all pairs of elements or by none). Then we
can apply \cref{lem:foh2dwl} again.
\end{proof}

\begin{theo}\label{theo:dwlcptpil}
  For every property $\CP$ of $\tau$-structures,
  the following statements are equivalent.
  \begin{enumerate}
  \item\label{e:dwl} $\CP$ is decidable by a polynomial-time
    \DWL-algorithm.
  \item\label{e:cpt} $\CP$ is $\CPT$-definable.
\end{enumerate}
\end{theo}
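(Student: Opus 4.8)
The plan is to treat \cref{theo:dwlcptpil} essentially as a corollary that assembles the ingredients already prepared in this section, so the work is bookkeeping rather than a new argument. For the implication \ref{e:dwl} $\Rightarrow$ \ref{e:cpt} I would simply invoke \cref{lem:dwlTOcpt}: a polynomial-time \DWL-algorithm deciding $\CP$ is simulated step by step inside \CPT, carving out the ordered domain of the first $n$ von-Neumann ordinals, using that \CPT\ subsumes \FPC\ (hence, via Immerman--Vardi, all polynomial-time computations over that ordered domain), and using that the coarsest coherent configuration and its algebraic sketch are \FPC-definable so that the cloud and the interaction tape can be maintained.

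For the converse \ref{e:cpt} $\Rightarrow$ \ref{e:dwl} I would route through interpretation logic rather than argue directly against the \CPT\ syntax. By \cref{theo:pilTOcpt} of Grädel, Pakusa, Schalthöfer and Kaiser, every \CPT-definable property of $\tau$-structures is decided by some polynomial-time $\IL[\tau]$-program $\Pi=(\CI_\init,\CI_\step,\phi_\halt,\phi_\out)$. Then \cref{lem:pilTOdwl} yields a polynomial-time \DWL-algorithm deciding the same property: each application of $\CI_\init$ or $\CI_\step$ is simulated by first creating all pairs of elements via an $\addpair$-operation (so that the $2$- and $4$-variable formulas of the interpretation become $2$-variable formulas over the squared structure), then invoking \cref{lem:foh2dwl} to realise the \FOH-formulas as \DWL-computable binary relations, and finally applying an $\addcom$-operation to factor out the equivalence $\sim$; the sentences $\phi_\halt$ and $\phi_\out$ are handled by regarding them as $2$-variable formulas with dummy variables and applying \cref{lem:foh2dwl} once more. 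Chaining the two implications closes the cycle and establishes the equivalence.

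The only point that genuinely needs care — and it is already discharged inside the supporting lemmas — is the matching of the polynomial resource bounds across the three translations: the \CPT\ restrictions on iteration length and on the sizes of the sets occurring during evaluation must line up with the polynomial running time of the \DWL-algorithm, which in turn must line up with the polynomial bound $m+\sum_{i}|V(A_i)|$ on the \IL-run. Since the $\addpair$-step used to simulate an interpretation squares the universe, one has to check that the number of interpretation steps stays polynomial so that the cumulative blow-up — and with it the cloud size and the cost of recomputing the coherent configuration after each operation — remains polynomially bounded. I expect no new obstacle to arise at the level of \cref{theo:dwlcptpil} itself; the genuine difficulties have been isolated in \cref{lem:dwlTOcpt}, \cref{lem:foh2dwl}, and \cref{lem:pilTOdwl}.
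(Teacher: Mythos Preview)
Your proposal is correct and matches the paper's own proof exactly: the theorem is obtained by combining \cref{lem:dwlTOcpt} for the direction \ref{e:dwl}$\Rightarrow$\ref{e:cpt}, and \cref{theo:pilTOcpt} together with \cref{lem:pilTOdwl} for the direction \ref{e:cpt}$\Rightarrow$\ref{e:dwl}. Your additional remarks about resource bounds are accurate but, as you note, already absorbed into the supporting lemmas.
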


\begin{proof}
We combine \cref{lem:dwlTOcpt}, \cref{theo:pilTOcpt}, and \cref{lem:pilTOdwl}.
\end{proof}

\section{Concluding Remarks}
\DWL\ is a framework for combinatorial graph isomorphism tests
generalising the Weisfeiler Leman algorithm of any dimension. \DWL\ is strictly more
powerful than Weisfeiler Leman; in particular, it can distinguish the
so-called CFI-graphs that are hard examples for Weisfeiler Leman. We
show that \DWL\ has the same expressivity as the logic \CPT\
(choiceless polynomial time). In some sense, \DWL\ can be seen as an
isomorphism (or equivalence) test corresponding to \CPT\ in a similar
way that Weisfeiler Leman is an isomorphism test corresponding to the
logic \FPC\ (fixed-point logic with counting).

The definition of \DWL\ is fairly
robust, as our result that all \DWL\ algorithms are equivalent to more
restricted ``pure''
\DWL-algorithms shows.

As our main technical result, we prove that if \DWL\ decides
isomorphism on a class $\CC$ of graphs in polynomial time, then $\CC$
admits a polynomial time canonisation algorithm, and hence there is a logic
capturing polynomial time on $\CC$.

\newpage\appendix 

\noindent
{\LARGE\bfseries Appendix}

\section{Proofs and Details Omitted from Section~\ref{sec:prel}}

\begin{proof}[\textbf{Proof of Lemma~\ref{lem:canData}}]
We define a series~$\preceq_0,\preceq_1,\ldots$ of finer and finer total quasiorders of $\sigma$.

For each $R\in\sigma$, define
$\pi_R:=\{E\in\tau\mid R\subseteq_{\sigma,\tau} E\}$. Recall that we
assume the relation symbols to be binary strings. Thus the sets $\pi_R$ are
finite sets of binary strings and as such can be ordered
lexicographically. We let $R\preceq_0 R'$ if $\pi_R$ is
lexicographically smaller or equal to $\pi_{R'}$.
This gives us a total quasiorder.
We also write $R\equiv_0 R'$ if $R\preceq_0 R'$ and $R'\preceq_0 R$.
The equivalence class of $R\in\sigma$ is denoted by
$[R]_0:=\{R'\in\sigma\mid R'\equiv_0 R\}$.
Note that the set of equivalence classes $\sigma_0=\{[R]_0\mid R\in\sigma\}$ is linearly ordered.

Inductively, for each $R\in\sigma$, we define
\[\pi_{R,i+1}:=\{([R_2]_i,[R_3]_i,\sum_{\mathclap{R_2'\in[R_2]_i,R_3'\in[R_3]_i}}q(R,R_2',R_3'))\mid
R_2,R_3\in\sigma\}.\]
Since the set of equivalence classes and the set of natural numbers are both linearly ordered sets,
one can define a linear order also on the triples $([R_2]_i,[R_3]_i,q)$ contained in the sets $\pi_{R,i+1}$.
We let $R\preceq_{i+1} R'$ if $\pi_{R,i+1}$ is smaller or equal to $\pi_{R',i+1}$, i.e.,
it holds $|\pi_{R,i+1}|<|\pi_{R',i+1}|$ or $\pi_{R,i+1}=\pi_{R',i+1}$, or it holds
$\pi_{R,i+1}\neq\pi_{R',i+1}$ and $|\pi_{R,i+1}|=|\pi_{R',i+1}|$ and the smallest triple in
$\pi_{R,i+1}\setminus\pi_{R',i+1}$ is smaller than the smallest triple in
$\pi_{R',i+1}\setminus\pi_{R,i+1}$.

Let $m\in\NN$ be the smallest number for which $\preceq_m=\preceq_{m+1}$.
Now, ``$\preceq_m$'' defines a total linear order on the equivalence classes $[R]_m$.
We claim that $\sigma_m:=\{[R]_m\mid R\in\sigma\}$
with $[R]_m(C'):=\bigcup_{R'\in[R]_m}R'(C')$ defines a coherent configuration
refining $A$.
Setting 
\[q([R_1]_m,[R_2]_m,[R_3]_m)=\sum_{R_2'\in[R_2]_m,R_3'\in[R_3]_m}q(R_1,R_2',R_3')\]
is well defined (since $\preceq_m=\preceq_{m+1}$) and gives a coherent configuration.
It refines $A$ since $\sigma_0=\{[R]_0\mid R\in\sigma\}$ with $[R]_0(C'):=\bigcup_{R'\in[R]_0}R'(C')$
already refines $A$ by definition.

Next, we use this linear order to define a canonical set of relation symbols $\sigma_A$.
We define $\sigma_A$ as the set consisting of the first $|\sigma|$
binary strings that are not already contained in $\tau$ (w.r.t.~to the lexicographic order on strings).

We have to show that for algebraic sketches $D(A,C')$ and $D(A,C'')$, we obtain the same result.
Observe that the equivalence classes $[R]_i$ in each step might depend on the coherent configuration $C'$. However,
it can easily shown by induction that the set of relations $\{\bigcup_{R'\in[R]_i}R'(C')\mid R\in\sigma\}$
does not depend on the choice of the coherent $\sigma$-configuration $C'$.
\end{proof}

\section{Proofs and Details Omitted from Section~\ref{sec:dwl}}

\begin{proof}[\textbf{Proof of Lemma~\ref{lem:op1}}]
The \DWL-algorithm uses the command $\addunion(\pi)$ where $\pi$ depends
on the relation we want to compute. The resulting relation symbol $E_\pi$ is written to the
work tape. The set $\pi\subseteq\sigma$ is specified using the algebraic sketch as
follows.

\ref{e:cup}. For $\CE_\cup$, let $\pi:=\{R\in\sigma\mid R\subseteq
E_1\text{ or }R\subseteq E_2\}$.
  
\ref{e:cap}. For $\CE_\cap$, let $\pi:=\{R\in\sigma\mid R\subseteq
E_1$ and $R\subseteq E_2\}$.

\ref{e:setminus}. For $\CE_\setminus$, let $\pi:=\{R\in\sigma\mid R\subseteq
E_1$ and $R\not\subseteq E_2\}$.

\ref{e:diag}. For $\CE_\diag$, let $\pi:=\{R\in\sigma\mid\forall R',R''\in\sigma,R'\neq R'':\; q(R',R'',R)=0\}$. We show that $R\in\pi$, if and only if
$R$ is a diagonal colour. Suppose $R\in\pi$ and
$(u,v)\in R(C)$ and let $R'\in\sigma$ be a colour with $(u,u)\in
R'(C)$. By the choice of $R'$, it holds that
$q(R,R',R)\geq 1$. Since $R\in\pi$, it holds that
$R=R'$ is a diagonal colour.
On the other hand, assume that $R\in\sigma$ is a diagonal colour.
Let $R',R''\in\sigma$ be colours with $R\neq R'$. Since $R(C),R'(C)$ are disjoint, it
holds that $q(R',R'',R)=0$. Hence, $R\in\pi$.

\ref{e:conv}. For $\CE_{-1}$, let $\pi:=\{R\in\sigma\mid\exists
R_1\in\sigma,R_1\subseteq E_1\exists R_2\in\sigma,R_2\text{ diagonal}:\;q(R_2,R_1,R)\geq 1\}$
(a \DWL-algorithm can decide whether a colour $R\in\sigma$ is diagonal using \ref{e:diag}).
We show that $R\in\pi$, if
and only if $R(C)\subseteq E_1(A)^{-1}$. Assume that $R\in\pi$.
Equivalently, this means that there is a colour
$R_1\in\sigma$ with $R_1\subseteq E_1$ and a diagonal colour $R_2\in\sigma$
such that $q(R_2,R_1,R)\geq 1$. This means that for all $(u,u)\in
R_2$ there is a $w\in V(C)$ such that $(u,w)\in R_1(C)\subseteq E_1(A)$ and $(w,u)\in R(C)$. Therefore,
$R(C)\cap E_1(A)^{-1}\neq\emptyset$. Since $(A,C)$ is a coherently
coloured structure, this is equivalent to $R(C)\subseteq E_1(A)^{-1}$.
Conversely, if~$R(C)\subseteq E_1(A)^{-1}$ and~$(u,v)\in R(C)$ then choose~$R_2$ so that it contains~$(v,v)$ and~$R_1$ so that it contains~$(v,u)$, then~$R_1\subseteq E_1$
and $R_2$ is diagonal and~$q(R_2,R_1,R)\geq 1$ to see that~$R\in\pi$.

\ref{e:circ}. For $\CE_\circ$, let $\pi:=\{R\in\sigma\mid \exists
R_1,R_2\in\sigma,R_1\subseteq E_1,R_2\subseteq E_2:\;q(R,R_1,R_2)\geq 1\}$.

\ref{e:scc}. For $\CE_\scc$, observe that $E_1(A)^\scc=E_1(A)^m\cap
(E_1(A)^m)^{-1}$ where $E_1(A)^1:=E_1(A)$ and $E_1(A)^{i+1}:=E_1(A)^i\cup
E_1(A)E_1(A)^i$ and $m$ is the smallest number with $E(A)^{m+1}=E(A)^m$.
Using, \ref{e:cup},\ref{e:cap},\ref{e:conv} and \ref{e:circ}, the relation $E_1(A)^\scc$ can be
computed.
\end{proof}

\begin{proof}[\textbf{Proof of Lemma~\ref{lem:op2}}]
To evaluate the queries, we use the algebraic sketch as follows.

\ref{e:subseteq}. Observe that $E_1(A)\subseteq E_2(A)$, if and only if
for all $R\in\sigma$ it holds that $R\subseteq E_1$ implies $R\subseteq E_2$.
Moreover, $E_1(A)= E_2(A)$, if and only if $E_1(A)\subseteq E_2(A)$ and
$E_2(A)\subseteq E_1(A)$.

\ref{e:=}. We show that the machine $M$ can actually
compute the cardinality by using the following formula.
\begin{align*}
|E_1(A)|&=\sum\limits_{\mathclap{\substack{R_1\in\sigma\\
R_1\subseteq E_1}}}|R_1(C)|\\
\text{where }|R_1(C)|&=\sum\limits_{\mathclap{\substack{R_\diag\in\sigma\\
R_\diag\text{ diagonal}}}}q(R_\diag,R_1,R_1^{-1})\cdot
|R_\diag(C)|\text{ and}\\
\text{where }|R_\diag(C)|&=\sum\limits_{\mathclap{\substack{R_2\in\sigma\\
R_2(C)= R_\diag(C) R_2(C) R_\diag(C)}}}q(R_\diag,R_2,R_2^{-1}).
\end{align*}
We show the correctness. The first equation for $|E_1(A)|$ is correct since~$E_1(A)$ is partitioned by the colour classes contained within it.
We consider the equation for $|R_1(C)|$.
Observe that
for a fixed vertex $u\in V(C)$ with $(u,u)\in R_\diag(C)$ the number
$q(R_\diag,R_1,R_1^{-1})$ counts the number of outgoing edges
$(u,v)\in R_1(C)$.
Therefore, $q(R_\diag,R_1,R_1^{-1})\cdot |R_\diag(A)|$ is the number of edges
$(u,v)\in R_1(C)$ where $u\in V(C)$ is some vertex with $(u,u)\in R_\diag(C)$.
Summing up over $R_\diag\in\sigma$ removes the dependency of 
$R_\diag\in\sigma$.
Last but not least, consider the cardinality $|R_\diag(A)|$.
For a fixed vertex $u\in V(C)$ with $(u,u)\in R_\diag(C)$, the sum counts the
number of vertices $(u,v)$ with $(v,v)\in R_\diag(C)$. This is simply the number
of vertices $v\in V(C)$ with $(v,v)\in R_\diag(C)$, which in turn is equal to
$|R_\diag(C)|$.
\end{proof}

\begin{proof}[\textbf{Proof of Lemma~\ref{lem:op3}}]
The \DWL-algorithm for each function uses the $\addunion(\pi)$ for a set $\pi\subseteq\sigma$ defined as follows.

\ref{e:dom}. We define $\pi:=\{R\in\sigma\mid R\text{ diagonal such that }|R(C)\circ E(A)|\geq 1\}$. We prove the correctness.
Assume that for $R\in\sigma$ it holds $R(C)\subseteq\diag(\dom(E(A)))$. Then, $R\in\pi$.
On the other hand, if $R\in\pi$ then $R(C)\cap\diag(\dom(E(A)))\neq\emptyset$.
Since $(A,C)$ is a coherently coloured structure, it follows that $R(C)\subseteq\diag(\dom(E(A)))$.

\ref{e:codom}. Observe that $\codom(E(A))=\dom(E(A)^{-1})$. Using \ref{e:dom} and \cref{lem:op1} Part~\ref{e:conv}, we conclude that $\codom(E(A))$
is \DWL-computable.

\ref{e:supp} Observe that $\supp(E(A))=\dom(E(A))\cup\codom(E(A))$ and that unions are \DWL-computable by \cref{lem:op1}.\ref{e:cup},
\end{proof}

\section{Proofs and Details Omitted from Section~\ref{sec:PureDWL}}

Let $(A,C)$ be a coherently $\sigma$-coloured $\tau$-structure.
A set $U\subseteq V(A)$ is called \emph{colour-aligned}
if for each diagonal colour $R\in\sigma$ the support $\supp(R(C))$ is either a subset or is disjoint from $U$.
A colour-aligned set $U\subseteq V(A)$ is called \emph{homogeneous} if
there is a diagonal colour $R\in\sigma$ such that $U=\supp(R(C))$.
Observe that a \DWL-algorithm can decide whether $\dom(X(A))$, $\codom(X(A))$ and $\supp(X(A))$ are homogeneous
for a given symbol $X\in\sigma\cup\tau$.
A colour-aligned set $U\subseteq V(A)$ is called \emph{discrete} if
for all $u\neq v\in U$ there are diagonal colours $R_u\neq R_v\in\sigma$ such that $(u,u)\in R_u(C),(v,v)\in R_v(C)$.

A coherent configuration $C$ is called \emph{homogeneous} if
the universe $V(C)$ is homogeneous
and is called \emph{discrete} if the universe $V(C)$ is discrete.

Recall the definition of \emph{subrestrictions} from the
preliminaries. The next lemma tells us that we can compute the algebraic sketch for a subrestriction.

\begin{lem}\label{lem:sub}
There is a polynomial-time algorithm that for a given algebraic sketch $D(A)$ and a given
subset $\tilde\tau\subseteq\tau$ and diagonal relation symbol $E_U\in\tau$, computes the algebraic sketch
$D(A[\tilde\tau,U])$ where $U:=\supp(E_U(A))$.
\end{lem}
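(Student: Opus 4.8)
The plan is to reuse the coherent configuration $C(A)$ that is already implicitly described (up to isomorphism) by $D(A)$: restrict $C(A)$ to the fibres contained in $U$, observe that this restriction is again a coherent configuration refining the subrestriction $A[\tilde\tau,U]$, read off its algebraic sketch directly from $D(A)$, and then feed the result into the canonicalisation algorithm of \cref{lem:canData} to obtain $D(A[\tilde\tau,U])=D(A[\tilde\tau,U],C(A[\tilde\tau,U]))$.

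First I would extract from $D(A)=(\tau,\sigma,\subseteq_{\sigma,\tau},q)$ the purely symbolic data needed for the restriction. The diagonal colours are recognised exactly as in the proof of \cref{lem:op1} Part~\ref{e:diag} ($R$ is diagonal iff $q(R',R'',R)=0$ for all $R'\neq R''$). Since $E_U(A)$ is a diagonal relation, $E_U(A)=\diag(U)$, and since $C(A)$ refines $A$, every diagonal colour class is either contained in $\diag(U)$ or disjoint from it; hence $U=\supp(E_U(A))$ is colour-aligned (a union of fibres), and the diagonal colours with support inside $U$ are precisely the diagonal $R$ with $R\subseteq_{\sigma,\tau}E_U$. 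From the intersection function one then determines, for each colour $R$, its converse $R^{-1}$ (the unique $S$ with $q(R_\diag,R,S)\geq 1$ for some diagonal $R_\diag$) and its row- and column-fibres (the unique diagonal colours $R_\diag$ with $q(R_\diag,R,R^{-1})\geq 1$, respectively $q(R_\diag,R^{-1},R)\geq 1$). Invoking the standard structural fact that every colour class of a coherent configuration satisfies $R(C)\subseteq X\times Y$ for two fibres $X,Y$ (see \cite{DBLP:books/daglib/0037866}), the set $\sigma_U:=\{R\in\sigma\mid R(C(A))\subseteq U^2\}$ equals the set of colours whose row- and column-fibres both lie inside $U$, and also equals $\{R\in\sigma\mid R(C(A))\cap U^2\neq\emptyset\}$; in particular $\sigma_U$ is computable from $D(A)$ in polynomial time.

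It then remains to check that $C':=C(A)[\sigma_U,U]$ is a coherent configuration refining $A[\tilde\tau,U]$ and that its sketch is the obvious restriction of $D(A)$: the colours of $\sigma_U$ partition $U^2$; the conditions of being contained in or disjoint from the diagonal and of closure under converses are inherited; and for $R_1,R_2,R_3\in\sigma_U$ and $(u,v)\in R_1(C(A))\subseteq U^2$, any $w$ with $(u,w)\in R_2(C(A))$ already lies in $U$ because $R_2(C(A))\subseteq U^2$, so the intersection number is unchanged and equals $q(R_1,R_2,R_3)$. Refinement of $A[\tilde\tau,U]$ follows from refinement of $A$ by intersecting each relation with $U^2$. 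Consequently $D(A[\tilde\tau,U],C')$ is obtained from $D(A)$ just by restricting $\tau$ to $\tilde\tau$, $\sigma$ to $\sigma_U$, $\subseteq_{\sigma,\tau}$ to $\sigma_U\times\tilde\tau$, and $q$ to $\sigma_U^3$. Since $C'$ need not be the coarsest coherent configuration refining $A[\tilde\tau,U]$ — relations and vertices have been discarded — I would finish by applying \cref{lem:canData} to $D(A[\tilde\tau,U],C')$ to get $D(A[\tilde\tau,U])$. The main obstacle is not the coherence of the restriction, which is routine, but the requirement that the whole construction be carried out symbolically on $D(A)$ without ever touching the vertices; this is exactly why the fibre structure of coherent configurations, together with the expressibility of fibres, converses, and the membership $R(C(A))\subseteq U^2$ in terms of the intersection numbers and the symbolic subset relation, is essential.
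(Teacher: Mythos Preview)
Your proposal is correct and follows essentially the same approach as the paper: restrict $C(A)$ to the colours living inside $U$, observe that this yields a coherent configuration refining $A[\tilde\tau,U]$ whose sketch is obtained by restricting $\sigma$, $\subseteq_{\sigma,\tau}$, and $q$, and then invoke \cref{lem:canData}. The paper's proof is terser—it simply defines $\tilde\sigma:=\{R\in\sigma\mid\supp(R(C))\subseteq U\}$ and asserts the restriction works—whereas you spell out how $\sigma_U$ is actually computed from $D(A)$ via the fibre structure, which is a detail the paper leaves implicit.
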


\begin{proof}
Let $\tilde A:=A[\tilde\tau,U]$.
It suffices to show that we can compute $D(\tilde A,\tilde C)$
for some coherently coloured structure $(\tilde A,\tilde C)$.
Then, by \cref{lem:canData}, we can also compute $D(\tilde A)=D(\tilde A,C(\tilde A))$.
We define the algebraic sketch of $(\tilde A,\tilde C)$
as $(\tilde\tau,\tilde\sigma,\subseteq_{\tilde\sigma,\tilde\tau},\tilde q)$
where $\tilde\sigma:=\{R\in\sigma\mid \supp(R(C))\subseteq U\}$
and where $\subseteq_{\tilde\sigma,\tilde\tau}$ and $\tilde q$
are the restrictions of $\subseteq_{\sigma,\tau}$ and $q$ to $\tilde\tau\times\tilde\sigma$
and $\tilde\sigma^3$, respectively.
\end{proof}

\begin{proof}[\textbf{Proof of Theorem~\ref{theo:pure}}]
Let $\hat M$ be polynomial-time \DWL-algorithm that computes $\CE$.
We will give a pure \DWL-algorithm $M$ that computes $\CE$.
Let $D(\hat A)=(\hat\tau,\hat\sigma,\subseteq_{\hat\sigma,\hat\tau},\hat q)$ be the algebraic sketch maintained by $\hat M$.
Analogously, we let $D(A)=(\tau,\sigma,\subseteq_{\sigma,\tau},q)$ be
the algebraic sketch maintained by the pure \DWL-algorithm
$M$.
Furthermore suppose $(\hat A_1,C(\hat A_1)),\ldots,(\hat A_m,C(\hat A_m))$ is the sequence of coherently $\hat\sigma_i$-coloured
$\hat\tau_i$-structures in the cloud of the machine $\hat M$ until the machine eventually halts.

We say that our \DWL-algorithm $M$ \emph{successfully simulates the $t$-th step} for $t\leq m$ if
\begin{enumerate}
  \item\label{pure:iso} the machine $M$ maintains a subset $\omega\subseteq\tau$ and a diagonal relation $E_U\in\tau$
  with $U:=\supp(E_U(A))$
  such that $A[\omega,U]$ is isomorphic to $\hat A_t$ up to renaming relation
  symbols, that is, 
there is a bijective function $\phi:V(A[\omega,U])\to V(\hat A_t)$ and
a one-to-one correspondence between
the relation symbols $E\in\omega$ and relation symbols $\hat E\in\hat\tau_t$, i.e., $E(A)^{\phi}=\hat E(\hat A_t)$
(the machine also maintains this one-to-one correspondence between the vocabularies),
\item\label{pure:bound} $|V(A)|\leq \sum_{i\leq t}|V(\hat A_i)|$.
\end{enumerate}
When the algorithm $M$ starts, it holds $A=\hat A_1$.
Then, we create a diagonal relation symbol $E_U\in\tau$ with $E_U(A)=\diag(V(A))$.
Now, the machine $M$ already successfully simulates the first step with $\omega:=\tau\setminus\{E_U\}$.

So assume that $M$ successfully simulates the $t$-th step.
We need to explain how to simulate step $t+1$.
By \cref{lem:sub}, we can compute $D(A[\omega,U])=D(\hat A_t)$ and therefore we can track the internal run
of $\hat M$ until the machine makes an execution of $\addpair$ or $\addcom$.
Assume that the parameter of $\addpair$ or $\addcom$ is a colour $\hat R\in\hat\sigma_t$.
By deleting all relation symbols in $\tau\setminus(\omega\cup E_U)$, we can assume
that there is a colour $R\in\sigma$ with $R(C(A))^\phi=\hat R(C(\hat A_t))$.
Then, the machine $M$ can execute $\addpair(R)$ or $\addcom(R)$ to simulate the next step.
So assume that $\addpair(\hat E)$ or $\addcom(\hat E)$ is executed by $\hat M$
where $\hat E\in\hat\tau_t$.
In this case, our \DWL-algorithm $M$ finds a relation symbol $E\in\tau$ that corresponds to $\hat E\in\hat\tau_t$
using the one-to-one correspondence from \ref{pure:iso}.
Depending on whether the execution adds pairs or contracts components we do the following.

\medskip
\textit{Case 1: $\hat M$ executes $\addpair(\hat E)$.}~\\
Let $E\in\tau$ be the relation symbol corresponding to $\hat E\in\hat\tau_t$.
If $E(A)=\emptyset$, then we are done and simulated also the $(t+1)$-th step.
Otherwise, let $R_1\in\sigma$ be a colour such that $R_1(C)\subseteq
E(A)$. A pure \DWL-algorithm can
determine such a colour from the algebraic sketch $D(A)$.
The pure \DWL-algorithm executes $\addpair(R_1)$.
Then, it creates a relation symbol $E_2\in\tau$ such that $E_2(A):=E(A)\setminus R_1(C)$.
The algorithm executes $\addpair(E_2)$ recursively.
And continues until $E_m(A)=\emptyset$ for some $m\in\NN$.
Then it creates a relation symbol $D_E$ such that $D_E(A)=D_{R_1}(A)\cup\ldots\cup D_{R_{m-1}}(A)$.
We define $\omega:=\omega\cup\{E_\lef,E_\rig,D_E\}$ and create a relation symbol $E_U'\in\tau$ with $E_U'(A)=E_U(A)\cup D_E(A)$
(which we use in place of $E_U$).

We claim that $M$ simulates the $(t+1)$-th step.
We extend the bijection $\phi:V(A[\omega,U])\to V(\hat A_{t+1})$
by matching the freshly added pairs.
The relation symbols $E_\lef,E_\rig\in\tau$ correspond to the relation symbols $E_\lef,E_\rig\in\hat\tau_{t+1}$.
The number of freshly added vertices is bounded by $|V(\hat A_{t+1})|$
and therefore we maintain $|V(A)|\leq |V(\hat A_{t+1})|+\sum_{i\leq t}|V(\hat
A_i)|=\sum_{i\leq t+1}|V(\hat A_i)|$.

\medskip
\textit{Case 2: $\hat M$ executes $\addcom(\hat E)$.}~\\
Let $E\in\tau$ be the relation symbol corresponding to $\hat E\in\hat\tau_t$.
Using \cref{lem:op1} Part~\ref{e:scc}, we create a relation symbol
$E_\scc\in\tau$ with $E_\scc(A)=E(A)^\scc$ (note that the \DWL-algorithm from \cref{lem:op1} that creates this symbol is pure).

If all strongly connected components $S\in\sccs(E_\scc(A))$ are
discrete (a \DWL-algorithm can recognise this case from the algebraic
sketch), we pick diagonal relation symbols $R_1,\ldots,R_s\in\sigma$
such that $\supp(R_1(C))\cup\ldots\cup\supp(R_s(C))$
intersects each $S\in\sccs(E_\scc(A))$ in exactly one vertex.
We create a diagonal relation symbol $E_\CS\in\tau$ such that $E_\CS(A)=R_1(C)\cup\ldots\cup R_s(C)$.
Then
$\supp(E_\CS(A))\cap S$ is a singleton for each $S\in\sccs(E(A))$.
We want to use this singleton as representation of $S$.
To make this work, we create a relation symbol $E_U'\in\tau$
with $E_U'(A)=(E_U(A)\setminus\supp(E_\scc(A)))\cup E_\CS(A)$.
For each $E\in\omega$ that corresponds to some $\hat E\in\hat\tau_t$,
we define a relation symbol $E'\in\tau$ with
\[E'(A)=E_U'(A)ZE(A)ZE_U'(A)\]
 where $Z:=E_\scc(A)\cup\diag(V(A))$.
We redefine $\omega:=\{E'\in\tau\mid E\in\omega\}$.
We claim that our \DWL-algorithm $M$ simulates the $(t+1)$-th step.
The bijection $\phi:V(A[\omega,U])\to V(\hat A_{t+1})$
identifies the unique vertex in $S\cap\supp(E_\CS(A))$ with the vertex $\phi(S)\in V(\hat A_{t+1})$.
Moreover, our \DWL-algorithm $M$ does not add fresh vertices and therefore
\[|V(A)|\leq \sum_{i\leq t}|V(\hat A_i)|\leq\sum_{i\leq t+1}|V(\hat A_i)|.\]

It remains the case, in which there is a strongly connected component $S\in\sccs(E_\scc(A))$
which is not discrete.
In this case, our pure \DWL-algorithm chooses an arbitrary non-diagonal colour $R\in\sigma$ (the lexicographically least, say) with
$R(C)\subseteq E_\scc(A)$ such that $\supp(R(C))$ is homogeneous.
Such a colour exists and can be found using the algebraic sketch
(in fact, $R\in\sigma$ is the colour of $(u,v)$ for some $u,v\in S$
that have the same colour).
Then, it contracts the elements in $\sccs(R(C))$ by applying the $\addcom(R)$-instruction.
If all strongly connected components of $E_\scc(A)$ are discrete after
this contraction, the algorithm continues with the first case.
Otherwise, it iteratively contracts $E_\scc$.

To show that the algorithm terminates after polynomially many iterations,
we need to show that $\sccs(R(C))$ contains a non-singleton and
therefore the number of elements in $E_\scc(A)$ reduces in each iteration.
We claim that for a non-diagonal colour $R(C)$ that has a homogeneous support $\supp(R(C))$ it holds that
$R(C)^\scc$ is also non-diagonal.
Since $\supp(R(C))$ is homogeneous, each vertex in $R(C)$ (viewed as a
directed graph) has the same in-degree and out-degree. This implies
that every edge of $R(C)$ is contained in a cycle. Since $R$ is
non-diagonal, there is a pair $(u,v)\in R(C)$ with $u\neq v$, and
since $(u,v)$ is contained in a cycle, it is also contained in a
strongly connected component. Thus $u\neq v$ belong to the same strongly
connected component of $R(C)$ which proves the claim.

\medskip
\textit{Case 3: $\hat M$ executes $\addunion(\hat \pi)$.}\\
We define a subset $\pi\subseteq\sigma$ corresponding to
$\hat\pi\subseteq\hat\sigma_t$ as follows.
By deleting all relation symbols in $\tau\setminus(\omega\cup E_U)$, we can assume
that for each colour $\hat R\in\hat\sigma_t$ there is a colour $R\in\sigma$ with
$R(C(A))^\phi=\hat R(C(\hat A_t))$.
Let $\pi:=\{R\in\sigma\mid\hat R\in\hat\pi\}$.
We execute $\addunion(\pi)$ for $\pi\subseteq\sigma$ and
update $\omega:=\omega\cup\{E_\pi\}$.

\medskip
\textit{Case 4: $\hat M$ executes $\forget(\hat E)$.}\\
We simply update $\omega:=\omega\setminus\{E\}$ for the relation symbol
$E\in\tau$ that corresponds to $\hat E\in\hat\tau_t$.

We analyse the total running time of $M$.
Since $\hat M$ is a polynomial-time \DWL-algorithm the following values
are polynomially bounded. The length $m$ of the sequence $\hat A_1,\ldots,\hat
A_m$, the size of each universe $V(\hat A_i)$ and the running time of the underlying Turing machine.
Therefore, the size of the universe $|V(A)|\leq \sum_{i\leq m}|V(\hat A_i)|$
is polynomially bounded.
The pure \DWL-algorithm  $M$ simulates $m$ steps which can each be done in polynomial time.
\end{proof}

\begin{rem}
It can be shown that \cref{theo:pure} also holds for time bounds of the form $T^{\CO(1)}$ where
$T$ is time-constructible and grows at least linear.
For example, quasipolynomial time instead of polynomial time.
\end{rem}

\section{Proofs and Details Omitted from Section~\ref{sec:normDWL}}
\label{app:normDWL}

\begin{proof}[\textbf{Proof of Lemma~\ref{lem:ccNorm}}]
\ref{e:product}.
We say two crossing colours $R_1,R_1'\in\sigma_\across$
are equivalent, written $R_1\sim R_1'$, if $\dom(R_1(A))=\dom(R_1'(A))$ and $\codom(R_1(A))=\codom(R_1'(A))$.
We denote by $[R_1]_\sim$, the equivalence class of $R_1$.
We will define a structure $[C(A)]_\sim$
by taking the union of equivalent colours.
More precisely, the structure has the relations $\sigma_1\cup\sigma_2\cup[\sigma_\across]_\sim$
where $[\sigma_\across]_\sim:=\{[R_1]_\sim\mid R_1\in\sigma_\across\}$
and $[R_1]_\sim([C(A)]_\sim):=\bigcup_{R_1'\in[R_1]_\sim}R_1'(C(A))$.
We claim $[C(A)]_\sim$ is a coherent configuration.
Let $(u,v)$ be a crossing edge with colour $[R_1]_\sim$
and let $(u,w)$ be a crossing edge with colour $[R_2]_\sim$
and let $(w,v)$ be an edge with colour $R_3$. Note that~$R_3$ is plain.
Let $Q:=\{x\mid (u,x)$ has colour $[R_2]_\sim$
and $(x,v)$ has colour $R_3\}$.
We show the number
$q=q([R_1]_\sim,[R_2]_\sim,R_3):=|Q|$ only depends on
$[R_1]_\sim,[R_2]_\sim,R_3$ (and not on $(u,v)$).
Let $Q':=\{x\mid (x,v)$ has colour $R_3\}$.
First we argue that $Q=Q'$.
To see this assume that $x\in Q'$.
Then, the colour $R^x$ of $(x,x)$ is the same as the colour $R^w$ of $(w,w)$.
Assume that $(u,u)$ has colour $R^u\in\sigma_1\cup\sigma_2$.
We know that $(u,w)$ has colour $R_2$.
Let $R_2'$ be the colour of $(u,x)$.
We have that $\dom(R_2(A))=\dom(R_2'(A))$ since the domains
$\dom(R_2(A)),\dom(R_2'(A))$ are homogeneous sets which
intersect non-trivially in $u$.
Moreover, we also have that $\codom(R_2(A))=\supp(R^w(A))=\supp(R^x(A))=\codom(R_2'(A))$.
Therefore, $R_2\sim R_2'$, which implies $x\in Q$.
Since~$Q=Q'$ and $C(A)$ is a coherent configuration it follows that 
$Q$ only depends on $R_3$.
Similarly, the intersection numbers $q([R_1]_\sim,R_2,[R_3]_\sim)$ where~$R_2$ is plain and~$R_1$ and~$R_3$ are crossing
only depend on $[R_1]_\sim,R_2$ and~$[R_3]_\sim$.
This proves the claim that $[C(A)]_\sim$ is a coherent
configuration. 
On the other hand, the coherent configuration $[C(A)]_\sim$ still refines $A$ since $A$ has no crossing edges.
Therefore, $[C(A)]_\sim$ refines $C(A)$.
By definition, $C(A)$ also refines $[C(A)]_\sim$ and therefore $[C(A)]_\sim\equiv C(A)$
which means that each equivalence class is a singleton.

\ref{e:ind}. We explain how $D(A)=D(A,C(A))$ can be computed for given $D(A_1,C_1),D(A_2,C_2)$.
The set of colours is $\sigma=(\sigma_1\cup\sigma_2)\cupdot\sigma_\across$.
For the symbolic subset relation, we have $R\subseteq_{\sigma,\tau} E$ if and only if $R\subseteq_{\sigma_i,\tau_i} E$ for some $i\in \{1,2\}$.
And for the intersection number
for $R_1,R_2,R_3\in\sigma_i,i\in\{1,2\}$ it holds that
$q(R_1,R_2,R_3)=q_i(R_1,R_2,R_3)$.
In case that $R_j\in\sigma_\across$ for some $j\in\{1,2,3\}$, we can use Part~\ref{e:product}
to determine the intersection number $q(R_1,R_2,R_3)$.
For example, for $R_1\in\sigma_1$ and~$R_2,R_3\in\sigma_\across$ we have that  $q(R_1,R_2,R_3)=|\codom(R_2(C(A)))|=|\dom(R_3(C(A)))|$.

\ref{e:eqfine}.
Assume for contradiction that there is a coherent configuration $C_i'$ refining $A$ that is coarser than $C_i$
for some $i\in \{1,2\}$.
Without loss of generality we assume that $i=1$.
Indeed, the previous construction from \ref{e:ind} shows that given the coherently coloured structures $(A_1,C_1'),(A_2,C_2)$,
we can define a coherently coloured structure $(A,C')$ where $C'$ is coarser than $C(A)$.
This is a contradiction since $C(A)$ is a coarsest coherent configuration.
\end{proof}

\begin{proof}[\textbf{Proof of Lemma~\ref{lem:comData}}]
We show \ref{com:alg}.
For a given algebraic sketch $D(A)$,
the algebraic sketch $D(A_R)$ can be computed as follows.
By \cref{lem:canData}, it suffices to compute an algebraic sketch $D(A_R,C)$ for some coherently coloured structure $(A_R,C)$
(where $C$ refines $A_R$,
but is not necessarily as coarse as possible).

Let $C_R$ be the structure that is obtained from $C(A)$ by contracting
the strongly connected components of $R(C(A))$.
The relations $R_1(C_R)$, for $R_1\in\sigma$,
are defined in the usual way. For example,
for  $S_1,S_2\in\sccs(R(C(A)))$ we let
$(S_1,S_2)\in R_1(C_R)$ if and only if there are $v_1\in S_1,v_2\in
S_2$ such that $(v_1,v_2)\in R(C(A))$.
We define an equivalence relation ``$\sim$'' on $\sigma$
and say that $R_1\sim R_1'$ if $R_1(C_R)=R_1'(C_R)$.
The key observation is that the equivalence relation can be computed in polynomial time,
when the algebraic sketch $D(A)=D(A,C(A))$ is given (we do not need access to the actual structure $C(A)$). 
More precisely, we have $R_1\sim R_1'$ if and only if $E_VR_1(C(A))E_V=E_VR_1'(C(A))E_V$
where $E_V=\diag(V(A)\setminus \dom(R(C(A))^\scc))\cup R(C(A))^\scc$.
Equivalently, $R_1\sim R_1'$ if and only if $E_VR_1(C(A))E_V$ and
$E_VR_1'(C(A))E_V$ intersect non-triv\-ially.
By $[R_1]_\sim$, we denote the equivalence class of the colour
$R_1\in\sigma$.
Let $([C]_\sim)_R$ be the coherent configuration with
universe $V(C_R)$ and vocabulary
$\sigma_R:=\{[R_1]_\sim\mid R_1\in\sigma\}$
and relations $[R_1]_\sim(([C]_\sim)_R):=R_1(C_R)$.
We define an algebraic sketch
$D(A_R,([C]_\sim)_R)=(\tau_R,\sigma_R,\subseteq_{\sigma_R,\tau_R},q_R)$ as
follows.
Let $\tau_R:=\tau$.
We say that
\[[R_1]_{\sim}\subseteq_{\sigma_R,\tau_R} E\text{ if }R_1'\subseteq_{\sigma,\tau} E\text{ for some }R_1'\in[R_1].\]
To show the correctness, we show that $[R_1]_{\sim}\subseteq_{\sigma_R,\tau_R} E$,
if and only if $R_1(C_R)\subseteq E(A_R)$.
Assume that $R_1'\subseteq_{\sigma,\tau} E$ for some $R_1'\in[R_1]_\sim$.
Then $R_1(C_R)=R_1'(C_R)\subseteq E(A_R)$. 
On the other hand, assume that $R_1(C_R)\subseteq E(A_R)$.
Therefore, $E_VR_1(C(A))E_V$ and $E(A)$ intersect non-trivially.
Then, there is a colour $R_1'\in\sigma$ such that $R_1'(C(A))\subseteq E_VR_1(C(A))E_V\cap E(A)$.
This implies $E_VR_1'(C(A))E_V$ and $E_VR_1(C(A))E_V$ intersect non-trivially and thus $R_1'\sim R_1$.
Therefore, $R_1'\subseteq_{\sigma,\tau} E$ and thus $[R_1]_{\sim}\subseteq_{\sigma_R,\tau_R} E$.
We define 
\[q_R([R_1]_\sim,
[R_2]_\sim,[R_3]_\sim):=\frac{1}{c_{R_2,R_3}}\cdot\sum\limits_{\substack{R_2'\in[R_2]_\sim,\\R_3'\in[R_3]\sim}} q(R_1,R_2',R_3')\]
where we set $c_{R_2,R_3}=1$ whenever $\codom(R_2(C(A)))$ and $\supp(\sccs(R(C(A))))$ are disjoint
and where $c_{R_2,R_3}=|S|$ for some $S\in\sccs(R(C(A)))$ whenever \[\codom(R_2(C(A)))\subseteq\supp(\sccs(R(C(A)))).\]
This is well-defined since
$C(A)$ is a coherent configuration and therefore the sizes of the strongly connected components of $R(C(A))$
coincide.
We show the correctness.
Assume that $c_{R_2,R_3}=|S|$
and let $(u,v)\in R_1(C(A))$.
The number of vertices $w\in V(A_R)$ with $(u,w)\in [R_2]_\sim(C_R)$ and $(w,v)\in [R_3]_\sim(C_R)$
is equal to the number of strongly connected components $S\in\sccs(R(C(A)))$ for which there
are $w\in S,R_2'\in[R_2]_\sim,R_3'\in [R_3]_\sim$ such that
$(u,w)\in R_2'(C(A))$ and $(w,v)\in R_3'(C(A))$.
Instead of counting strongly connected components $S\in\sccs(R(C(A)))$,
we count vertices $v\in S\in\sccs(R(C(A)))$ and divide the result by $|S|$.
\end{proof}

\begin{proof}[\textbf{Proof of Lemma~\ref{lem:pairData}}]
We show Part \ref{pair:alg}.
For a given algebraic sketch $D(A)$,
the algebraic sketch $D(A^\omega)=(\tau^\omega,\sigma^\omega,\subseteq_{\sigma^\omega,\tau^\omega},q^\omega)$ is constructed as follows.
\[\tau^\omega:=(\tau\cup\{E_\lef,E_\rig\})\cupdot\{D_R\mid R\in\omega\}.\]
First, we describe the coherent configuration $C^\omega$ of $A^\omega$
by describing its relations $R(C^\omega)$ for $R\in\sigma^\omega$.
Later in the proof, we will express the intersection numbers of $C^\omega$
in terms of the old intersection numbers.
In the following let $v=(v_1,v_2)\in V(A^\omega)$ be a pair-vertex for a crossing edge.
We define $p_i(v):=v_i$ for both $i\in \{1,2\}$.
For a plain vertex $v\in V(A)$, we define $p_i(v)=v$ for both $i\in \{1,2\}$.
For a pair $e=(u,v)\in V(A^\omega)^2$, let $R_{e}:=\{(R_{u},R_{11},R_{12},R_{21},R_{22},R_{v})\in\sigma^6\mid (p_i(u),p_j(v))\in R_{ij}(C(A)),i,j\in\{1,2\}, (p_1(u),p_2(u))\in R_u(C(A)),(p_1(v),p_2(v))\in R_v(C(A))\}$.
The set $R_e$ can be seen as the isomorphism type of the $\sigma$-coloured graph induced on the set $V_4$ where $V_4:=\{p_1(u),p_2(u),p_1(v),p_2(v)\}$.
Define
\[\sigma^\omega:=\{R_e\mid e\in V(A^\omega)^2\}.\]
The key observation is that $V_4$ intersects both sets $\CV_i(A)$ in at most two vertices.
By \cref{lem:ccNorm}.\ref{e:product}, the colour of crossing edges $\CE_\across(A)$ only depends
on the diagonal colours of the adjacent vertices.
For this reason, the set of colours $\sigma^\omega$ only depends
on the colouring of pairs which only depends on $D(A)$ and $\omega$
(and not on the actual structures $A,C(A)$).
In fact, we can compute $\sigma^\omega$
in polynomial time when $D(A)$ and $\omega\subseteq\sigma$ are given.

The set $R_{(u,v)}$ also encodes whether $u$ is plain
since $u$ is plain if and only if $(p_1(u),p_2(u))$ has a plain colour.
Similarly, $R_{(u,v)}$ also encodes whether $v$ is plain.
Therefore, the following symbolic subset relation is well-defined
\[R_{(u,v)}\subseteq_{\sigma^\omega,\tau^\omega} E\text{ if $u,v$ are plain and }R\subseteq_{\sigma,\tau} E.\]
The set $R_{(u,v)}$ also encodes whether $u=p_i(v)$ for some given $i\in \{1,2\}$
since this is the case if and only if $u=p_1(u)$ is plain and $(p_1(u),p_i(v))$ has a diagonal colour.
This ensures that the symbolic subset relation for the relation
symbols $E_\lef,E_\rig,D_R$ is
well-defined. 
\begin{align*}
R_{(u,v)}&\subseteq_{\sigma^\omega,\tau^\omega} E_\lef\text{ if $u$ is plain and }v=(u,v_2)\\
R_{(u,v)}&\subseteq_{\sigma^\omega,\tau^\omega} E_\rig\text{ if $u$ is plain and }v=(v_1,u)
\end{align*}
Similarly, we have $R_{(u,v)}\subseteq_{\sigma^\omega,\tau^\omega} D_R$
if and only if $u=v$ is not plain and $(p_1(u),p_2(u))$ has colour $R\in\sigma$.

We need to show that $C^\omega$ indeed defines a coherent configuration.
We do this by expressing the intersection numbers recursively.
We will observe that they can be expressed using the intersection numbers from $D(A)$.
For given $R_{e_1},R_{e_2},R_{e_3}$, we assume that $R_{e_1}=R_{(u,v)},R_{e_2}=R_{(u,w)},R_{e_3}=R_{(w,v)}$
for some vertex $w$, otherwise $q^\omega=0$.

\smallskip
\textit{Base Case: $u,v,w$ are plain.}
\[q^\omega(R_{e_1},R_{e_2},R_{e_3}):=q(R_1,R_2,R_3)\]
where $R_i$ is defined as colour of $e_i\in R_i(C(A))\text{ for }i\in\{1,2,3\}$.

\smallskip
\textit{Case 1: $w$ is not plain (and thus $w=(w_1,w_2)$ a pair-vertex).}
\[q^\omega(R_{e_1},R_{e_2},R_{e_3})=\prod_{i=1}^2q^\omega(R_{e_1},R_{(u,w_i)},R_{(w_i,v)}).\]
This is well defined since $R_{(u,(w_1,w_2))}=R_{(u',(w_1',w_2'))}$ implies
that $R_{(u,w_i)}=R_{(u',w_i')}$ and $R_{(w_i,v)}=R_{(w_i',v)}$ for both $i\in \{1,2\}$.
We prove the correctness. Let $Q:=\{x\mid
R_{(u,x)}=R_{(u,w)}$ and $R_{(x,v)}=R_{(w,v)}\}$
and let $Q_i:=\{x_i\mid R_{(u,x_i)}=R_{(u,w_i)}$ and $R_{(x_i,v)}=R_{(w_i,v)}\}$ for both $i\in \{1,2\}$.
By definition, $q^\omega(R_{e_1},R_{e_2},R_{e_3})=|Q|$ and $|Q_i|=q^\omega(R_{e_1},R_{(u,w_i)},R_{(w_i,v)})$ for $i\in \{1,2\}$.
We claim the function $\phi(x):=(p_1(x),p_2(x))$ is a bijection from $Q$ to $Q_1\times Q_2$.
The function is obviously injective.
We need to show surjectivity. Let $x_1\in Q_1,x_2\in Q_2$.
By definition, $R_{(u,x_i)}=R_{(u,w_i)}$ and $R_{(x_i,v)}=R_{(w_i,v)}$ for both $i\in \{1,2\}$.
In particular, $(p_1(u),x_i)$ has the same colour as $(p_1(u),w_i)$.
Since $(w_1,w_2)$ is crossing, exactly one of the edges $(p_1(u),w_i)$ is crossing
and therefore also $(x_1,x_2)$ is crossing.
Moreover, $R_{(u,x_i)}=R_{(u,w_i)}$ implies that $(x_i,x_i)$ has the same colour as $(w_i,w_i)$.
By \cref{lem:ccNorm}.\ref{e:product}, we conclude that $(x_1,x_2)$ has same colour as $(w_1,w_2)$.
Since, we added a pair-vertex for the pair $(w_1,w_2)$ and since $(x_1,x_2)$ has the same colour,
there exists a pair-vertex $x=(x_1,x_2)$.
It follows that $x\in Q$, by the definition of $R_{(u,x)}$ and $R_{(x,v)}$.

\smallskip
\textit{Case 2: $w$ is plain, but at least one of $u,v$ is not plain.}\\
Without loss of generality we assume that $u$ is not plain, otherwise we compute the intersection number
$q^\omega$ for the conversed triple $(R_{(v,u)},R_{(v,w)},R_{(w,u)})$. So assume $u=(u_1,u_2)$.
\[q^\omega(R_{e_1},R_{e_2},R_{e_3})=q^\omega(R_{(u_k,v)},R_{(u_k,w)},R_{(w,v)})\]
where $k\in\{1,2\}$ is chosen such that $(u_k,w)$ has a plain colour.
We have to show that $q^\omega$ is well defined.
The value $k\in\{1,2\}$ does not depend on $u,w=p_1(w)$ since $R_{(u,w)}=R_{(u',w')}$
implies that $(p_i(u),p_1(w))$ and $(p_i(u'),p_1(w'))$ have the same colour for both $i\in \{1,2\}$.
Moreover, $R_{(u_i,v)}$ does not depend on the choice of $u,v$
since $R_{(u,v)}=R_{(u',v')}$ implies that $(p_i(u),v)$ and $(p_i(u'),v')$ have the same colour for both $i\in \{1,2\}$.
As above, let $Q=\{x\mid R_{(u,x)}=R_{(u,w)}$ and $R_{(x,v)}=R_{(w,v)}\}$.
We define $Q'=\{x\mid R_{(u_k,x)}=R_{(u_k,w)}$ and $R_{(x,v)}=R_{(w,v)}\}$ for $k\in\{1,2\}$ such that
$(u_k,w)$ is plain.
We claim that $Q=Q'$.
Let $x\in Q$. Then, $R_{(u,x)}=R_{(u,w)}$ and $R_{(x,v)}=R_{(w,v)}$.
Since $R_{(u,x)}=R_{(u,w)}$, it follows that
also $R_{(u_i,x)}=R_{(u_i,w)}$ for both $i\in \{1,2\}$.
In particular, $R_{(u_k,x)}=R_{(u_k,w)}$
and therefore $x\in Q'$.
On the other hand, let $x\in Q'$.
Then, $R_{(u_k,x)}=R_{(u_k,w)}$ meaning that $(u_k,x)$ and $(u_k,w)$ have the same colour.
By \cref{lem:ccNorm}.\ref{e:product}, the colour of a crossing edge only
depends on the colours of the adjacent vertices and thus $R_{(u,x)}=R_{(u,w)}$.
Therefore, $x\in Q$.
\end{proof}

\begin{proof}[\textbf{Proof of Lemma~\ref{lem:almostNorm}}]
Let $\hat M$ be a polynomial-time \DWL-algorithm that decides isomorphism on $\CC$.
We will give an almost normalised \DWL-algorithm $M$ that decides isomorphism on $\CC$ in polynomial time.
By $D(\hat A)=(\hat\tau,\hat\sigma,\subseteq_{\hat\sigma,\hat\tau},\hat q)$, we denote the algebraic sketch maintained by $\hat M$.
Analogously, we write $D(A)=(\tau,\sigma,\subseteq_{\sigma,\tau},q)$
to denote the algebraic sketch maintained by the almost normalised \DWL-algorithm
$M$.
We write $(\hat A_1,C(\hat A_1)),\ldots,(\hat A_m,C(\hat A_m))$ to denote the sequence of coherently $\hat\sigma_i$-coloured
$\hat\tau_i$-structures in the cloud of the machine $\hat M$ until the machine eventually halts.

We say that our \DWL-algorithm $M$ \emph{successfully simulates step~$t$} for $t\leq m$ if
\begin{enumerate}
\item\label{c:iso} the machine $M$ maintains a subset $\omega\subseteq\tau$ and
  a diagonal relation symbol $E_U\in\tau$ with $U:=\supp(E_U(A))$ such
  that $A[\omega,U]$ is isomorphic to $\hat A_t$ up to renaming relation
  symbols, that is, 
there is a bijective function $\phi:V(A[\omega,U])\to V(\hat A_t)$ and
a one-to-one correspondence between
relation symbol $E\in\omega$ and relation symbols $\hat E\in\hat\tau_t$, i.e., $E(A)^{\phi}=\hat E(\hat A_t)$
(the machine also maintains this one-to-one correspondence between the vocabularies),
\item\label{c:bound} $|V(A)|\leq 160\sum_{i\leq t}|V(\hat A_i)|^4$,
\item\label{c:com} there is a relation symbol $E_\pa\in\tau$ such that for each $v\in U$
the set $P_\pa(v):=\{u\mid (u,v)\in E_\pa(A)\}$ is a non-empty subset of $\CV_\across(A)\setminus U$.
We require that for all $v\neq v'$ the sets $P_\pa(v),P_\pa(v')$ are disjoint.
We let $P_\pa:=\bigcup_{v\in U}P_\pa(v)\subseteq\CV_\across(A)\setminus U$, and
\item\label{c:lr} there are relation symbols $E_\lt,E_\rt\in\tau$ such that
for each $v\in P_\pa$
there are two unique vertices $p_\lt(v),p_\rt(v)$ in~$\CV_\plain(A)\setminus U$ with
$(p_\lt(v),v)\in E_\lt(A),(p_\rt(v),v)\in E_\rt(A)$ (we can think of $E_\lt,E_\rt$ as $E_\lef,E_\rig$).
We require $(p_\lt(v),p_\rt(v))\in\CE_\across(A)$ and that
for all $v\neq v'$ that $(p_\lt(v),p_\rt(v))\neq(p_\lt(v'),p_\rt(v'))$,
\end{enumerate}

We explain how to simulate the first step.
When the algorithm $M$ starts, it holds that $A=\hat A_1$ and therefore the bound in Property \ref{c:bound} holds.
By creating a relation symbol $E_U\in\tau$ with $E_U=\diag(V(A))$,
we can assume that Property \ref{c:iso} holds.
We need to explain how to ensure Property \ref{c:com} and \ref{c:lr}.
We add unique vertices $v_1^*,v_2^*$ to the structures $A_1,A_2$
which are connected to each vertex in $v_1\in \CV_1(A),v_2\in\CV_2(A)$, respectively.
(This can be done by executing $\addpair(E_U)$ and contracting the freshly added pair-vertices
belonging to the same structure.)
In a next step, we define the crossing relation \[Z:=\{(u,v_i^*)\mid u\in V(A_{3-i}),v_i^*\in \CV_i(A)\}.\]
The crossing relation $Z$ is \DWL-computable, so we can create a relation symbol
$E_Z\in\tau$ with $E_Z(A)=Z$.
Since each edge in $E_Z(A)$ is crossing, our \DWL-algorithm is
almost normalised.
Then, we execute $\addpair(E_Z)$.
Next, we can create relation symbols $E_\pa,E_\lt,E_\rt$
with the desired properties as follows.
We copy each vertex in $U$ by executing $\addpair(E_U)$. Let~$\overline{u}= (u,u)$ denote the copy of~$u$.
We let $E_\pa(A):=\{(u,(u,v_i^*))\mid u\in V(A_{3-i})\}$.
Now, $P_\pa(u)=\{(u,v_i^*)\}\subseteq V_\across(A)$, which means that we assign each vertex $u$
a unique singleton set containing one crossing pair-vertex.
Furthermore, we define $E_\lt(A):=\{(\overline{u},(u,v_i^*))\mid u\in V(A_{3-i})\}$ and
$E_\rt(A):=\{((v_i^*),(u,v_i^*))\mid u\in V(A_{3-i})\}$.
Now, $p_\lt((u,v_i^*))=\overline{u}\in \CV_\plain(A)\setminus U$ and
$p_\rt((u,v_i^*))=v_i^*\in\{v_1^*,v_2^*\}\subseteq\CV_\plain(A)\setminus U$. This means
that we assign each pair-vertex in $(u,v_i^*)\in P_\pa$ a pair $((u,u),v_i^*)$ (using the copy $\overline{u}$ instead of
$u$ ensures that $p_\rt((u,v_i^*))$ is not contained in $U$).
Finally, $M$ simulates the first step successfully. 

So assume that $M$ simulates the $t$-th step.
We need to explain how to simulate step $t+1$.
By \cref{lem:sub}, we can compute $D(A[\omega,U])=D(\hat A_t)$ and therefore we can track the internal run
of $\hat M$ until the machine performs an execution of $\addpair(\hat B)$ or $\addcom(\hat B)$.
In this case, our \DWL-algorithm finds a relation symbol $E\in\tau$ that corresponds to $\hat B\in\hat\tau_t\cup\hat\sigma_t$, i.e.,
$\hat B=\hat E\in\hat\tau_t$ and $E(A)^\phi=\hat E(\hat A_t)$ or
$\hat B=\hat R\in\hat\sigma_t$ and $E(A)^\phi=\hat R(C(\hat A_t))$.
In the latter case, the \DWL-algorithm possibly needs to create such a symbol since the isomorphism from $A[\omega,U]$ to $\hat A_t$
does not ensure that such a symbol already exists (this can be done by Lemma~\ref{lem:op1}).
Depending on whether the execution adds pairs or contracts components we do the following.

\medskip
\textit{Case 1: $\hat M$ executes $\addcom(\hat B)$.}\\
Let $E\in\tau$ be the relation symbol corresponding to $\hat B\in\hat\tau_t\cup\hat\sigma_t$.
We contract the strongly connected components $\sccs(E(A))$
by executing $\addcom(E)$.
Since contractions are not restricted, the algorithm $M$ remains almost normalised.

We claim that our \DWL-algorithm $M$ successfully simulates step~$t+1$.
We extend the bijection $\phi:V(A[\omega,U])\to V(\hat A_{t+1})$
by identifying a vertex $S\in V(A[\omega,U])$ with the vertex $\phi(S)\in V(\hat A_{t+1})$.
Property \ref{c:com} holds since for each component $S$ we have $P_\pa(S)=\bigcup_{v\in S}P_\pa(s)$
and the sets $P_\pa(S_1),P_\pa(S_2)$ are disjoint for distinct/disjoint components $S_1,S_2$.
Property \ref{c:lr} trivially holds since the vertices and relations in $V(A)\setminus U$
are unaffected by a contraction of a component
$S\in\sccs(E(A))$ with $S\subseteq U$.

\medskip
\textit{Case 2: $\hat M$ executes $\addpair(\hat B)$.}\\
Let $E\in\tau$ be the relation symbol corresponding to $\hat B\in\hat\tau_t\cup\hat\sigma_t$.
Let $X:=\dom(E(A))$ and $Y:=\codom(E(A))$.
We assume that $X,Y$ are homogeneous sets (in $C(A)$),
otherwise we can add the pairs step by step as in the proof of \cref{theo:pure}.
By \cref{lem:op3}, the sets $X,Y$ are \DWL-computable.
We define $X_\pa:=\bigcup_{x\in X}P_\pa(x)\subseteq P_\pa\subseteq\CV_\across(A)\setminus U$
and define $X_\lt:=\{p_\lt(x)\mid x\in X_\pa\}\subseteq\CV_\plain(A)\setminus U$ and analogously $X_\rt$.
Analogous, we define $Y_\pa,Y_\lt,Y_\rt$.

We can assume that $X_\pa$ is homogeneous (in $C(A)$).
If this would not be the case, we would pick a homogeneous subset $H\subseteq X_\pa$.
Since $X$ is homogeneous, the set $H$ intersects $P_\pa(x)$ non-trivially for each $x\in X$.
Then, we redefine $E_\pa$ by creating a new relation $E_\pa'$ in place of $E_\pa$ with $E_\pa'(A)=\{(u,v)\mid u\in H,(u,v)\in E_\pa(A)\}$.
Now, it holds $P_\pa'(x)=P_\pa(x)\cap H$ for each $x\in X$.
For the same reason, we may assume $Y_\pa$ to be homogeneous.

As a consequence, also $X_\lt,X_\rt,Y_\lt,Y_\rt$ are homogeneous (in $C(A)$).
Assume for contradiction that $v,v'\in X_\lt$ such that $(v,v),(v',v')$ have different colours.
Then, let $x,x'\in X_\pa$ such that $p_\lt(x)=v$ and $p_\lt(x')=v'$.
Since $C(A)$ is a coherent configuration, also $x,x'$ have different colours.

For $x\in X$ consider the (directed) bipartite graph $L(x):=\{(p_\lt(v),p_\rt(v))\mid v\in P_\pa(x)\}$
consisting of crossing pairs between $\CV_1(A)$ and $\CV_2(A)$.
We can assume that the maximum out-degree is 1, i.e.,
for all $(a_\lt,a_\rt)\neq(b_\lt,b_\rt)\in L(x)$ it holds
$a_\lt=b_\lt\implies a_\rt\neq b_\rt$.
If this would not be the case, then we do the following.
We define an equivalence relation ``$\equiv$'' on $X_\lt$.
For $c\in X_\rt$, define $a_\lt\equiv_c b_\lt$ if $(a_\lt,c),(b_\rt,c)\in L(x)$
for some $x\in X$.
We omit the index and say $a_\lt\equiv b_\lt$ if $a_\lt\equiv_c b_\lt$ for some $c\in X_\rt$.
We claim that $\equiv_c$ does not depend on the choice of $c\in X_\rt$
and therefore $\equiv_c$ equals $\equiv$.
By \cref{lem:ccNorm}.\ref{e:eqfine}, \cref{lem:comData}.\ref{com:fine} and \cref{lem:pairData}.\ref{pair:eqfine},
imply that $C(A)[\CV_i(A)]$ can not be refined by refining $C(A)[\CV_{3-i}(A)]$.
Assume $a_\lt\equiv b_\lt$, we consider the refined structure obtained by individualising
the edge $(a_\lt,b_\lt)$, then the vertices in $W:=\{c\in X_\rt\mid a_\lt\equiv_c b_\lt\}$
have a different colour than vertices in $W\setminus X_\rt$. Since $W\neq\emptyset$
and $X_\rt$ is homogeneous, it follows that $W=X_\rt$.
This means $a_\lt\equiv_c b_\lt$ for all $c\in X_\rt$, which proves the claim.
Next, we will add vertices to the universe corresponding to the equivalence classes $[p_\lt(v)]_\equiv$.
We let $E_{X_\lt}$ be a relation symbol with $E_{X_\lt}(A)=\diag(X_\lt)$
and execute $\addpair(E_{X_\lt})$.
We want to use the equivalence classes $[p_\lt(v)]_\equiv$ instead of $p_\lt(v)$ in order to reduce
the maximum out-degree.
Let $E_{\equiv_\lt}$ be a relation symbol with $E_{\equiv_\lt}(A)=\{((u,u),(v,v))\mid u\equiv_\lt v\}$
and execute $\addcom(E_{\equiv_\lt})$.
Next, we redefine $E_\lt$ by creating an new relation symbol $E_\lt'$ where we relate $([p_\lt(v)]_\equiv,v)\in E_\lt'(A)$
instead of $(p_\lt(v),v)$.
Now, the graph $L'(x)=\{([p_\lt(v)]_\equiv,p_\rt(v))\mid v\in P_\pa(x)\}$
has the desired properties.
With the same argument, we can assume that the in-degree of $L(x)$ is 1
and therefore $L(x)$ is a (directed) matching or a disjoint union of cycles and paths.
Indeed, we can assume that $L(x)$ is a (directed) matching, otherwise we do the following.
We copy each vertex in $X_\lt$ by executing $\addpair(E_{X_\lt})$
for a relation symbol with $E_{X_\lt}=\diag(X_\lt)$.
Then, we use the copies instead of $X_\lt$. More precisely,
we redefine $E_\lt$ by creating a relation symbol $E_\lt'\in\tau$
with $E_\lt'(A)=\{((u,u),v)\mid (u,v)\in E_\lt(A)\}$.
This way, we ensure that $X_\lt,X_\rt$ are disjoint and thus $L(x)$
is a (directed) matching.

We define a relation~$Z:=$
\[\{(x_i,y_i)\in (X_\lt\cup X_\rt)\times (Y_\lt\cup Y_\rt)\mid x_i,y_i\in\CV_i(A),i\in\{1,2\}\}.\]
The relation $Z$ is \DWL-computable, so we create a relation symbol
$E_Z\in\tau$ with $E_Z(A)=Z$.

Next, we define~$\CZ:=$ \[\{((x_i,y_i),(x_j,y_j))\in Z^2\mid x_i\in\CV_i(A),x_j\in\CV_j(A),i\neq j\}.\]
The relation $\CZ\subseteq V(A)^2$ is \DWL-computable, so we create
a relation symbol $E_\CZ\in\tau$ with $E_\CZ(A)=\CZ$.
Moreover, $E_\CZ(A)$ consists of crossing edges and therefore our \DWL-algorithm $M$
remains almost normalised.
We execute $\addpair(E_\CZ)$.

For $v,v'\in X_\pa$,
we say that $v\sim_Xv'$ if there is a $x\in X$ such that $v,v'\in P_\pa(x)$.
This defines an equivalence relation since $P_\pa(x),P_\pa(x')$ are pairwise disjoint by Property \ref{c:com}.
Analogous, we define an equivalence relation for $Y_\pa$.
Observe that there is a one-to-one correspondence between the equivalence classes of $X_\pa$
and the set $X$.
We claim that
for $x_i\in X_\lt,x_j\in X_\rt$, there is a unique vertex $v^*=:\pai_X\{x_i,x_j\}$ such that $\{p_\lt(v^*),p_\rt(v^*)\}=\{x_i,x_j\}$.
First, we explain why at least one vertex exists and then we explain why it is unique.
By definition of $X_\lt,X_\rt$ there are some vertices $x_i\in X_\lt$ and $x_j\in X_\lt$
for which $v^*$ exists.
So assume for contradiction that there is a pair $(x_i',x_j')$ for which such a vertex not exists.
Then $(x_i',x_j')$ has a different colour than $(x_i,x_j)$, contradicting
\cref{lem:ccNorm}.\ref{e:product}.
To show uniqueness, we use Property \ref{c:lr}
saying that $(p_\lt(v),p_\rt(v))\neq(p_\lt(v'),p_\rt(v'))$ for $v\neq v'$.
Moreover, $X_\lt,X_\rt$ are disjoint and therefore $v^*$ is unique, which proves the claim.

We define an equivalence relation ``$\sim$'' on the set $\CZ$
and say that $((x_i,y_i),(x_j,y_j))\sim((x_i',y_i'),(x_j',y_j'))$
$\pai_X\{x_i,x_j\}\sim_X\pai_X\{x_i',x_j'\}$ and $\pai_Y\{y_i,y_j\}\sim_Y\pai_Y\{y_i',y_j'\}$.
Also this equivalence relation is \DWL-computable, so can create a relation symbol $E_\sim$
such that $E_\sim(A)=\sim$.
We want to add a fresh vertex, for each equivalence class in $[\CZ]_\sim$.
We create a copy $\CZ'$ of $\CZ$ by executing $\addpair(E_\CZ)$ again.
There is a one-to-one correspondence between elements $((x_i,y_i),(x_j,y_j))\in\CZ$
and elements $((x_i,y_i),(x_j,y_j))'\in\CZ'$.
We contract the equivalence classes in the copy $\CZ'$ by executing $\addcom(E_{\sim'})$
where $E_{\sim'}(A)=\{(z_1',z_2')\in\CZ'\times\CZ'\mid z_1\sim z_2\}$.
Clearly, the \DWL-algorithm $M$ remains almost normalised.

We define a function $\psi:E(A)\to V(A)$
by mapping an pair $e=(\pai_X\{x_i,x_j\},\pai_Y\{y_i,y_j\})\in E(A)$ to a vertex $v=\psi(e)\in D_{\CZ'}(A)$ that represents
the equivalence class $[((x_i,y_i),(x_j,y_j))']_{\sim'}$.
By the definition of the equivalence class, this function is injective.
We extend $E_U$ and create a relation symbol $E_U'\in\tau$ such that
$E_U'(A)=E_U(A)\cup\diag(\image(\psi))$.

Now, we claim that our \DWL-algorithm $M$ successfully simulates step~$t+1$.
We extend the bijection $\phi:V(A[\omega,U])\to V(\hat A_{t+1})$
by mapping an equivalence class $v\in V(A)$ to the freshly added vertex $\phi(\psi^{-1}(v))\in V(\hat A_{t+1})\setminus V(\hat A_t)$.
We have to explain, how to define $E_\pa,E_\lt,E_\rt$ for the fresh vertices $v=[z']_{\sim'}$.
We update $E_\pa$ by setting $E_\pa(A)\leftarrow E_\pa(A)\cup\{(z,[z']_{\sim'})\in\CZ\times\CZ'\}$.
To update the relation symbols $E_\lt,E_\rt$, we use the relations $E_\lef$ and $E_\rig$
and update $E_\lt(A)\leftarrow E_\lt(A)\cup\{((x_i,y_i),z)\mid z=((x_i,y_i),(x_j,y_j))\}$
and $E_\rt(A)\leftarrow E_\rt(A)\cup\{(z,(x_j,y_j))\mid z=((x_i,y_i),(x_j,y_j))\}$.

For \ref{c:bound}, we have to show that $|V(A)|\leq 160\sum_{i\leq t+1}|V(\hat A_i)|^4$.
The total number of vertices added by $M$ is linearly bounded in $|\CZ|$.
More precisely, our algorithm $M$ added at most $10|\CZ|$ vertices.
Clearly, $|V(\hat A_{t+1})|\geq|E(A)|\geq \frac{1}{2}(|X|+|Y|)$ since $E(A)$ is a biregular graph between $X$ and $Y$.
On the other hand, $|\CZ|\leq |Z|^2=(|X_\lt|+|X_\rt|)^2(|Y_\lt|+|Y_\rt|)^2$.
Since the bipartite graph $L(x)$ between $X_\lt$ and $X_\rt$ is a (directed) matching, we have $|L(x)|=|X_\lt|=|X_\rt|$.
By \cref{lem:ccNorm}.\ref{e:product}, the crossing edges between $X_\lt$ and $X_\rt$ have the same colour
and therefore $|X_\lt||X_\rt|=\sum_{x\in X}|L(x)|$.
We conclude $|X_\lt|,|X_\rt|\leq |X|$ and similar $|Y_\lt|,|Y_\rt|\leq |Y|$.
In total $|\CZ|\leq (2|X|)^2(2|Y|)^2\leq (|X|+|Y|)^4\leq 16|E(A)|^4\leq 16|\hat A_{t+1}|^4$.
This leads to $|V(A)|\leq 10|\CZ|+ 160\sum_{i\leq t}|V(\hat A_i)|^4\leq
160\sum_{i\leq t+1}|V(\hat A_i)|^4$.

\medskip
\textit{Case 3: $\hat M$ executes $\addunion(\hat \pi)$.}\\
We simply execute $\addunion(\pi)$ for some $\pi\subseteq\sigma$ that corresponds to $\hat\pi\subseteq\hat\sigma$.
Then, we update $\omega:=\omega\cup\{E_\pi\}$.

\medskip
\textit{Case 4: $\hat M$ executes $\forget(\hat E)$.}\\
We simply update $\omega:=\omega\setminus\{E\}$ for some $E\in\tau$ that corresponds to $\hat E\in\hat\tau_t$.
The fact, that we do not execute the $\forget$-command ensures that $\CV_i(A)$ remains connected for
both $i\in \{1,2\}$ (this is required for almost normalised \DWL-algorithms).
\end{proof}

\begin{proof}[\textbf{Proof of Lemma~\ref{lem:norm}}]
Let $\hat M$ be polynomial-time \DWL-algorithm that decides isomorphism on $\CC$ in polynomial time.
By \cref{lem:almostNorm}, we can assume that $\hat M$ is almost normalised.
In fact, the construction from \cref{theo:pure}
preserves that the algorithm is almost normalised,
so we can assume that $\hat M$ is almost normalised and pure
at the same time. We will give a normalised \DWL-algorithm $M$ that decides isomorphism on $\CC$ in polynomial time.
By $D(\hat A)=(\hat\tau,\hat\sigma,\subseteq_{\hat\sigma,\hat\tau},\hat q)$, we denote the algebraic sketch maintained by $\hat M$.
Analogously, we write $D(A)=(\tau,\sigma,\subseteq_{\sigma,\tau},q)$
to denote the algebraic sketch maintained by the normalised \DWL-algorithm
$M$.
We write $(\hat A_1,C(\hat A_1)),\ldots,(\hat A_m,C(\hat A_m))$ to denote the sequence of coherently $\hat\sigma_i$-coloured
$\hat\tau_i$-structures in the cloud of the machine $\hat M$ until the machine eventually halts.

We say that our \DWL-algorithm $M$ \emph{successfully simulates the $t$-th step} for $t\leq m$ if
\begin{enumerate}
  \item\label{norm:iso} the machine $M$ ensures
  that the structure $A$ is isomorphic to $\hat A_t[\CV_\plain(\hat A_t)]$ up to renaming relation
  symbols, that is, 
there is a bijective function $\phi:V(A)\to V(\CV_\plain(\hat A_t)])$ and
a one-to-one correspondence between
relation symbol $E\in\tau$ and relation symbols $\hat E\in\hat\tau_t$, i.e.,
$E(A)^{\phi}=\hat E(\hat A_t[\CV_\plain(\hat A_t)])$,
\item\label{norm:bound} $|V(A)|\leq|V(\hat A_t)|$,
\item\label{norm:sketch} the algebraic sketch $D(\hat A_t)$ is already computed,
\end{enumerate}

When the algorithm $M$ starts, it holds $A=\hat A_1$ and $D(A)=D(\hat A_t)$ therefore $M$ successfully simulates the first step.
So assume that $M$ simulates the $t$-th step.

We need to explain how to simulate step the $t+1$.
Since we already computed $D(\hat A_t)$, we can track the internal run
of $\hat M$ until the machine performs an execution of $\addpair(\hat R)$, $\addcom(\hat R)$, $\addunion(\hat\pi)$ or $\forget(\hat E)$.

\medskip
\textit{Case 1: $\hat M$ executes $\addpair(\hat R)$}\\
We have to consider two cases.
If $\hat R(C(\hat A_t))\subseteq\CE_\across(\hat A_t)$, we
do not need to adapt $\phi$.
We can compute $D(\hat A_{t+1})$ for given $D(\hat A_t)$ using \cref{lem:pairData} Part~\ref{pair:alg}.

If $\hat R(C(\hat A_t))\subseteq\CE_\plain(\hat A_t)$, then
there is a colour $R\in\tau$ corresponding to $\hat
R\in\hat\tau_t$.
This follows from the fact that
adding pair-vertices for crossing edges
and contractions of components in all intermediate steps that were only simulated but truly executed actually do not refine the coherent configuration. This is formally proved in \cref{lem:pairData} Part~\ref{pair:eqfine}
and \cref{lem:comData} Part~\ref{com:fine}.

We execute $\addpair(R)$.
Clearly, we maintain Property \ref{norm:iso} and \ref{norm:bound}.
To see that we can compute $D(\hat A_{t+1})$ we again invoke \cref{lem:pairData} Part~\ref{pair:alg}.

\medskip
\textit{Case 2: $\hat M$ executes $\addcom(\hat R)$.}\\
We can compute $D(\hat A_{t+1})$ for given $D(\hat A_t)$ using
\cref{lem:comData}.\ref{com:alg}.
To maintain the bijection $\phi$, we need to contract strongly connected components.
To do so, we restrict the strongly connected components $S\in\sccs(\hat R(C(\hat
A_t)))$ to $\CV_\plain(\hat A_t)$ and define $\CS':=\{S\cap \CV_\plain(\hat
A_t)\mid S\in\sccs(\hat R(C(\hat A_t)))\}$.
We can define a relation symbol $E\in\tau$ such that each $S\in\sccs(E(A))$
is mapped to some $\phi(S)\in\CS'$.
We execute $\addcom(E)$.
Clearly, we maintain Property \ref{norm:iso} and \ref{norm:bound}.

\medskip
\textit{Case 3: $\hat M$ executes $\addunion(\hat\pi)$.}\\
In this case, we create a relation symbol $E_\pi\in\tau$
where $E_\pi(A)^\phi=E_{\hat\pi}(\hat A_{t+1})\cap\CV_\plain(\hat A_{t+1})^2$.
To ensure Property \ref{norm:sketch}, we compute $D(\hat A_{t+1})$ for given $D(\hat A_t)$ in polynomial time.

\medskip
\textit{Case 4: $\hat M$ executes $\forget(\hat E)$.}\\
Let $E\in\tau$ be the relation symbol
corresponding to $\hat E\in\hat\tau_t$.
We execute $\forget(E)$.
To ensure Property \ref{norm:sketch}, we compute $D(\hat A_{t+1})$ for given $D(\hat A_t)$ in polynomial time.
\end{proof}

\section{Proofs and Details Omitted from Section~\ref{sec:cptdwl}}

\begin{proof}[\textbf{Proof of Lemma~\ref{lem:foh2dwl}}]
  Suppose first that the formula $\phi(x_1,x_2)$ only has 3 (free or
  bound variables). Then it is equivalent to a formula of the
  infinitary 3-variable counting logic $\LC^3_{\infty\omega}$. It
  follows from the fact that the 2-dimensional Weisfeiler Leman
  algorithm decides
  $\LC^3_{\infty\omega}$-equivalence (due to \cite{caifurimm92}) that for every $\tau$-structure
  $A$ the relation $\phi(A)$ is a union of colours of the coherent
  configuration $C(A)$. Moreover, given the algebraic sketch $D(A)$
  and $\phi$, it can be decided in polynomial time which colour
  classes these are. That is, given $D(A)$ and $\phi$ we can compute
  in polynomial time a collection of colours $R_1,\ldots,R_m\in\sigma$
  such that $\phi(A)=\bigcup_{i=1}^mR_i(C(A))$ (see
  \cite{ott97}). Once we have these colours, we can use the
  $\addunion$-operation to compute $\phi(A)$ in $\DWL$.

  Now suppose that $\phi$ contains $k>3$ variables. The trick is to
  first compute the structure $A'$ obtained from $A$ by adding all
  $k$-tuples of elements and the projections of the $k$-tuples to
  their entries. In \DWL, we can compute $A'$ by repeated
  $\addpair$-operations. We can translate $\phi(x_1,x_2)$ to a
  formula $\phi'(x_1',x_2')$ that only uses 3-variables in such a way
  that $\phi(A)=\phi'(A')$. We do this by representing
  tuples of variables in $\phi$, ranging over elements of $V(A)$, by
  single variables ranging over tuples in $A'$, and using the two
  additional variables to decode the tuples. Then we can apply the
  argument above to $A'$ and $\phi'$.
\end{proof}

\end{document}